\documentclass{amsart}
\usepackage[a4paper, total={6in, 8in}]{geometry}
\usepackage{graphicx}
\usepackage{subfig}
\usepackage{amssymb,amscd}
\usepackage{color}
\usepackage{enumitem}
\usepackage{algorithm2e}
\usepackage{algorithmic}
\usepackage{hyperref}
\RestyleAlgo{ruled}
\SetKwComment{Comment}{/* }{ */}
\LinesNumbered
\usepackage{bm}

 \newtheorem{thm}{Theorem}[section]
 
 \newtheorem{cor}[thm]{Corollary}
 \newtheorem{lemma}[thm]{Lemma}
 \newtheorem{prop}[thm]{Proposition}
 \theoremstyle{definition}
 \newtheorem{defn}[thm]{Definition}
 \newtheorem{exmp}[thm]{Example}

 \theoremstyle{remark}
 \newtheorem{rem}[thm]{Remark}
 \newtheorem{claim}[thm]{Claim}

 \numberwithin{equation}{subsection}


\newcommand{\cS}{\mathcal{S}}

\newcommand{\cC}{\text{$\mathcal{C}$}} 

\newcommand{\cP}{\text{$\mathcal{P}$}}

\newcommand{\Lvl}{\operatorname{Lvl}}
\newcommand{\diam}{\operatorname{diam}}
\newcommand{\vol}{\operatorname{vol}}
\newcommand{\str}{\operatorname{Stretch}}

        \newcommand{\field}[1]{\text{$\mathbb{#1}$}}
        \newcommand{\N}{\field{N}}
        \newcommand{\Z}{\field{Z}}
        
        \newcommand{\R}{\field{R}}


\newdimen\theight
\def\TeXref#1{%
             \leavevmode\vadjust{\setbox0=\hbox{{\tt
                     \quad\quad  {\small \textrm #1}}}%
             \theight=\ht0
             \advance\theight by \lineskip
             \kern -\theight \vbox to
             \theight{\rightline{\rlap{\box0}}%
             \vss}%
             }}%


\begin{document}

\title{Minimal $L^p$-congestion spanning trees on weighted graphs}

\author{Alberto Castej\'on Lafuente}
\email{acaste@uvigo.gal}

\author{Emilio Est\'evez}
\email{emestevez@uvigo.gal}

\author{Carlos Meni\~no Cot\'on$^\dagger$}
\email{carlos.menino@uvigo.es}
\thanks{$^\dagger$ Corresponding and lead author.}

\author{M. Carmen Somoza}
\email{carmensomoza@uvigo.gal}

\address{ Departamento de Matemática Aplicada 1, Universidad de Vigo, Escola de Enxe\~ner\'ia Industrial, Rua Conde de Torrecedeira 86, CP 36208, Vigo, Spain \& CITMAGA, 15782 Santiago de Compostela, Spain.}

\begin{abstract}
A generalization of the notion of spanning tree congestion for weighted graphs is introduced. The $L^p$ congestion of a spanning tree is defined as the $L^p$ norm of the edge congestion of that tree. In this context, the classical congestion is the $L^\infty$-congestion. Explicit estimations of the minimal spanning tree $L^p$ congestion for some families of graphs are given. In addition, we introduce a polynomial-time algorithm for approximating the minimal $L^p$-congestion spanning tree in any weighted graph and another two similar algorithms for weighted planar graphs. The performance of these algorithms is tested in several graphs.
\end{abstract}

\maketitle
\pagestyle{myheadings}
\markleft{{\small A. Castej\'on Lafuente, E. Est\'evez, C. Meni\~no Cot\'on \& M. C. Somoza}}
\section{Introduction}

	The concept of congestion of a spanning tree on a connected graph was introduced by S. Simonson \cite{Simonson} in 1987 and revisited by  M. Ostrovskii in \cite{Ostrovskii2004}. The congestion of an edge in a spanning tree is the number of edges which are adjacent to the two connected components of the spanning tree obtained by removing the given edge. The congestion of a spanning tree is the maximum edge congestion.	The minimum spanning tree congestion (STC for short) of the graph is the minimum congestion among the family of spanning trees of the graph. 
	
	One recurrent question on the minimum STC of a graph is its computation. It is known that this problem is, in general, NP-hard \cite{Okamoto-Otachi-Uehara-Uno2011} and therefore a polynomial time algorithm for an exact computation of the minimal congestion, working on arbitrary graphs, is unexpected.  
	
	The minimum STC has been computed for several families of graphs. Some examples are: complete (multipartite) graphs \cite{Law-Ostrovskii} or planar and toric grids \cite{Hruska2008}, \cite{Ostrovskii2010}. On other families of graphs estimations are known as for cubical grids \cite{Castejon-Ostrovskii2009}, hypercube graphs \cite{Hiufai2009} or random graphs \cite{Ostrovskii2011, Chandran-Cheung-Isaac2018}. In addition, the minimal congestion of the duplication of a graph or the cartesian product of graphs have been partially studied in \cite{Ostrovskii2011}, \cite{Law-Ostrovskii2010},  \cite{Kozawa-Otachi}, \cite{Hiufai2009} or \cite{Kozawa-Otachi2018} .

	The notion of minimum STC was generalized for edge-weighted graphs in \cite{Otachi-Boadlaender-Leeuwen}, in that work the computational complexity of this problem is studied. In this context, the number of edges connecting two boundary components is replaced by the sum of the respective edge-weights, this generalization seems to be interesting for applications (a weight can be an observable in an applied problem: a distance between nodes, a cost, a travelling time, etc.). It was shown, in \cite[Lemma 7.1]{Otachi-Boadlaender-Leeuwen}, that the spanning tree congestion for integer valuated weighted graphs is equivalent to the classical congestion in an unweighted graph where every edge has multiplicity equal to its weight. For real valuated weighted graphs, one can make rational aproximations to get a similar result. However, in practice, this is not an efficient way to face the problem as, in general, very large multiplicities can arise from a generic real choice of weights. This implies that the STC problem for real valuated weighted graphs should be studied in its own right. 
	
	Moreover, in this work we also deal with another generalization of the STC problem. Instead of considering the maximum over the edge-congestions in the spanning tree, we can consider the sum of the edge congestions, this is what we call the $L^1$-congestion of the spanning tree. This is not a completely new invariant for unweighted graphs as it is equivalent to obtain a Low Stretch Spanning Tree as defined in \cite{Elkin-Emek}, as far as we know the given algorithm is also new in this context. A short proof of the Low Stretch of multipartite complete unweighted graphs is also given. 
	
	A minimal $L^1$-STC optimizes the mean edge-congestion and it seems to be a good option in band-with/cut-off problems on weighted graphs. The $L^p$-congestion of a spanning tree can be defined in a similar way for any $1<p<\infty$. It is shown that H\"older inequalities between $L^p$ norms induce natural inequalities between the $L^p$-STC's, this property can be exploited to improve the performance of the algorithm.
	
We also provide three algorithms for upper estimation of the minimum $L^p$-STC problem on weighted graphs. The first one works in any graph and the others only work in planar graphs. The first algorithm is a {\em congestion descent} algorithm, where each descending instance runs in polynomial time for a fixed weight function. The other algorithms run also in polynomial time, but we remark that one of them runs in polynomial with independence of the weight function. Despite the simplicity of the first algorithm, it produces reasonable results for the typical graphs where the congestion problem has been studied.
	
	Remark that an asymptotically good polynomial time algorithm for polylog (unweighted) graphs has been recently described in \cite{Kolman} and another interesting polynomial time algorithm was given in \cite{Chandran-Cheung-Isaac2018}. As far as we know, these algorithms have not been still implemented.
	
	Our algorithms\footnote{Meni\~no Cot\'on, C. (2025). Minimal Spanning tree estimation via descent methods (Version 1.0) [Computer software]. \url{https://github.com/Carlos-Menino/Minimal_Congestion_Spanning_Tree}} have a reasonable performance in the examples of graphs where the minimum spanning tree congestion is known or asymptotically estimated. Moreover, in several examples, the algorithm provides minimal spanning trees for both $L^\infty$ and $L^1$ congestion. These superoptimal trees, that do not need to exist in general, are better choices for applied problems. 
	
	As an example of the algorithm performance, the upper bounds provided by our algorithm for the STC problem in hypercubes of dimensions $8$, $9$ and $10$ improve the previously known bounds.

\section{Definitions and some estimates}
A {\em graph} $G$ is a pair $(V,E)$ where $V=\{v_1,\dots,v_n\}$ is called the set of {\em vertices} and $E\subset (V\times V)/\Z_2$ is the (multi)set of {\em edges}, formed by unordered pairs of vertices. A pair $\overline{uv}\in E$ is identified with an edge adjacent to $u$ and $v$. Formally $\overline{uv}$ can be interpreted as a set, henceforth adjacent vertices can be interpreted as elements of an edge. For a given graph $G$, let us denote by $V_G$ and $E_G$ its vertex and edge set respectively. An {\em edge-weighted} graph is a pair $(G,\omega)$ where $\omega:E_G\to\R^+$ is a positive function. Observe that an unweighted graph can be considered as an edge-weighted graph where every weight is equal to $1$.

It is said that $H$ is a {\em subgraph} of $G$ if $V_{H}\subset V_G$ and $E_{H}\subset E_G$. It is said that $H$ is a {\em spanning} subgraph of $G$ if it is a subgraph and $V_{H}= V_G$.

Let $v\in V_G$ be a vertex of a graph $G$. The set of {\em incident edges} is defined as 
$$I(v)=\{e\in E_G\,|\,v\in e\}$$
and the set of {\em adjacent vertices} is defined as 
$$A(v)=\{u\in V_G\,|\,\overline{uv}\in E_G\}\;.$$ 
The {\em degree} of a vertex $v$, denoted by $\deg(u)$, is the cardinal of $I(v)$. 

A graph is called {\em simple} if there are no edges of the form $\overline{vv}$ for any $v\in V_G$ and every pair of vertices are adjacent to at most a single edge.

A simple graph $G$ is {\em complete} if every pair of different vertices are adjacent to an edge. Complete graphs will be noted, as usual, by $K_n$ where $n$ is the number of vertices. In a similar way, a multipartite complete graph is a graph determined by a partition of the set of vertices, an edge exists if and only if  it connects vertices in different sets of the partition. Complete multipartite graphs will be noted by $K_{n_1,\dots,n_k}$ where $n_i\geq n_j$ are the sizes of the partition sets.

Let $u,v\in V_G$, a $(u,v)$-{\em path} is an ordered tuple of pairwise different edges $(e_1,\dots,e_\ell)$ such that $w_0=u\in e_1$, $w_{\ell}=v\in e_\ell$, $w_{i}=e_i\cap e_{i+1}\neq \emptyset$ for $i=1,\dots,\ell-1$ (assuming $k>1$) and $w_i\neq w_{j}$ for $i\neq j$. The {\em length} of a $(u,v)$-path  $\wp=(e_1,\dots,e_\ell)$ is $\ell$ and denoted by $|\wp|$. The {\em distance} between two vertices $u,v$, denoted by $d_G(u,v)$ is the minimum length among all the $(u,v)$-paths.

If $G$ is edge-weighted, then the weight of a $(u,v)$-path $\wp$, denoted by $\omega(\wp)$ is defined as the sum $\omega(e_1) + \cdots + \omega(e_\ell)$. The weighted distance between $u,v\in V_G$, denoted $d_{G,\omega}(u,v)$ is the minimum weight among all the $(u,v)$-paths. Analagously we can define the weight of any subgraph as the sum of the weights of its corresponding edges.
 
It is said that $G$ is {\em connected} if there exists at least an $(u,v)$-path for every pair of vertices $u\neq v$. A {\em cycle} is a subgraph that can be obtained from the union of a $(u,v)$-path and an edge $\overline{uv}$, it can be also called {\em closed path}. A simple and connected graph without cycles is a {\em tree}. A {\em spanning tree} of a graph $G$ is a spanning subgraph which is a tree. Of course, spanning trees do only exist on connected graphs. The set of spanning trees of a connected graph $G$ is denoted by $\mathcal{S}_G$.

In a simple graph $G$, each proper subset $X\subset V_G$ determines the partition $\{X,X^c\}$ where $X^c$ denotes the complement of $X$ in $V$. Let $T\in \mathcal \cS_G$ be a spanning tree. Let $\{X_e,X_e^c\}$ be the connected components of the graph $T_e = (V_G, E_T\setminus\{e\})$. Let $E(X)$ be the set of edges that are adjacent to both $X$ and $X^c$, the {\em width} of a vertex partition $\{X,X^c\}$ is $|E(X)|$. If $G$ is edge-weighted, the {\em weighted-width} of a vertex partition $\{X,X^c\}$, denoted by $e_\omega(X,X^c)$, is the sum of weights of edges in $E(X)$.

\begin{defn}[edge congestion]
Let $T$ be a spanning tree of an edge-weighted graph $(G,\omega)$. The {\em congestion} of an edge $e\in E_T$, denoted by $\cC(G,T,e,\omega)$ is defined as the weighted-width of the partition $\{X_e,X_e^c\}$.
\end{defn}

Observe that $(\cC(G,T,e,\omega))_{e\in E_T}$ can be considered as an element of $\R^{|V| -1}$ for every spanning tree $T$, this vector only depends on the ordering of the edges of $T$. 

\begin{defn}
Let $T$ be a spanning tree of an edge-weighted graph $(G,\omega)$. Let $f: \R^{|V| -1}\to\R^+$ be a symmetric function. The $f$-congestion of $T$ in $(G,\omega)$, denoted by $\cC_f(G,T,\omega)$, is defined as $f((\cC(G,T,e,\omega))_{e\in E_T})$. The spanning tree $f$-{\em congestion} of $(G,\omega)$, denoted by $\cC_f(G,\omega)$ is the minimum value of $\cC(G,T,\omega)$ for $T\in\cS_G$. If the minimum is attained at some tree $T$, then $T$ is called a {\em minimum $f$-congestion spanning tree} for $(G,\omega)$.

In this work we are only interested in the case of $L^p$-norms in $\R^{|V| -1}$ for $p\in [1,\infty]$, the $L^p$-congestion of the spanning tree $T$ is defined as:
$$
\cC_p(G,T,\omega)=\|(\cC(G,T,e,\omega))_{e\in E_T}\|_p\,.
$$

and $\cC_p(G,\omega)=\min_{T\in\cS_G}\cC(G,T,\omega)$ is called the {\em spanning tree $L^p$-congestion} of $G$ ($L^p$-STC for short). We shall remove $\omega$ from the notations in the case of unweighted graphs. 
\end{defn}

\begin{rem}
Observe that the classical STC of an unweighted connected graph is just the $L^\infty$-STC of that graph. Our first motivation to define $L^p$-congestions is the $p=1$ case, the $L^1$-congestion of a graph minimizes the mean of the edge-congestions of the spanning tree instead of the maximum one. It is clear that for several applications, minimizing the mean congestion (understanding congestion as a risk measurement) of the edges can be relevant, for instance, when the damage source is a random variable.

Moreover, we find that for several families of graphs there exist spanning trees minimizing both $L^p$-congestions for $p=1,\infty$. These superoptimal spanning trees should be considered as better solutions for the STC problem.

$L^p$-congestions for $1<p<\infty$ are obvious generalizations of the $L^1$-congestion, but they deserve their own interest: recall that $\|\cdot\|_p>\|\cdot\|_q$ for $p<q$ and $\|\cdot\|_p\rightarrow \|\cdot\|_\infty$ as $p\rightarrow \infty$. Thus, we can use $L^p$-congestions as another tool for obtaining estimates and bounds for the classical congestion.
\end{rem}

\begin{prop}\label{p:Holder}
Let $(G,\omega)$ be a connected edge-weighted (simple) graph. The following relation holds
$$
\dfrac{\cC_p(G,\omega)}{\sqrt[p]{|V|-1}}\leq \cC_\infty(G,\omega) < \cC_p(G,\omega)\;.
$$
The sequence $\{\cC_p(G,\omega)\}_{p\in\N}$ is monotonically decreasing and $\lim\limits_{p\to\infty}\cC_p(G) = \cC_\infty(G)$. Moreover, there exists a spanning tree $T$ that minimizes  the $L^p$-STC of $(G,\omega)$ for any sufficiently large $p$ and $p=\infty$.
\end{prop}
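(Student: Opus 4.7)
The plan is to dispatch the norm-inequality parts quickly, and then focus on the main difficulty: producing a single spanning tree that minimises $\cC_p$ simultaneously for all sufficiently large $p$ and for $p=\infty$.

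For the chain of inequalities, I would fix $T\in\cS_G$ and apply the elementary comparison $\|x\|_\infty\leq \|x\|_p\leq n^{1/p}\|x\|_\infty$ on $\R^n$ (with $n=|V|-1$) to the congestion vector $v_T=(\cC(G,T,e,\omega))_{e\in E_T}$. Taking the minimum of the right-hand inequality over $T$ yields $\cC_p(G,\omega)\leq (|V|-1)^{1/p}\cC_\infty(G,\omega)$. For the strict inequality $\cC_\infty<\cC_p$, pick an $L^p$-minimiser $T^*$: every coordinate satisfies $\cC(G,T^*,e,\omega)\geq\omega(e)>0$, so (in the non-trivial case $|V|\geq 3$) $v_{T^*}$ has at least two strictly positive entries, forcing $\|v_{T^*}\|_p>\|v_{T^*}\|_\infty\geq\cC_\infty(G,\omega)$. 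Monotonicity of $p\mapsto\cC_p(G,\omega)$ follows from the classical fact that $p\mapsto\|x\|_p$ is decreasing on $[1,\infty]$ for every $x$, together with the observation that the pointwise minimum of a finite family of decreasing functions is still decreasing. Combining monotonicity with the sandwich $\cC_\infty\leq \cC_p\leq (|V|-1)^{1/p}\cC_\infty$ gives $\cC_p\to\cC_\infty$ by the squeeze theorem.

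The main obstacle is the existence of a simultaneous minimiser, for which I would use a lexicographic refinement. Set $\cS^{(0)}=\cS_G$, and inductively define $\cS^{(k)}\subseteq\cS^{(k-1)}$ as those trees whose $k$-th largest edge-congestion is minimal among $\cS^{(k-1)}$. Because $\cS_G$ is finite and each step is a nested restriction, the process stabilises after at most $|V|-1$ steps to a non-empty set $\cS^{(\infty)}$ of trees all sharing a common decreasing-sorted congestion vector $a^*=(a^*_1,\dots,a^*_{|V|-1})$. Any $T^*\in\cS^{(\infty)}$ lies in $\cS^{(1)}$, so it minimises $\cC_\infty$.

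To show $T^*$ also minimises $\cC_p$ for large $p$, I would compare $\|v_{T^*}\|_p$ with $\|v_T\|_p$ for each $T\notin\cS^{(\infty)}$. If $b$ is the decreasing-sorted congestion vector of $T$, then by construction there is a first index $k$ with $b_j=a^*_j$ for $j<k$ and $b_k>a^*_k$; expanding gives
\[
\|v_T\|_p^p-\|v_{T^*}\|_p^p=\sum_{j\geq k}\bigl(b_j^p-(a^*_j)^p\bigr)\geq b_k^p-(|V|-k)(a^*_k)^p,
\]
which is positive once $(b_k/a^*_k)^p>|V|-k$. Since there are only finitely many trees (hence finitely many ratios $b_k/a^*_k>1$), taking the maximum over these finitely many thresholds yields a single $p_0$ beyond which $T^*$ strictly beats every $T\notin\cS^{(\infty)}$ in $L^p$-norm, completing the proof.
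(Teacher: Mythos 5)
Your proposal is correct. The first half (the two inequalities, monotonicity, and the squeeze argument for $\cC_p\to\cC_\infty$) is essentially identical to the paper's, though you are slightly more careful: you note explicitly that strictness of $\cC_\infty<\cC_p$ needs the congestion vector to have at least two positive entries (the paper's phrase ``for every vector with nonzero coordinates'' silently excludes the degenerate case $|V|=2$, where the two congestions coincide). Where you genuinely diverge is the final assertion. The paper picks, for each $p$, a minimiser $T_p$, pigeonholes on the finiteness of $\cS_G$ to extract a tree $T$ that is minimal for infinitely many $p$, and then upgrades ``infinitely many'' to ``all sufficiently large'' via a comparison claim (for any $v,w\in\R^n$ one of $\|v\|_p\leq\|w\|_p$, $\|w\|_p\leq\|v\|_p$ holds for all large $p$) which it leaves as an exercise. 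You instead construct the candidate explicitly by lexicographic refinement of the decreasing-sorted congestion vectors, and your estimate $\|v_T\|_p^p-\|v_{T^*}\|_p^p\geq b_k^p-(|V|-k)(a^*_k)^p$ is precisely a quantitative proof of the asymptotic comparison the paper omits. Your route buys a constructive description of the eventual minimisers (exactly the trees whose sorted congestion vector is lexicographically minimal) and an explicit threshold $p_0$; the paper's route is shorter on the page but only because its key lemma is unproved. Both arguments are sound.
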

\begin{proof}
Let $T$ be a minimum spanning tree for the $L^p$ congestion. Set $q>p$, since $\|v\|_q <\|v\|_p$ for every vector with nonzero coordinates, it follows that $\cC_q(G,T,\omega)<\cC_p(G,T,\omega)=\cC_p(G,\omega)$. This shows that the sequence of $L^p$-congestions is decreasing and the right inequality $\cC_\infty(G) <\cC_p(G)$.

From the H\"older inequality we know that $\|v\|_p \leq \left(\sqrt[p]{|V|-1}\right)\cdot\|v\|_\infty$. Therefore, by a similar argument as before we get the left inequality.

Since $\sqrt[p]{|V|-1}$ goes to $1$ as $p$ goes to $\infty$, we get, from the given inequalities, the convergence of the $L^p$-congestions to the $L^\infty$-one.

For the last assertion just take into account that the number of spanning trees of $G$ is finite. Let $T_p$ be a minimal spanning tree for the $L^p$-STC of $G$ for each $p\in\N$. Since there are finitely many spanning trees, it follows that infinitely many of the $T_p$'s are equal. Let $T$ be such spanning tree. From the convergence of the $L^p$-STC to the $L^\infty$-one, it follows that $T$ is also a minimal spanning tree for the $L^\infty$-STC. It remains to show that $T$ is a minimal spanning tree for all sufficiently large $p$. We shall use the following claim, that is left as an exercise for the reader.

\begin{claim}\label{claim:norm}
Let $v,w\in\R^n$ arbitrary. There exists $N\in\N$ such that $\|v\|_p\leq\|w\|_p$ for all $p\geq N$ or $\|w\|_p\leq\|v\|_p$ for all $p\geq N$.
\end{claim}

Assume now that there is another tree $T'$ which is a minimal spanning tree for the $L^p$-STC for infinitely many values of $p$. Let $p_n,q_n$ sequences of natural numbers such that $\cC_{p_n}(G,\omega)=\cC_{p_n}(G,T,\omega)$ and $\cC_{q_n}(G,\omega)=\cC_{q_n}(G,T',\omega)$ for all $n\in\N$. It follows that $\cC_{p_n}(G,T,\omega)\leq \cC_{p_n}(G,T',\omega)$ and $\cC_{q_n}(G,T',\omega)\leq \cC_{q_n}(G,T,\omega)$. From the Claim~\ref{claim:norm} it follows that both congestions must be the same for sufficiently large $n$. It follows that $\cC_p(G,\omega)=\cC_p(G,T,\omega)$ for all sufficiently large $p$. 

\end{proof}

It is also interesting to note that the $L^p$-STC is stable under weight perturbations, more precisely:

\begin{prop}
Let $(G,\omega)$ be an edge weighted graph. Let $\omega_n:E_G\to\R^+$ be a sequence of weigths such that $\omega_n$ converges to $\omega$ with respect to the supremum norm. Then
$$
\lim\limits_{n\to\infty}\cC_{p}(G,\omega_n) = \cC_p(G,\omega)\,,\ \forall\ p\,,\ 1\leq p\leq \infty\;.
$$
\end{prop}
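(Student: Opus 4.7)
The plan is to reduce the continuity of the functional $\omega\mapsto\cC_p(G,\omega)$ in the supremum norm to the continuity of the (finitely many) individual spanning-tree congestions $\omega\mapsto\cC_p(G,T,\omega)$ as $T$ ranges over $\cS_G$. Because $\cS_G$ is finite, the minimum of finitely many continuous functions is continuous, which will immediately yield the claimed limit when specialized to any convergent sequence of weights.

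First, I would observe that for a fixed spanning tree $T\in\cS_G$ and a fixed edge $e\in E_T$, the edge congestion
$$
\cC(G,T,e,\omega)=\sum_{f\in E(X_e)}\omega(f)
$$
is a linear (hence continuous, and indeed $1$-Lipschitz up to the factor $|E(X_e)|$) function of $\omega\in\R^{|E_G|}$ with respect to the supremum norm on weights. Assembling these over $e\in E_T$, the vector-valued map $\omega\mapsto(\cC(G,T,e,\omega))_{e\in E_T}\in\R^{|V|-1}$ is continuous.

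Next, I would compose with the $L^p$-norm on $\R^{|V|-1}$, which is continuous for every $p\in[1,\infty]$ on a finite-dimensional space. This shows that $\omega\mapsto\cC_p(G,T,\omega)$ is continuous for each $T\in\cS_G$. Since $\cS_G$ is finite,
$$
\cC_p(G,\omega)=\min_{T\in\cS_G}\cC_p(G,T,\omega)
$$
is a pointwise minimum of finitely many continuous functions, hence continuous in $\omega$. Evaluating this continuity along the sequence $\omega_n\to\omega$ gives the desired convergence $\cC_p(G,\omega_n)\to\cC_p(G,\omega)$.

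There is no real obstacle here; the only thing to watch is that the argument must cover $p=\infty$ as well, but this is handled uniformly because $\|\cdot\|_\infty$ is continuous on the finite-dimensional space $\R^{|V|-1}$ on equal footing with the $\|\cdot\|_p$ for $1\le p<\infty$. If a more quantitative statement is desired, one can extract a modulus of convergence: for any $T$ and any $p$, $|\cC_p(G,T,\omega_n)-\cC_p(G,T,\omega)|$ is bounded by $|V|^{1/p}\cdot\max_{e\in E_T}|E(X_e)|\cdot\|\omega_n-\omega\|_\infty$, and taking the maximum over the finitely many trees in $\cS_G$ yields a uniform estimate that implies the limit.
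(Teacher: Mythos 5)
Your argument is correct and complete: edge congestions are linear in $\omega$, so each $\cC_p(G,T,\omega)$ is continuous, and the minimum over the finite set $\cS_G$ is therefore continuous, which gives the stated limit. The paper states this proposition without any proof (evidently regarding it as routine), and your argument is exactly the intended one; the only cosmetic point is that the quantitative factor should be $(|V|-1)^{1/p}$ rather than $|V|^{1/p}$, a harmless overestimate.
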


As a final remark, note that $\cC_p(G,\omega)\leq \cC_p(G,\lambda)$ if $\lambda(e) \geq \omega(e)$ for all $e\in E$. In particular $m\cdot\cC_p(G)\leq \cC_p(G,\omega)\leq M\cdot\cC_p(G)$ where $m$ and $M$ are, respectively, the minimum and maximum values of $\omega$ and $\cC_p(G)$ is the $L^p$-STC of the respective unweighted graph $G$.

\subsection{Known results for the STC problem}\label{s:known results}

This subsection is devoted to give a brief summary of general estimates and explicit computation of the classical congestion for some families of graphs.

Let us recall first some generic upper and lower bounds for the classical spanning tree congestion. Following \cite[Proposition 6.1]{Kozawa-Otachi-Yamazaki2009} and \cite[Theorem 1]{Ostrovskii2004}, it is known that, for every spanning tree $T$ of a graph $G=(V,E)$ we have:
$$
\dfrac{2|E|}{|V|-1}-1\leq\cC_\infty (G,T)\leq |E|-|V|+2
$$
$$
\cC_\infty (G,T)\leq\left\lceil\dfrac{|V|}{2}\right\rceil \cdot \cC_\infty(G)\;.
$$
Both inequalities can be used to give lower bounds for the classical congestion. One interesting point, relative to the second inequality, is that a spanning tree of maximum congestion provides a lower bound for the minimum spanning tree congestion. In general, these are poor estimates for the STC problem.

For $u,v\in G$, let $m(u,v)$ be the maximum number of edge disjoint $(u,v)$-paths in $G$ and let 
$$
\phi(G)=\min_{|X|\leq|V|/2}\dfrac{e(X,X^c)}{\sum_{v\in X}\deg(v)} 
$$
be the Cheeger constant of $G$. Let $\Delta$ and $\delta$ be the maximum and minum degree of vertices in $G$.

\begin{prop}\cite{Ostrovskii2004}\label{p:lower_bounds}
For every connected graph $G=(V,E)$ we have:
\begin{itemize}
\item $\max_{u,v\in V}m(u,v)\leq \cC_\infty(G)$

\item $\delta\cdot\phi(G)\cdot\max\left(\dfrac{|V|-1}{\Delta},\left\lfloor\dfrac{\diam(G)}{2}\right\rfloor\right)\leq \cC_\infty(G)$
\end{itemize}
\end{prop}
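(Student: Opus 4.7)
The plan is to prove the two inequalities separately; both rely on the observation that for any spanning tree $T$ and any edge $e\in E_T$, the congestion $\cC(G,T,e)$ equals the cut size $|E(X_e,X_e^c)|$, so it suffices to exhibit, for each $T$, an edge whose induced partition has large cut.

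For the first inequality, fix $u,v\in V$ and a spanning tree $T$. The unique $(u,v)$-path in $T$ contains at least one edge $e$, and removing it puts $u$ and $v$ on opposite sides of $\{X_e,X_e^c\}$. Any collection of $m(u,v)$ edge-disjoint $(u,v)$-paths in $G$ must each use at least one edge from $E(X_e,X_e^c)$, so $m(u,v)\leq |E(X_e,X_e^c)|=\cC(G,T,e)\leq \cC_\infty(G,T)$. Taking the infimum over $T$ and the supremum over $u,v$ gives the bound. This is the quick step.

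For the second inequality, the strategy is to locate, in an arbitrary spanning tree $T$, an edge $e$ such that one side of the partition $\{X_e,X_e^c\}$ has size at least $\max\bigl((|V|-1)/\Delta,\lfloor\diam(G)/2\rfloor\bigr)$ \emph{and} at most $|V|/2$. Once such an edge is found, applying the definition of the Cheeger constant to the smaller side $X$ yields
\[
\cC(G,T,e)=e(X,X^c)\;\geq\;\phi(G)\sum_{v\in X}\deg(v)\;\geq\;\phi(G)\cdot\delta\cdot|X|,
\]
from which the stated lower bound on $\cC_\infty(G,T)$, and hence on $\cC_\infty(G)$, follows.

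To realize the $(|V|-1)/\Delta$ part I use a centroid argument: let $c$ be a centroid of $T$, so that each of the $k:=\deg_T(c)\leq\Delta$ subtrees obtained by deleting $c$ has size at most $|V|/2$. Their sizes sum to $|V|-1$, hence the largest has size at least $(|V|-1)/k\geq(|V|-1)/\Delta$; the edge of $T$ joining $c$ to this largest subtree provides the desired $e$. For the $\lfloor\diam(G)/2\rfloor$ part, pick a pair $u,v$ with $d_G(u,v)=\diam(G)=:d$ and consider the unique $(u,v)$-path $u=u_0,u_1,\dots,u_\ell=v$ in $T$; since $T$-distance dominates $G$-distance, $\ell\geq d$. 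Cutting at the edge $\overline{u_j u_{j+1}}$ places $u_0,\dots,u_j$ on one side and $u_{j+1},\dots,u_\ell$ on the other, so both components have at least $\lfloor d/2\rfloor$ vertices whenever $\lfloor d/2\rfloor-1\leq j\leq \ell-\lfloor d/2\rfloor$ (non-empty range since $\ell\geq 2\lfloor d/2\rfloor$). In particular the smaller side still has at least $\lfloor d/2\rfloor$ vertices, so Cheeger applies.

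The main obstacle is the bookkeeping required to ensure that the side to which one applies Cheeger is the one with at most $|V|/2$ vertices while still retaining the large lower bound; both arguments above are tailored so that the small-side size lies in the interval $[\,\text{large lower bound},\,|V|/2\,]$, which is why the centroid (rather than an arbitrary vertex) and a carefully chosen midpoint of the diametric $T$-path are used.
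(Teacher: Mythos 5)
Your proof is correct, but note that the paper itself does not prove this proposition: it is imported verbatim from \cite{Ostrovskii2004}, so there is no in-paper argument to compare against. What the paper does prove is the weighted generalization of your first bullet (Proposition~\ref{p:weighted_bound_infty}, refined in Lemma~\ref{l:path_inequality}), and your argument there --- cut the tree path between $u$ and $v$ at any edge, observe that each of the $m(u,v)$ edge-disjoint $(u,v)$-paths must cross the resulting cut in a distinct edge --- is exactly the same Menger-style counting, specialized to $\omega\equiv 1$. Your second bullet is where you supply genuine content beyond what the paper records, since for the weighted Cheeger bound the paper only writes ``following the same argument as \cite[Theorem 1(b)(c)]{Ostrovskii2010}.'' Your reconstruction is sound: the centroid of $T$ (Jordan's theorem; the paper even carries \cite{Jordan1869} in its bibliography, uncited in the text) yields a tree edge whose small side has between $(|V|-1)/\Delta$ and $|V|/2$ vertices, and a middle edge of a diametral $T$-path yields one whose small side has between $\lfloor\diam(G)/2\rfloor$ and $|V|/2$ vertices; in both cases applying the definition of $\phi(G)$ to the small side $X$ and bounding $\sum_{v\in X}\deg(v)\geq\delta|X|$ gives the claim, and taking the minimum over spanning trees finishes it. The only step you lean on without justification is the existence of the centroid itself (a vertex of $T$ all of whose hanging subtrees have at most $|V|/2$ vertices); this is classical, but since the whole $(|V|-1)/\Delta$ branch hinges on it, a one-line proof or an explicit citation would make the argument self-contained.
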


The STC problem has been solved or estimated for several families of graphs, these will be called {\em test graphs}:

\begin{itemize}
\item The STC of complete graphs and  complete bipartite graphs were computed in \cite{Hruska2008}, complete multipartite graphs were handled in \cite{Law-Ostrovskii}.

\item Let $R_{m,n}$ denote the rectangular grid with $m\times n$ nodes and let $T_n$ be the triangular grid with $n$ nodes on each side. The STC of these graphs were computed in \cite{Hruska2008} and \cite{Ostrovskii2010}.

\item Hexagonal grids were studied in \cite{Estevez}: $\rm{Hex}_n$ will denote a triangular pattern of hexagons, where $n$ is the number of hexagons in each side; and $\rm{Hex}_{n,m}$ will denote a rectangular pattern of hexagons where $n,m$ is the number of side hexagons.

\item Let $H_d$ denote the hypercube graph, its congestion was studied in \cite{Hiufai2009}. 

\item The STC of some specific toroidal grids, more precisely, those which are the Cayley graph of $\Z_m\times \Z_n$ and $\Z_k^3$, were studied in \cite{Castejon-Ostrovskii2009,Kozawa-Otachi-Yamazaki2009}.

\item The STC was estimated for the random graphs $\mathcal{G}(n,p)$ in \cite{Chandran-Cheung-Isaac2018, Ostrovskii2011}.
\end{itemize}

The STC estimations of these test graphs are summarized in Table~\ref{tab:STC}.

\begin{table}

\caption{Spanning tree congestions of test graphs.}

\centering
\begin{tabular}{|c|c|}
\hline
\label{tab:STC}
Graph	&	$L^\infty$ congestion 	\\ \hline
$K_n$	&	$n-1$		\\ \hline
$K_{n,m}$($2\leq m\leq n$)	&	$m+n-2$		\\ \hline
$R_{m,n}$ $n>m>3$, $m$ even	&	$m+1$		\\ \hline
$R_{m,n}$ $n\geq m>3$ or $m$ odd	&	$m$			\\ \hline
$T_n$ ($n\geq 3$)	&	$2\cdot \lfloor 2n/3)\rfloor$		\\ \hline
$\rm{Hex}_n$	&	$\leq 3+ \lfloor (n-1)/3\rfloor + \lfloor (n-2)/3\rfloor$	\\ \hline
$\rm{Hex}_{m,n}$, $m\geq n$	&	$\leq 1+2\cdot\lceil n/2\rceil$	\\ \hline
$H_d$ ($d<7$)	&	$2^{d-1}$	\\ \hline
$H_d$ ($d\geq 7$)	& $\in \left[\dfrac{\log_2(d)}{d}2^d,\dfrac{\lfloor\log_2(d)\rfloor}{2^{\lfloor\log_2(d)\rfloor}}2^d+ d^2 - 2\lfloor \log_2(d)\rfloor d + O(\log_2(d))\right]$		\\ \hline
$\Z_m\times \Z_n$ &  $2\min\{m,n\}$  \\ \hline
$\Z_k^3$ & $\in\left[(2/\sqrt{6})k^2-o(1), (7/4)k^2+ O(k) \right]$   \\ \hline
$C_{k,k,k}$ & $\in\left[(2/\sqrt{6})k^2-o(1), (\sqrt{15}-3)k^2+ O(k)\right]$  \\ \hline
$K(n_1,...,n_k)$ if $ n_k>1$& $2N-n_1-n_2-2$ \\ \hline
$K(n_1,...,n_k)$ if $ n_k=1$& $N-n_{k-1}$  \\ \hline
$\mathcal{G}(n,p), p> c\log(n)/n$&$ \Theta(n)$ w.h.p. for $c\geq 64$ \\ \hline
\end{tabular}

\end{table}

\subsection{Lower bounds for the $L^p$-STC}
The aim of this section is to obtain general and computable lower bounds for the $L^p$ congestion. First of all, we will adapt Proposition~\ref{p:lower_bounds} to the case of edge-weighted graphs.

Let $(G,\omega)$ be an edge-weighted graph. Let $\wp$ be a path in $(G,\omega)$, define $m_\omega(\wp)$ as the minimum edge-weight among the edges in the path $\wp$. Let $\mathcal{P}(u,v)$ be the family of maximal sets formed by pairwise edge-disjoint $(u,v)$-paths (i.e., each set has the maximum number of such paths). Let us define
$$
m_\omega(u,v) = \max_{M\in\mathcal{P}(u,v)}\left\{\sum_{\wp\in M}m_\omega(\wp)\right\}\;.
$$
Observe that $m_\omega(u,v)=m(u,v)$ if $\omega$ is constant equal to $1$.

\begin{prop}\label{p:weighted_bound_infty}
Let $(G,\omega)$ be an edge-weighted graph. Then
$$
\max_{u,v\in V} m_\omega(u,v)\leq\mathcal{C}_\infty(G,\omega)\;.
$$
\end{prop}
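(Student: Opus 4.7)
The plan is to adapt the standard argument for the classical unweighted bound (first item of Proposition~\ref{p:lower_bounds}) to the weighted setting, with the only new ingredient being that each edge of a path $\wp$ contributes weight at least $m_\omega(\wp)$. First, I would fix an arbitrary spanning tree $T\in\mathcal{S}_G$ and arbitrary vertices $u,v\in V_G$; it suffices to prove $\mathcal{C}_\infty(G,T,\omega)\geq m_\omega(u,v)$, since then taking the minimum over $T\in\mathcal{S}_G$ and the maximum over $u,v$ yields the proposition.

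Let $M\in\mathcal{P}(u,v)$ realize the maximum in the definition of $m_\omega(u,v)$. Consider the unique $(u,v)$-path in $T$ and select any single edge $e$ along it. Removing $e$ from $T$ yields a partition $\{X_e, X_e^c\}$ with $u\in X_e$, $v\in X_e^c$, so every $\wp\in M$ must contain at least one edge in the cut $E(X_e)$; for each $\wp$ I would fix a choice $f_\wp\in\wp\cap E(X_e)$. By edge-disjointness of the paths in $M$, the edges $\{f_\wp\}_{\wp\in M}$ are pairwise distinct elements of $E(X_e)$, and $\omega(f_\wp)\geq m_\omega(\wp)$ because $f_\wp$ lies on $\wp$ and $m_\omega(\wp)$ is the minimum edge-weight along $\wp$. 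Chaining these two facts gives
$$
\mathcal{C}(G,T,e,\omega)=\sum_{f\in E(X_e)}\omega(f)\geq \sum_{\wp\in M}\omega(f_\wp)\geq \sum_{\wp\in M}m_\omega(\wp)=m_\omega(u,v)\,.
$$
Since $e\in E_T$, this lifts to $\mathcal{C}_\infty(G,T,\omega)\geq m_\omega(u,v)$, closing the argument.

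I do not expect a genuine obstacle: the proof is essentially a one-line weighted refinement of a Menger-type cut argument. The only conceptual point worth flagging is that the cut $E(X_e)$ may contain edges outside $\bigcup_{\wp\in M}\wp$ (so we discard mass from the first sum) and each $\wp$ may cross $E(X_e)$ multiple times (so we discard even more mass when picking a single $f_\wp$); both losses are harmless because we only need the inequality, not an equality.
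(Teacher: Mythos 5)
Your argument is correct and is essentially the same as the paper's: both fix an edge $e$ of $T$ separating $u$ from $v$, use edge-disjointness of the paths in $M\in\mathcal{P}(u,v)$ to extract distinct cut edges, and bound each such edge's weight below by $m_\omega(\wp)$. Your write-up is slightly more explicit (locating $e$ on the unique $(u,v)$-path in $T$ and flagging the discarded mass), but there is no substantive difference.
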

\begin{proof}
Let $u,v\in V$ and let $T$ be an arbitrary spanning tree. Let $e$ be an edge in $T$ so that $u\in X_e$ and $v\in X_e^c$. Let $M\in\mathcal{P}(u,v)$. For every $(u,v)$-path $\wp$ in $C$ we obtain at least one edge $e_\wp$ joining $X_e$ with $X_e^c$. The weight of this particular edge in $\wp$ must be geater or equal than $m_\omega(\wp)$ by definition. Since the paths in $M$ are pairwise edge-disjoint, the edges $e_\wp$ are also different and therefore $\cC(G,T,e,\omega)\geq \sum_{\wp\in M}m_\omega(\wp)$.
\end{proof}

We want to adapt the previous idea for the $L^p$-STC. Let $e = \overline{uv}$ be an edge of an edge-weighted connected graph $(G,\omega)$ and define $m_\omega(e) = m_\omega(u,v)$.

\begin{lemma}\label{l:path_inequality}
Let $(G,\omega)$ be an edge-weighted connected graph and let $T$ be any spanning tree. Let $e$ be an edge of $T$ then 
$$
m_\omega(e)\leq \cC(G,T,e, \omega)\;.
$$ 
\end{lemma}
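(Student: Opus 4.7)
The plan is to recognize that Lemma~\ref{l:path_inequality} is essentially a pointwise restatement of Proposition~\ref{p:weighted_bound_infty}: instead of maximizing over all vertex pairs on the left and over all spanning trees on the right, we fix the specific endpoints of $e$ and the specific spanning tree $T$, and run the same cut-versus-paths argument.

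Concretely, I would first unpack the data. Write $e=\overline{uv}\in E_T$, so that removing $e$ from $T$ produces the two components $X_e$ and $X_e^c$ with (say) $u\in X_e$ and $v\in X_e^c$. By definition, $\cC(G,T,e,\omega)=\sum_{f\in E(X_e)}\omega(f)$, the total $\omega$-weight of edges of $G$ crossing this cut. On the other side, by definition $m_\omega(e)=m_\omega(u,v)=\max_{M\in\cP(u,v)}\sum_{\wp\in M}m_\omega(\wp)$, where each $M$ is a maximal family of pairwise edge-disjoint $(u,v)$-paths in $G$.

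Next, I would fix an arbitrary $M\in\cP(u,v)$ and, for each path $\wp\in M$, pick an edge $e_\wp\in\wp$ with one endpoint in $X_e$ and the other in $X_e^c$; such an edge exists because $\wp$ starts in $X_e$ and ends in $X_e^c$. By the definition of $m_\omega(\wp)$ as the minimum weight along $\wp$, we have $\omega(e_\wp)\geq m_\omega(\wp)$. Since the paths in $M$ are pairwise edge-disjoint, the edges $\{e_\wp\}_{\wp\in M}$ are distinct and all lie in $E(X_e)$. Summing yields
$$
\cC(G,T,e,\omega)=\sum_{f\in E(X_e)}\omega(f)\;\geq\;\sum_{\wp\in M}\omega(e_\wp)\;\geq\;\sum_{\wp\in M}m_\omega(\wp).
$$
Finally, taking the supremum over $M\in\cP(u,v)$ on the right gives $\cC(G,T,e,\omega)\geq m_\omega(u,v)=m_\omega(e)$.

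No real obstacle is anticipated; the argument is a direct specialization of the proof of Proposition~\ref{p:weighted_bound_infty}. The only point that needs a line of justification is the existence of the crossing edge $e_\wp$ on each path, which follows because $u$ and $v$ lie in different components of $T_e$ and any $(u,v)$-walk in $G$ must therefore use at least one edge of $E(X_e)$.
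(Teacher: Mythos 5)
Your proof is correct and follows essentially the same route as the paper: the paper's proof simply notes that $u\in X_e$ and $v\in X_e^c$ and then invokes the cut-versus-edge-disjoint-paths argument from Proposition~\ref{p:weighted_bound_infty}, which is exactly what you spell out in detail. No issues.
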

\begin{proof}
Assume $e=\overline{uv}$, it follows that $u\in X_e$ and $v\in X_e^c$. Therefore, by the same argument given for Proposition~\ref{p:weighted_bound_infty}, we get $m_\omega(e)\leq \cC(G,T,e,\omega)$ as desired.
\end{proof}

\begin{prop}\label{p:m bound}
Let $(G,\omega)$ be an edge-weighted graph. Let $T_\ast$ be a Kruskal minimal spanning tree for the edge-weighted graph $(G, m_\omega)$ then
$$
\|(m_\omega(e))_{e\in E_{T_\ast}}\|_p\leq \cC_p(G,\omega)\;.
$$
\end{prop}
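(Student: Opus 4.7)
The proof splits into two independent pieces. The first piece applies Lemma~\ref{l:path_inequality} coordinatewise: for every spanning tree $T$ we have $m_\omega(e)\leq \cC(G,T,e,\omega)$ for all $e\in E_T$, and since the $L^p$ norm is monotone on nonnegative vectors this gives
$$
\|(m_\omega(e))_{e\in E_T}\|_p\leq \cC_p(G,T,\omega).
$$
Consequently, it suffices to prove that $T_\ast$ minimizes the left-hand side over all spanning trees, because then taking the minimum over $T$ on the right reproduces $\cC_p(G,\omega)$.

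The second, and main, piece is the matroid-greedy optimality of $T_\ast$ with respect to any monotone symmetric function of edge weights. Concretely, I will establish the following componentwise claim: if the edges of $T_\ast$ and of an arbitrary spanning tree $T$ are listed in increasing order of $m_\omega$ as $a_1\leq\cdots\leq a_{|V|-1}$ and $b_1\leq\cdots\leq b_{|V|-1}$ respectively, then $a_i\leq b_i$ for every $i$. Since $t\mapsto t^p$ is increasing on $[0,\infty)$, this inequality yields $\|(m_\omega(e))_{e\in E_{T_\ast}}\|_p\leq \|(m_\omega(e))_{e\in E_T}\|_p$, which combined with the first piece finishes the proof.

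To prove the componentwise comparison, I argue by contradiction: suppose $k$ is the smallest index with $a_k>b_k$. Let $A$ be the set of the $k-1$ lightest edges of $T_\ast$ (weights $a_1,\dots,a_{k-1}$) and let $B$ be the set of the $k$ lightest edges of $T$ (weights $b_1,\dots,b_k$). Both are forests, and $|B|=k>k-1=|A|$. The matroid augmentation axiom for the graphic matroid then produces an edge $e\in B\setminus A$ such that $A\cup\{e\}$ is still a forest, and by construction $m_\omega(e)\leq b_k<a_k$. At the moment Kruskal's algorithm first examined $e$, it had only processed edges of weight at most $m_\omega(e)<a_k$, so its current forest was contained in $A$; since $A\cup\{e\}$ contains no cycle, $e$ would have been accepted, forcing $e\in E_{T_\ast}$, a contradiction.

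The main obstacle is precisely this last matroid-exchange step, and in particular the careful bookkeeping needed when several edges share the same $m_\omega$-weight. This is handled by fixing a total order on edges refining the weight order and used by Kruskal to produce $T_\ast$; the same total order is then used to sort $T$. Once the componentwise comparison is secured, the remaining steps are purely formal.
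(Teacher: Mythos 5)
Your proposal is correct and follows essentially the same route as the paper: combine Lemma~\ref{l:path_inequality} with the fact that the sorted $m_\omega$-weight vector of a Kruskal tree is componentwise dominated by that of any other spanning tree, then use monotonicity of the $L^p$ norm. The only difference is that you supply a full matroid-exchange proof of the componentwise dominance, which the paper simply asserts as a known property of Kruskal's algorithm.
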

\begin{proof}
The Kruskal algorithm produces a minimal spanning tree by adding the minimal edges under the condition of not generating a cycle. Let $(e_1,e_2,\dots,e_{|V|-1})$ be an increasing ordering of the edges in $T_\ast$ respect to $m_\omega$. If $T$ is any other spanning tree, we can also order its edges in a similar way: $(e'_1,e'_2,\dots,e'_{|V|-1})$. The Kruskal algorithm guarantees that $m_\omega(e_i)\leq m_\omega(e'_i)$ for all $i$ and therefore $\|(m_\omega(e))_{e\in T_\ast}\|_p\leq \|(m_\omega(e))_{e\in T}\|_p$. But $m_\omega(e)\leq \cC(G,T,e,\omega)$ by Lemma~\ref{l:path_inequality}, this gives the inequality.
\end{proof}

If $G$ is unweighted then $m(e)$ can be computed in polynomial time by means of the Ford-Fulkerson algorithm \cite{Ford-Fulkerson}. For a weighted graph it is unclear if the computation of $m_\omega(u,v)$ can be done in polynomial time (the families of paths in $\cP(u,v)$ can be very large in general). In any case, since $m_\omega$ is a maximum it is rather easy to get reasonable lower bounds for $m_\omega(u,v)$.

For weighted graphs there is also a notion of weighted Cheeger constant.
Define the {\em weighted degree} of $v$ as $\deg_\omega(v)=\sum_{e\in I(v)} \omega(e)$ and $\delta_\omega = \min_{v\in V}\deg_\omega(v)$. For $X\subset V$, let $\vol_\omega(X)=\sum_{v\in X}\deg_\omega(v)$. The {\em weighted Cheeger constant} is defined as the maximal number $\phi_\omega(G)$ satisfiying
$$
e_\omega(X,X^c)\geq \phi_\omega(G)\cdot\min(\vol_\omega(X),\vol_\omega(X^c)) 
$$
for evey $X\subset V_G$.

Following the same argument as in \cite[Theorem 1(b)(c)]{Ostrovskii2010} we obtain 

$$
\cC_\infty(G,\omega)\geq \dfrac{\delta_\omega}{\Delta}\phi_\omega(G)\cdot\max(|V|-1,\diam(G)/2)
$$

For the $L^p$ congestion we need a local notion of Cheeger constant, namely $\phi_\omega(G,e)$ and defined as the maximal number satisfying the condition
$$
e_\omega(X,X^c)\geq \phi_\omega(G,e)\cdot\min(\vol_\omega(X),\vol_\omega(X^c)) 
$$
for every $X\subset V$ such that $X$ and $X^c$ are both adjacent to $e$. Observe that $\phi_\omega(G)=\min_{e\in E_G} \phi_\omega(G,e)$. From this localized Cheeger constant we get a lower bound for the edge congestion of an edge which is independent of any chosen minimal tree $T$ that contains that edge: 
$$\cC(G,T,e,\omega)\geq \delta_\omega\phi_\omega(G,e)\cdot\max\{(|V|-1)/\Delta,\lfloor\diam(G)/2\rfloor\}\;.$$

\begin{prop}
Let $(G,\omega)$ be an edge-weighted graph. Let $T_\ast$ be a Kruskal minimal spanning tree for the edge-weighted graph $(G, \phi_\omega(G,e))$ then
$$
\delta_\omega\|(\phi_\omega(G,e))_{e\in E_{T_\ast}}\|_p\cdot\max\left\{\dfrac{|V|-1}{\Delta},\left\lfloor\dfrac{\diam(G)}{2}\right\rfloor\right\}\leq \cC_p(G,\omega)\;.
$$
\end{prop}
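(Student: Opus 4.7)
The plan is to mirror the argument used in Proposition~\ref{p:m bound}, replacing the pointwise lower bound from Lemma~\ref{l:path_inequality} by the localized Cheeger bound displayed immediately before this proposition. Set $M = \max\{(|V|-1)/\Delta,\lfloor\diam(G)/2\rfloor\}$, so that for any spanning tree $T$ of $(G,\omega)$ and any $e\in E_T$ we have
$$
\cC(G,T,e,\omega)\geq \delta_\omega\phi_\omega(G,e)\cdot M\,.
$$
Note that the right-hand side depends only on the edge $e$ and on global data of $(G,\omega)$, not on the tree $T$; this is precisely what allows the matroid greedy step to go through.

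First, raising the pointwise bound to the $p$-th power and summing over $e\in E_T$ (or taking the maximum in the case $p=\infty$) yields
$$
\cC_p(G,T,\omega)\;\geq\;\delta_\omega M\cdot\|(\phi_\omega(G,e))_{e\in E_T}\|_p\;.
$$
Next, I invoke the matroid-greedy property of Kruskal's algorithm, already used in Proposition~\ref{p:m bound}: ordering the edges of $T_\ast$ and of any other spanning tree $T$ increasingly according to $\phi_\omega(G,\cdot)$, the $i$-th edge of $T_\ast$ has $\phi_\omega$-value no larger than the $i$-th edge of $T$. This componentwise domination of sorted vectors implies
$$
\|(\phi_\omega(G,e))_{e\in E_{T_\ast}}\|_p\;\leq\;\|(\phi_\omega(G,e))_{e\in E_T}\|_p
$$
for every $T\in\cS_G$ and every $p\in[1,\infty]$, since the $L^p$-norm is a symmetric monotone function of the coordinates. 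Combining the two displays and taking $T$ to be an $L^p$-STC minimizer of $(G,\omega)$ produces the claimed inequality.

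The main obstacle is conceptually nothing new: the whole argument rests on (i) the tree-independent pointwise lower bound, which is the work done by the paper in isolating $\phi_\omega(G,e)$ as a localized Cheeger quantity, and (ii) the matroid comparison lemma for Kruskal outputs, which is a standard fact reused verbatim from the proof of Proposition~\ref{p:m bound}. All remaining manipulations are routine bookkeeping, so no separate technical step is needed.
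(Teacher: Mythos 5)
Your proposal is correct and follows essentially the same route as the paper: the paper's proof simply says the argument is analogous to Proposition~\ref{p:m bound}, using the tree-independent localized Cheeger lower bound on each edge congestion, the Kruskal sorted-comparison property, and homogeneity of the norm to pull out the constant factor $\delta_\omega M$. Your write-up just makes these steps explicit.
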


\begin{proof}
It is analogous to Proposition~\ref{p:m bound}, just consider a Kruskal tree in $(G,\phi_\omega(G,\cdot))$ and the previous lower bound of the edge congestion by means of the localized weighted inequality. The term $\delta_\omega\max\{(|V|-1)/2,\diam(G)/2\}$ is just a constant term and does not affect to the minimal spanning tree computation, therefore it can be handled by the homogeneity of the norm.
\end{proof}

We include two final results that are also of independent interest.

\begin{prop}
For every edge weighted connected simple graph $(G,\omega)$ and every spanning tree $T$ we have:
$$
\sum_{e\in E_T}\omega(e) + 2\cdot\sum_{e\notin E_T}\omega(e)=2\cdot\omega(G)-\omega(T)\leq \cC_1(G,T,\omega)\;.
$$
\end{prop}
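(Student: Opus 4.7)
The plan is to split the statement into two parts: the equality $\omega(T)+2\sum_{e\notin E_T}\omega(e)=2\omega(G)-\omega(T)$ is trivial algebra (write $\omega(G)=\omega(T)+\sum_{e\notin E_T}\omega(e)$), so the real content is the inequality $\omega(T)+2\sum_{e\notin E_T}\omega(e)\leq\cC_1(G,T,\omega)$. I would establish this by a double-counting argument.

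First, expand $\cC_1(G,T,\omega)=\sum_{e\in E_T}\cC(G,T,e,\omega)$, since the $L^1$ norm of a nonnegative vector is just the sum of its entries. By the definition of edge congestion, $\cC(G,T,e,\omega)$ adds up $\omega(f)$ over the edges $f\in E_G$ whose endpoints lie on opposite sides of the partition $\{X_e,X_e^c\}$. For $f=\overline{uv}$, this happens precisely for those $e\in E_T$ lying on the unique $(u,v)$-path in $T$, which I denote $\wp_T(f)$. Swapping the order of summation then yields
$$
\cC_1(G,T,\omega)=\sum_{f\in E_G}\omega(f)\cdot|\wp_T(f)|\;.
$$

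The remaining step is to bound $|\wp_T(f)|$ from below according to whether $f$ is a tree edge. If $f\in E_T$, then $\wp_T(f)=(f)$ has length $1$. If $f\notin E_T$, then $f$ together with $\wp_T(f)$ forms a cycle in $G$, so $|\wp_T(f)|\geq 2$; simplicity of $G$ is used here to rule out $|\wp_T(f)|=1$, which would otherwise produce a tree edge parallel to $f$. Substituting these bounds into the displayed identity gives
$$
\cC_1(G,T,\omega)\geq\sum_{f\in E_T}\omega(f)+2\sum_{f\notin E_T}\omega(f)\,,
$$
which combined with the algebraic identity yields the claim. I foresee no real obstacle: the only care needed is in justifying the exchange of summations and in invoking simplicity to guarantee $|\wp_T(f)|\geq 2$ for non-tree edges.
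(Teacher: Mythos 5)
Your proof is correct and follows essentially the same route as the paper: both arguments double-count by observing that each edge $f=\overline{uv}$ contributes $\omega(f)$ to the congestion of exactly the tree edges on the unique $(u,v)$-path in $T$, which has length $1$ when $f\in E_T$ and length at least $2$ otherwise. Your write-up is just a slightly more explicit version (making the summation swap and the use of simplicity visible) of the paper's argument.
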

\begin{proof}
Let $e=\overline{uv}\in E_G$ arbitrary and let $T$ be any spanning tree. There exists a unique $(u,v)$-path $\wp$ in $T$. It follows that every edge in $\wp$ separates $u$ from $v$ and, hence, $\omega(e)$ is counted once for the edge congestion computation of each edge in $\wp$. Observe that $\wp$ has length $1$ if $e\in E_T$ and length at least $2$ if $e\notin E_T$ and the result follows.
\end{proof}

\begin{cor}\label{c:edge counting bound}
Let $(G,\omega)$ be an edge weighted graph $(G,\omega)$. Let $T_\ast$ be a maximal spanning tree of $(G,\omega)$, then
$$
\sum_{e\in E_{T_\ast}}\omega(e) + 2\cdot\sum_{e\notin E_{T_\ast}}\omega(e)=2\cdot\omega(G) - \omega(T_\ast)\leq \cC_1(G,\omega)\;.
$$
\end{cor}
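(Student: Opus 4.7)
The plan is to obtain Corollary~\ref{c:edge counting bound} as a direct consequence of the preceding proposition, turning its tree-dependent lower bound into a tree-independent one by choosing $T$ adversarially.

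First, I would apply the preceding proposition to an arbitrary spanning tree $T \in \mathcal{S}_G$, obtaining
$$
2\cdot\omega(G) - \omega(T) \;\leq\; \cC_1(G,T,\omega).
$$
The right-hand side varies with $T$, but so does the lower bound; the point is that I want a lower bound which does not depend on $T$, so that it descends to $\cC_1(G,\omega) = \min_{T\in\cS_G}\cC_1(G,T,\omega)$.

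Next, I would use the defining property of a maximal spanning tree $T_\ast$ (maximum total weight): for every spanning tree $T$,
$$
\omega(T) \;\leq\; \omega(T_\ast), \qquad\text{hence}\qquad 2\cdot\omega(G) - \omega(T_\ast) \;\leq\; 2\cdot\omega(G) - \omega(T).
$$
Chaining this with the inequality above yields
$$
2\cdot\omega(G) - \omega(T_\ast) \;\leq\; \cC_1(G,T,\omega)
$$
uniformly in $T$. Taking the minimum over $T \in \cS_G$ on the right gives the claimed bound. The identity $\sum_{e\in E_{T_\ast}}\omega(e) + 2\sum_{e\notin E_{T_\ast}}\omega(e) = 2\omega(G) - \omega(T_\ast)$ is just the splitting of $\omega(G)$ over $E_{T_\ast}$ and $E_G \setminus E_{T_\ast}$, which I would record in one line.

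There is no real obstacle here: the corollary is essentially a tightening device on top of the previous proposition, and the only nontrivial point is recognizing that to produce a uniform lower bound valid for all $T$ one must minimize $2\omega(G) - \omega(T)$, i.e., maximize $\omega(T)$, which is exactly what a maximal (Kruskal-type, with reversed order) spanning tree achieves. I would also remark that $T_\ast$ itself need not be a minimizer of $\cC_1$, so in general this bound is not tight.
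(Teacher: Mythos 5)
Your argument is correct and is exactly the intended one: the paper states this corollary without proof as an immediate consequence of the preceding proposition, and the step you supply --- that $\omega(T)\leq\omega(T_\ast)$ makes $2\omega(G)-\omega(T_\ast)$ a lower bound uniform over all spanning trees $T$, hence a lower bound for the minimum $\cC_1(G,\omega)$ --- is precisely the missing link. Nothing to add.
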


In the case of an unweighted graph, i.e. $\omega = 1$, Corollary~\ref{c:edge counting bound} gives the bound $(|V|-1) + 2(|E|-|V|+1) = 2|E| - |V| + 1\leq \cC_1(G)$. This provides, via the H\"older inequality given in  Proposition~\ref{p:Holder}, the known lower bound given at the beginning of section~\ref{s:known results} for the STC problem.

At this point, we have to recall that the minimum $L^1$-STC problem on unweighted graphs is equivalent to minimizing the total stretch of a spanning tree (this was observed in \cite[Pag. 3]{Chandran-Cheung-Isaac2018}). Just for completeness, we include the equivalence between both problems and observe that this is no longer true for weighted graphs.

\begin{defn}[Total stretch of a spanning tree (see \cite{Elkin-Emek})]
Let $G=((V,E),\omega)$ be an edge-weighted graph. Let $T$ be a spanning tree of $G$. The stretch of an edge $e\notin E_T$ is defined as:
$$
\sigma(G,T,e,\omega)=\dfrac{\omega(\wp_e)}{\omega(e)}\,,
$$
where $\wp_e$ is the unique path in $T$ joining the adjacent vertices of $e$. The {\em (total) stretch} of $T$ in $(G,\omega)$ is defined as:
$$
\sigma(G,T,\omega) = \sum_{e\in E_G\setminus E_T} \sigma(e,T,\omega)\;;
$$
The low stretch of $G$, denoted as $\sigma(G,\omega)$ is the total stretch of minimal total stretch. Both $G$ and $\omega$ can be omitted from the notations when $G$ is clear in the context and it is unweighted.
\end{defn}

 There exists an algorithm for the LSST problem on weighted graphs, see \cite{Elkin-Emek}, that gets a spanning tree with stretch $O(|E|\log(|V|)^3)$ that runs in $O(|E|\log(|V|))$ time.

\begin{prop}\label{p:LSSTvsSTC}
Let $G$ be an unweighted graph and let $T$ be a spanning tree. Then $T$ minimizes the $L^1$-STC if and only if $T$ is a Low Stretch Spanning Tree. 
\end{prop}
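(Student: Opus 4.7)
The plan is to prove that $\cC_1(G,T)$ and $\sigma(G,T)$ differ by a constant that depends only on $G$, so a spanning tree minimizes one precisely when it minimizes the other. The main tool is a double-counting argument: write $\cC_1(G,T)=\sum_{e\in E_T}\cC(G,T,e)$ and expand each edge congestion as a sum of indicators over all edges of $G$.

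First I would fix a spanning tree $T$ and, for each tree edge $e\in E_T$, observe that $\cC(G,T,e)$ equals the number of edges of $G$ (tree and non-tree) with one endpoint in $X_e$ and the other in $X_e^c$. The edge $e$ itself always crosses this cut and contributes $1$, while a non-tree edge $f=\overline{uv}$ crosses it if and only if $e$ lies on the unique $T$-path $\wp_f$ joining $u$ and $v$. Switching the order of summation then gives
\begin{equation*}
\cC_1(G,T)=\sum_{e\in E_T}\cC(G,T,e)=\sum_{e\in E_T}1+\sum_{f\in E_G\setminus E_T}|\{e\in E_T:e\in \wp_f\}|=(|V|-1)+\sum_{f\notin E_T}|\wp_f|\;.
\end{equation*}

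Next I would use that the graph is unweighted, so $\omega\equiv 1$, whence $\sigma(G,T,f)=\omega(\wp_f)/\omega(f)=|\wp_f|$. Substituting into the previous identity yields $\cC_1(G,T)=(|V|-1)+\sigma(G,T)$. Since $|V|-1$ is a constant independent of the spanning tree $T$, the set of minimizers of $T\mapsto \cC_1(G,T)$ over $\cS_G$ coincides with the set of minimizers of $T\mapsto \sigma(G,T)$, which is precisely the conclusion.

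I do not anticipate a genuine obstacle here; the whole argument is a switch of summation together with the observation that for non-tree edges the crossing tree edges are exactly those on the fundamental cycle path. The only subtle point worth flagging is why the equivalence fails in the weighted setting: the same double count gives $\cC_1(G,T,\omega)=\omega(T)+\sum_{f\notin E_T}|\wp_f|\cdot\omega(f)$, whereas the weighted stretch is $\sigma(G,T,\omega)=\sum_{f\notin E_T}\omega(\wp_f)/\omega(f)$, and these two expressions cannot in general be transformed into one another by a $T$-independent additive constant.
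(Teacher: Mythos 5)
Your proof is correct and follows essentially the same route as the paper's: a double-counting argument showing that each non-tree edge $f$ contributes $|\wp_f|$ to $\cC_1(G,T)$, yielding $\cC_1(G,T)=(|V|-1)+\sigma(G,T)$, so the two objectives differ by a $T$-independent constant. Your closing remark about why the equivalence breaks for weighted graphs also matches the paper's own discussion following the proposition.
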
  
\begin{proof}
Given an edge $e=\overline{uv}\in E_G\setminus E_T$, let $\wp_e$ be the unique $(u,v)$-path in $T$. As above, observe that $e\in E(X_a)$ for some $a\in E_T$ if and only if $a\in \wp_e$. Since the $L^1$-congestion is the sum of all edge congestions in $T$, it follows that $e$ is counted $|\wp_e|$ times for the computation of $\cC_1(G,T)$ and therefore $$\cC_1(G,T) = |E_T| + \sum_{e\in E_G\setminus E_T}|\wp_e|\;.$$  Since $G$ is unweighted we identify it with a weighted graph with constants weights equal to $1$, thus $|\wp_e| = \sigma(T,e)$. Hence $\cC_1(G,T) = |V|-1 + \str(G,T)$, and the $L^1$-congestion and the total stretch of a spanning tree only differs by a constant term.
\end{proof}

In the case of edge-weighted graphs, a similar argument yields $\cC_1(G,T,\omega) = \sum_{e\in E_T}\omega(e) +\sum_{e\in E_G\setminus E_T}|\wp_e|\cdot\omega(e)$. Since $\omega$ does not need to be constant, there is no direct relation between $|\wp_e|\cdot \omega(e)$ and $\omega(\wp_e)/\omega(e)$. Thus, minimal spanning trees for $L^1$-congestion and weighted stretch can be different. For instance, let us consider the complete graph in three vertices $a,b,c$ with edge weights $\omega(\overline{ab})=1$, $\omega(\overline{bc})=1$, $\omega(\overline{ac})=2$, an easy computation yields that $\{\overline{ab},\overline{bc}\}$ has minimum LSST and $\{\overline{ab},\overline{ac}\}$ has minimum $L^1$-STC but not minimum stretch.

As a final comment in this section let us pose a question relative to the relation of congestions of graphs and subgraphs. It is known that the congestion of a subgraph can be larger than the congestion of the whole graph (see \cite[Lemma 5]{Kolman}) however this question is more interesting for spanning subgraphs, i.e., subgraphs with the same number of vertices as the total graph. This can be translated to the following question: is it possible to add edges to a graph an get a supergraph with lower STC?

The answer to this question is affirmative, but we still do not know (nontrivial) necessary conditions for this to happen. The following result is is implicit in several works in the literature, see e.g. \cite[Proposition 5.3]{Otachi-Boadlaender-Leeuwen}. 

\begin{prop}\label{p:better supergraph}
Let $G$ be an unweighted connected graph. There exists a supergraph $\tilde{G}$ with the same vertices as $G$ such that $\cC_\infty(\tilde{G})\leq \Delta + 1$.
\end{prop}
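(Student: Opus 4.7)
The plan is to build $\tilde{G}$ by superposing a star onto $G$. Concretely, fix any vertex $u\in V_G$ and define
$$
\tilde{G} = \bigl(V_G,\; E_G \cup \{\overline{uv}\,:\, v\in V_G,\ v\neq u\}\bigr).
$$
Then $\tilde{G}$ is a supergraph of $G$ on the same vertex set, and it contains the star $T=\{\overline{uv}\,:\,v\neq u\}$ as a spanning tree. I would use $T$ as the candidate spanning tree that witnesses the bound.

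\textbf{Execution.} For any leaf $v\neq u$ of $T$, removing the edge $\overline{uv}$ from $T$ produces the partition $\{\{v\},\, V_G\setminus\{v\}\}$. By the definition of edge congestion,
$$
\cC(\tilde{G},T,\overline{uv}) \;=\; \bigl|\{e\in E_{\tilde G}\,:\, v\in e\}\bigr| \;=\; \deg_{\tilde{G}}(v).
$$
Now the edges incident to $v$ in $\tilde{G}$ are exactly those incident to $v$ in $G$, plus possibly the single new edge $\overline{uv}$ (which is new if and only if it did not already lie in $E_G$). Hence
$$
\deg_{\tilde{G}}(v) \;\leq\; \deg_G(v) + 1 \;\leq\; \Delta + 1.
$$
Taking the maximum over $v\neq u$ gives $\cC_\infty(\tilde{G},T)\leq \Delta+1$, and therefore $\cC_\infty(\tilde{G})\leq \Delta + 1$.

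\textbf{Remarks on difficulty.} There is essentially no obstacle here: the construction is explicit and the bound on the congestion of a star spanning tree reduces directly to a degree computation. The only point to double-check is that the statement refers to $\Delta=\Delta(G)$, which is the natural reading (otherwise the inequality is trivial for the complete graph), and that connectedness of $G$ is not actually needed for this argument since the added star already makes $\tilde G$ connected. If a sharper bound were sought (e.g.\ $\Delta$ instead of $\Delta+1$, or using fewer added edges), one would need a more delicate construction, but for the stated bound the star augmentation suffices.
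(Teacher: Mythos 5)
Your proof is correct and follows essentially the same route as the paper: augment $G$ with a star centered at a fixed vertex, take that star as the spanning tree, and observe that each leaf edge's congestion equals the leaf's degree in $\tilde G$, which is at most $\Delta+1$. Your write-up just makes the final degree count slightly more explicit than the paper does.
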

\begin{proof}
Let $v\in V_G$ be an arbitrary vertex. For every $u\in V_G$ add every edge $\overline{uv}$ that was not already in $E_G$, this supergraph will be called $\tilde{G}$. Let $\tilde{T}$ be the spanning tree formed by every edge in $\tilde{G}$ incident at $v$. Every vertex different to $v$ is terminal and hence the congestion of its adjacent edge is equal to its degree. The result follows.
\end{proof}

Thus, if $\cC_\infty(G) > \Delta + 1$, then there exists a supergraph with the same number of vertices as $G$ and lower congestion. An analogous result for weighted graphs is also possible but it depends in the weights of the new edges in the supergraph. The proof is analogous to the proof of Proposition~\ref{p:better supergraph}.

\begin{prop}Let $(G,\omega)$ be an edge-weighted connected graph. Let $(\tilde{G},\tilde{\omega})$ be a supergraph of $G$ with the same number of vertices as $G$. Suppose that there is a radial spanning tree whose center is some $v\in V_G$, then $\cC_\infty(\tilde{G},\tilde{\omega})\leq \max_{u\neq v}\{\deg_\omega(u) + \tilde{\omega}(\overline{uv})\}$.
\end{prop}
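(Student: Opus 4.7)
The plan is to mimic the proof of Proposition~\ref{p:better supergraph}. I will take $\tilde T$ to be the radial (star) spanning tree of $\tilde G$ centered at $v$, so $E_{\tilde T}=\{\overline{uv}:u\in V_G\setminus\{v\}\}$; the hypothesis guarantees this tree exists in $\tilde G$. Since $\cC_\infty(\tilde G,\tilde\omega)\leq\cC_\infty(\tilde G,\tilde T,\tilde\omega)$ by definition of the minimum, it is enough to bound the edge congestions of $\tilde T$ one by one.

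Next, for each $e=\overline{uv}\in E_{\tilde T}$, I will observe that removing $e$ from the star isolates the leaf $u$ from the rest of the tree, so the induced vertex partition is exactly $\{\{u\},V_G\setminus\{u\}\}$. From the definition of the weighted edge congestion this gives
$$\cC(\tilde G,\tilde T,e,\tilde\omega)=e_{\tilde\omega}(\{u\},V_G\setminus\{u\})=\sum_{e'\in I_{\tilde G}(u)}\tilde\omega(e')=\deg_{\tilde\omega}(u),$$
and therefore $\cC_\infty(\tilde G,\tilde T,\tilde\omega)=\max_{u\neq v}\deg_{\tilde\omega}(u)$.

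To finish, I will use that, just as in Proposition~\ref{p:better supergraph}, $\tilde G$ augments $G$ only by the new edges $\overline{uv}$ needed to radialize the tree at $v$, while $\tilde\omega$ extends $\omega$ on $E_G$. Under this interpretation, the only edge incident to $u$ in $\tilde G$ that is possibly absent from $G$ is $\overline{uv}$ itself, so $\deg_{\tilde\omega}(u)\leq \deg_\omega(u)+\tilde\omega(\overline{uv})$. Taking the maximum over $u\neq v$ and combining with the previous display yields the desired inequality.

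The only real obstacle is to isolate the implicit convention on $(\tilde G,\tilde\omega)$ borrowed from the previous proposition: if $\tilde G$ were permitted to contain extra edges between two non-$v$ vertices, these would contribute to $\deg_{\tilde\omega}(u)$ without being accounted for on the right-hand side, and the bound could fail. Once the construction is restricted (exactly as in Proposition~\ref{p:better supergraph}) to adding only edges incident to $v$, the remainder of the argument is a one-line computation.
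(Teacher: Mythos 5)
Your proof is correct and is exactly the argument the paper intends: it only says the proof is ``analogous to Proposition~\ref{p:better supergraph}'', i.e.\ take the radial tree at $v$, note that each star edge $\overline{uv}$ has congestion $\deg_{\tilde\omega}(u)$, and bound this by $\deg_\omega(u)+\tilde\omega(\overline{uv})$. You are also right to flag the implicit convention that $\tilde G$ adds only edges incident to $v$ (so that $\deg_{\tilde\omega}(u)\leq\deg_\omega(u)+\tilde\omega(\overline{uv})$); as literally stated, with an arbitrary supergraph the right-hand side would not control extra edges between two non-$v$ vertices, so making that restriction explicit is the correct reading.
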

We think that a similar result as the given in Proposition~\ref{p:better supergraph} should hold for the $L^p$-STC problem. The study of the minimal number of edges needed to reduce the $L^p$-STC is left for a future work.

\subsection{$L^p$-STC of complete (multipartite) graphs}\label{s:complete_graphs}

In this subsection we deal with a specific family of simple unweighted graphs. These are the complete and complete bipartite graphs. We also give an upper bound for any complete multipartite graph that we conjecture to be tight.

The complete graph $K_n$ is easy to handle. If an edge $e$ of a spanning tree $T$ produces a partition of $G$ in two sets of size $m$ and $k$ (with $m+k = n$), then its congestion is easy to compute since every edge between these sets must exist in the complete graph:
$$
\cC(K_n,T,e)=|E(K_{m,n-m})|= m\cdot (n-m) = m\cdot n -m^2
$$

It is clear that $m\cdot n - m^2 \geq n-1$ for all $1\leq m\leq n-1$ (assuming $n>1$). Henceforth the edge congestion of any edge of any spanning tree is lower bounded by $(n-1)$. A radial spanning tree, i.e., a tree with a root vertex connected by an edge to any other vertex, minimizes the edge congestion of every edge and therefore is a minimal spanning tree for the $L^p$-congestion for every $p\in\N$. As a consequence we have:

$$
\cC_p(K_n)=\left( (n-1)^{p+1} \right)^{1/p}
$$

The $L^p$-STC of complete bipartite graphs $K_{m,n}$, $m\leq n$, is more difficult to obtain. In this section we compute its $L^1$-congestion, recall that this is equivalent to minimize the LSST problem.

Let $T$ be a spanning tree in $K_{m,n}$ and suppose that $e\notin E_T$ and let $\wp_e$ be the unique path in $T$ joining the adjacent vertices of $e$. It is clear that $|\wp_e|\geq 3$.

Let $T_\star$ be a spanning tree in $K_{m,n}$ such that every edge except one is terminal (only two vertices have degree $>1$). This spanning tree is unique up to isomorphism.

It is also clear that $|\wp_e| = 3$ for every $e\notin E_{T_\star}$, therefore $T_\star$ is a minimum LSST. By Proposition~\ref{p:LSSTvsSTC}, $T_\star$ is also a minimal $L^1$-STC. Thus
$$
\cC_1(K_{m,n})= 3nm - 2(n+m-1)\;.
$$

\begin{rem}
Observe that $T_\star$ minimizes the $L^1$-STC but not the classical STC problem. However, the spanning tree presented in \cite{Hruska2008} minimizes both the $L^1$ and $L^\infty$-STC's simultaneusly.
\end{rem}

For the multipartite case, let $X_1,\dots, X_k$ be the vertex partition associated to the complete multipartite graph $G=K_{n_1,\dots,n_k}$, where $|X_i|=n_i$ and $n_1\geq \dots\geq n_k$, set $N = n_1 + \dots + n_k$. 

Recall that a {\em terminal} vertex is a vertex of degree $1$ and a {\em terminal} edge is an edge adjacent to a terminal vertex.

The congestion of a terminal edge $e$ of a spanning tree whose adjacent terminal vertex belongs to $X_i$ is $N - n_i$. It is easy to see that every nonterminal edge has congestion larger than $N$, this fact does not depend on the spanning tree $T$.

For the specific case where $n_k=1$, there exists a (unique up to isomorphism) radial spanning tree such that every edge is terminal (precisely the one rooted at the unique point in $X_k$). By the previous affirmation this is a minimal $L^p$-congestion  spanning tree for every $p\in\N$.

More precisely, for $n_k=1$, we have $$\cC_p(K_{n_1,\dots,n_{k-1},1}) = \left( \sum_{i=1}^{k-1}n_i(N-n_i)^p\right)^{1/p}\;.
$$

When $n_k>1$ the situation is more complicated. In order to obtain a reasonable upper bound we consider two families of spanning trees, one of them is formed by trees similar to the one that minimizes the $L^\infty$-STC(as shown in \cite{Law-Ostrovskii}) and the other is formed by trees with just one nonterminal edge.

Let $1\leq j < i\leq k$. Define $T_M(i,j)$ as the spanning tree that connects a point of $X_i$ with every other point in $V\setminus X_i$, the remaining points of $X_i$ are connected, in one to one correspondence, with $n_i-1$ points in $X_{j}$.

Define $T_S(i,j)$ as the spanning tree that connects a point of $X_i$ with every other vertex in $V\setminus(X_i\cup X_j$) and with $n_{j}-1$ points of $X_{j}$, the remaining point on $X_{j}$ is connected to the remaining vertices in $X_i$.

Observe that each $T_S(i,j)$ has a unique nonterminal edge (with relatively large congestion) and each $T_M(i,j)$ has exactly $n_{j}-n_i +1$ nonterminal edges (with relatively small congestion). These should give good upper bounds for the $L^p$-congestion.

A direct computation yields:

$$\cC_p(G,T_S(i,j))^p = \sum_{\ell=1}^{k}n_\ell(N-n_\ell)^p - (N-n_{j})^p - (N-n_i)^p
+ (n_i(N-n_i-1) + 2 - n_{j})^p$$

and

$$\cC_p(G,T_M(i,j))^p = \sum_{\ell=1}^{k}n_\ell(N-n_\ell)^p - (n_i-1)(N-n_{j})^p - (N-n_i)^p + (n_i-1)(2N-n_i - n_{j}-2)^p\;.$$

For $p=1$ these expressions give the same value, i.e., for any choice of $(i,j)$ the $L^1$-congestions of $T_S(i,j)$ and $T_M(i,j)$ are the same. For $p>1$, this is not longer true, being $\cC_p(G,T_M(i,j))\leq \cC_p(G,T_S(i,j))$, that was expected as for $p$ big enough $T_S(i,j)$ do not give minimum STC. 

Among the $T_M(i,j)$'s, $1\leq j < i\leq k$, the minimum value of $\cC_p(G,T_M(i,j))$, for each $p\geq 1$, is achieved for $j=1$. This is clear since, for any fixed $n_i$, the expression of $\cC_p(G,T_M(i,j))$ is decreasing in $n_j$ for $n_k\leq n_j\leq n_1$. There is strong evidence (numerical and analytical) that the minimum is, in fact, obtained for $i=k$ or $i=2$ but we still do not have a complete proof for this conjecture. In any case, $T_M(2,1)$ minimizes the classical congestion and henceforth it minimizes the $L^p$-STC problem for large $p$. The number $\min\{\cC_p(G,T_M(k,1)),\cC_p(G,T(2,1))\}$ is used as a good upper bound for the $L^p$-STC problem on any complete multipartite graph. 

The $L^1$-STC of unweighted complete multipartite graphs can be computed from its equivalence with the LSST problem. We did not find an explicit reference to the computation of the low stretch of complete multipartite graphs, we present here a proof for completeness.

\begin{prop}\label{p:multipartite_avcong}
For $k\geq 3$ and $n_k>1$, we have:
\begin{align*}
\cC_1(K_{n_1,\dots,n_k})=\cC_1(K_{n_1,\dots,n_k},T_S(k,1))&=2\left(\sum_{i=1}^{k-2}\sum_{j=i+1}^{k-1} n_in_j\right) + 3(n_k-1)(N-n_k-1) + N-1\;.
\end{align*}
\end{prop}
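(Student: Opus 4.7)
The proof has two main steps: first, verifying that $\cC_1(K_{n_1,\ldots,n_k},T_S(k,1))$ agrees with the claimed expression, and second, showing this value is minimal over all spanning trees. For the first step, I would specialize the formula $\cC_p(G,T_S(i,j))^p=\sum_{\ell=1}^k n_\ell(N-n_\ell)^p-(N-n_j)^p-(N-n_i)^p+(n_i(N-n_i-1)+2-n_j)^p$ at $p=1$, $i=k$, $j=1$, and simplify using $\sum_\ell n_\ell(N-n_\ell)=2\sum_{i<j}n_in_j$ together with the elementary identity $n_k(N-n_k)-(n_k-1)(N-n_k-1)=N-1$. A more transparent route uses Proposition~\ref{p:LSSTvsSTC}: the tree $T_S(k,1)$ consists of a ``main hub'' $v_0\in X_k$ tree-adjacent to all of $V\setminus X_k$, a ``secondary hub'' $u_0\in X_1$ tree-adjacent to all of $X_k$, and the bridge $\overline{v_0u_0}$. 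Every non-tree edge of $E_G$ either has both endpoints in $V\setminus X_k$ (giving tree path $a-v_0-b$, stretch $2$, totaling $\sum_{1\leq i<j\leq k-1}n_in_j$ edges) or has one endpoint in $X_k\setminus\{v_0\}$ (giving path $a-u_0-v_0-b$, stretch $3$, totaling $(n_k-1)(N-n_k-1)$ edges), whence the formula follows.

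For the second step, I would again invoke Proposition~\ref{p:LSSTvsSTC} to reduce to minimizing the total stretch $\str(G,T)$. For any spanning tree $T$, the elementary bound
\begin{equation*}
\str(G,T)=2(|E_G|-(N-1))+\sum_{e\notin E_T}(|\wp_e|-2)\geq 2(|E_G|-(N-1))+L_{\geq 3}(T),
\end{equation*}
where $L_{\geq 3}(T)$ counts non-tree edges of stretch at least $3$, combined with the identity $n_k(N-n_k)-(n_k-1)(N-n_k-1)=N-1$, shows that $\cC_1(G,T_S(k,1))$ is attained precisely when $L_{\geq 3}(T)=(n_k-1)(N-n_k-1)$. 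Minimality thus reduces to the structural inequality $L_{\geq 3}(T)\geq (n_k-1)(N-n_k-1)$ for every $T\in\mathcal{S}_G$, or equivalently $L_2(T)\leq\sum_{1\leq i<j\leq k-1}n_in_j$ where $L_2(T)$ is the number of non-tree edges of $E_G$ at tree-distance exactly $2$.

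The hard part is establishing this last inequality. Since each distance-$2$ pair in a tree has a unique midpoint, one can write $L_2(T)=\sum_{c\in V}\sum_{i<j,\,i,j\neq\ell(c)}d_i^cd_j^c$, where $d_i^c$ is the number of tree-neighbors of $c$ in $X_i$ and $\ell(c)$ is the part containing $c$. The difficulty is that distance-$2$ pairs can involve $X_k$ via a midpoint in $V\setminus X_k$, and one must control the total. My plan is a trade-off argument: since $n_k\geq 2$, no spanning tree of $G$ can be a single-hub star, so at least two ``hub'' vertices are needed, and every distance-$2$ pair involving $X_k$ consumes adjacencies at its midpoint that would otherwise contribute to distance-$2$ pairs entirely within $V\setminus X_k$. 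Formalizing this---via case analysis on the distribution of tree-degrees across the parts, or via a local exchange argument that rearranges $T$ without increasing $\cC_1$ until a double-star of the form $T_S(k,1)$ is reached---constitutes the principal technical step of the proof.
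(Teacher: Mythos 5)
Your first step is correct: the computation of $\cC_1(K_{n_1,\dots,n_k},T_S(k,1))$ via Proposition~\ref{p:LSSTvsSTC} (stretch $2$ for the $\sum_{i<j\leq k-1}n_in_j$ edges inside $V\setminus X_k$, stretch $3$ for the $(n_k-1)(N-n_k-1)$ edges from $X_k\setminus\{v_0\}$ to $V\setminus(X_k\cup\{u_0\})$, plus $N-1$ for the tree edges) matches the paper's value and its reduction to the LSST problem. The problem is the second step. You reduce minimality to the structural inequality $L_{\geq 3}(T)\geq (n_k-1)(N-n_k-1)$ (equivalently $L_2(T)\leq\sum_{i<j\leq k-1}n_in_j$) and then explicitly defer its proof, calling it ``the principal technical step.'' That step is essentially the entire content of the paper's argument, so as written the proposal has a genuine gap: the sketched ``trade-off'' or ``local exchange'' argument is not carried out, and it is exactly where the difficulty lives.

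There is also a structural mismatch worth flagging. By bounding $\sum_{e\notin E_T}(|\wp_e|-2)\geq L_{\geq 3}(T)$ you discard the surplus contributed by edges of stretch $\geq 4$, and thereby commit yourself to a statement that is at least as strong as what is needed. The paper deliberately avoids this: it builds, for each vertex $x$ of $X_1\cup\dots\cup X_{k-1}$, a set $E(x)$ of at least $n_k$ edges whose \emph{average} stretch is $\geq 3$, and the key technical device (Property~$\star$ and Claim~\ref{cl:star property}) is an injective pairing of each stretch-$2$ edge appearing in these sets with a distinct stretch-$\geq 4$ edge, so that the collection $N_T$ of $\geq(N-n_k)n_k$ edges has total stretch $\geq 3|N_T|$ even though some of its members individually have stretch $2$; subtracting the $N-1$ tree edges then yields the bound. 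Your per-edge count $L_{\geq 3}(T)$ cannot absorb such compensations, so even if your inequality is true it would require a different and likely harder combinatorial argument than the averaging one the paper uses. To complete the proof along the paper's lines you would need to replace your counting target by the average-stretch statement and supply the pairing argument.
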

\begin{proof} 
We shall find a minimum LSST. In the multipartite case for a given spanning tree $T$ and $e\notin E_T$ we have that $|\wp_e|\geq 2$. However, when $n_k>1$ is not possible to get a spanning tree where every edge has stretch $2$ (i.e., a radial one). Define $s_3(T)$ as the maximal number of edges with average stretch $\geq 3$. For any vertex $x$, define $\iota(x)\in\{1,\dots,k\}$ as the unique index so that $x\in X_{i(x)}$. Let $E_2(T)$ be the set of edges with stretch $2$.

We shall define, momentarily and for the sake of readability, $\sigma(e)=3$ for each edge in $E_T$. This will not affect to the computation of the LSST since every spanning tree has $N-1$ edges, this definition changes the stretch of a spanning tree by a constant term.

Suppose that $e =\overline{xy}$ is an edge whose stretch is $2$. Then, it must exist another vertex $z = z_e$, called the {\em node} of $e$, such that $\overline{zx},\overline{zy}\in E_T$. Then, for every $w\in X_{\iota(z)}$ we have one of the following possibilities:
\begin{itemize}
\item[(a)] Both edges $\overline{wx}$ and $\overline{wy}$ have stretch $\geq 3$. Observe that, if $\overline{wx}\in E_T$ and $w\neq z$ then $\overline{wy}\notin E_T$ and $\sigma(\overline{wy})=3$.

\item[(b)] If $\sigma(\overline{wx},T)=2$, then $\overline{wy}\notin E_T$ and $\sigma(\overline{wy},T)=4$ since $z$ and $w$ cannot be adjacent to the same edge. Thus the average stretch of both edges is $\geq 3$.
\end{itemize}

In the above situation, define $E^e(x)=\{\overline{xw}\mid\ w\in X_{\iota(z)}\}$ and $E^e(y)$ similarly. From (a) and (b), it follows that the edges in $E^e(x)\cup E^e(y)$ have average stretch $\geq 3$.

Let us make an important remark here, suppose that $\overline{xy}$ and $\overline{yq}$ are edges with stretch $2$, their nodes $z,z'$ belong to the same partition set and for some $w\in X_{\iota(z)}\setminus\{z,z'\}$ we have $\sigma(\overline{wx},T)=2$ (this is the extreme case (b)), then $\overline{wq}\notin E_T$ and $\sigma(\overline{wq},T)\geq 4$. This will be called the {\em Property}~$\star$. Property $\star$ has an interesting consequence:
\begin{claim}\label{cl:star property}
 let $e=\overline{xy}$ and $e'=\overline{pq}$ be different edges with strech $2$ and let $v\in X_{\iota(z_e)}, w\in X_{\iota(z_{e'})}$ be vertices so that $\overline{xv}$ and $\overline{qw}$ have stretch $2$, then $\overline{yv}$ and $\overline{pw}$ are different edges with stretch $\geq 4$.
\end{claim}
\begin{proof}[Proof of Claim~\ref{cl:star property}]
That $\sigma(\overline{yv},T)=4=\sigma(\overline{pw},T)$ is a consequence of (b). If $\overline{yv}=\overline{pw}$, then $y=p$ and $v=w$, in particular the nodes of $\overline{xy}$ and $\overline{pq}$ belong to the same partition set. But in this case we have a configuration where $\overline{xy},\overline{yq} $ have stretch $2$, their nodes belong to the same partition set and there is $w\in X_{\iota(z_e)}$ so that $\sigma(\overline{xw},T)=2 =\sigma(\overline{qw})$, this is forbidden by Property $\star$.
\end{proof}
Let us compute a lower bound for the number of edges in $s_3(T)$. Let us enumerate the vertices in $K_{n_1,\dots,n_k}$ as $x_{ij}$ where $1\leq i\leq k$ and $1\leq j\leq n_i$, we assume $x_{ij}\in X_i$.

Set $1\leq i\leq k-1$ and $1\leq j\leq n_i$ arbitrary. Suppose first that there exists an edge $e_{ij}$ with stretch $2$ adjacent to $x_{ij}$ whose node belong to $X_{i+1}\cup\cdots\cup X_k$. In this case $E^{e_{ij}}(x_{ij})$ is a set of edges with average stretch $3$ with at least $n_k$ elements, define $E(x_{ij}) = E^{e_{ij}}(x_{ij})$. If $x_{ij}$ is not adjacent to any edge with stretch $2$ and node in a partition set $X_{\ell}$ for some $\ell >i$, then every edge from $x_{ij}$ to a vertex in $X_{i+1}\cup\cdots\cup X_k$ has stretch $\geq 3$, these are $N-(n_1+\dots+n_i)\geq n_k$ edges with stretch $\geq 3$, these edges will define the set $E(x_{ij})$ in this case. In any case, for $x\neq y\in X_1\cup\dots\cup X_{k-1}$ we have $E(x)\cap E(y)=\emptyset$ by definition.

Define $$N_T=\bigcup_{i=1}^{k-1}\bigcup_{j=1}^{n_i} E(x_{ij})\;,$$
and recall that it depend in the choice of $e_{ij}$ for each $x_{ij}$.

It is clear that $|N_T| \geq (N-n_k)\cdot n_k$. It is not clear (and can be false in general) that $N_T$ has average stretch $\geq 3$ (by means of the extreme case (b)). However, the Claim~\ref{cl:star property} implies that any edge with stretch $2$ in $E^{\overline{xy}}(x)$ is univocally paired with one edge with stretch $\geq 4$ in $E^{\overline{xy}}(y)$. If our inductive construction added that edge in $E(y)$ then the average stretch of these two edges will be $3$ and there is nothing to fix, in other case we can add that edge of stretch $\geq 4$ not considered before to $N_T$. After adding these edges the average stretch of the final set will be $\geq 3$. 

Thus, by removing now the edges in $T$ that were artificially added in the previous construction, it follows that $s_3(T)\geq (N-n_k)\cdot n_k-(N-1) = (N-n_k-1)(n_k-1)$.

The spanning tree $T_S(k,1)$ satisfies that $s_3(T_S(k,1))=(N-n_k-1)(n_k-1)$ and therefore is minimal for both the LSST and $L^1$-STC problems. The proof is complete.
\end{proof}

\section{First Algorithm: Congestion Descent}

The idea of the first algorithm is to make small perturbations to an initial spanning tree in an edge-weighted connected graph $(G,\omega)$. Each perturbation is made in such a way that the $L^p$ congestion never increases. The algorithm will stop when no more congestion-reducing perturbations are possible.

Recall that every connected graph $G$ defines another graph $\cS_G$ whose vertices are the spanning trees of $G$, and two spanning trees are adjacent if and only if their edge sets are equal up to removing exactly one edge from each tree. Observe that $\cS_G$ is also connected.

The spanning tree congestion problem can be seen as an optimization of a function defined on $\cS_G$. Our algorithm is an analogue to a gradient descent algorithm adapted to our specific problem: from any spanning tree there is a path in $\cS_G$ to a minimal $L^p$-congestion spanning tree, our algorithm will find a path in $\cS_G$ to a local minimal congestion spanning tree in polynomial time. For the sake of simplicity, let $m = |E_G|$ and $n = |V_G|$.

Observe that the computation of the $L^p$-congestion, of a single spanning tree, can be done in $O(m\cdot n)$ time since any edge $\textbf{e}=\overline{uv}$, not in the tree, contributes only to the edge congestion of the edges in the unique path in $T$ joining $u$ and $v$, for each such edge $\textbf{e}$ we only need to compute that path in, at most, $O(n)$ time\footnote{For a tree $T$, finding a path between two points can be done in $O(\diam(T))$, typically $\diam(T)\ll n$.}.

Let $T$ be an arbitrary spanning tree and let $\bf{e}$ be an edge in $E_G\setminus E_{T}$. Let $\wp_{\bf{e}}=\{e_1,\dots, e_\ell\}$ be the unique path in $T$ joining the extreme points of $\bf{e}$.

\begin{prop}\label{p:equal_cong}
Let $T_{e\to\bf{e}}$ be the spanning tree obtained from $T$ by removing the edge $e\in\wp_{\bf{e}}$ and adding the edge $\bf{e}$. Then $T_{e\to\bf{e}}$ is a spanning tree and $\mathcal{C}(G,T,a)=\mathcal{C}(G,T_{e\to\bf{e}},a)$ for all $a\notin \{\bf{e}\}\cup\wp_{\bf{e}}$.
\end{prop}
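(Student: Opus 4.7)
The plan is as follows. The key conceptual reduction is to observe that the edge congestion $\cC(G,T,a,\omega)=e_\omega(X_a,X_a^c)$ depends on $T$ only through the bipartition of $V_G$ determined by deleting $a$ from $T$. Hence the statement reduces to two sub-claims: (i) $T_{e\to\mathbf{e}}$ is indeed a spanning tree, and (ii) for every $a\in E_T\setminus(\{e\}\cup\wp_{\mathbf{e}})$, the partitions of $V_G$ obtained by removing $a$ from $T$ and from $T_{e\to\mathbf{e}}$ coincide.

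For (i) I would apply the standard fundamental-cycle swap: removing $e$ from $T$ yields the two components $X_e,X_e^c$, and since $e\in\wp_{\mathbf{e}}$, the endpoints $u,v$ of $\mathbf{e}$ lie on opposite sides. Adjoining $\mathbf{e}$ therefore reconnects them into a connected graph on $n$ vertices with $n-1$ edges, which is a spanning tree.

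For (ii) I would delete both edges $e$ and $a$ from $T$. Because $T$ is a tree and $e\neq a$, this produces exactly three components, labelled $C_u$, $C_v$, $C_0$ with $u\in C_u$ and $v\in C_v$ (well-defined because the unique $(u,v)$-path of $T$ crosses $e$). Reconstructing $T$ amounts to reinserting $e$ and $a$ as two bridges among $\{C_u,C_v,C_0\}$, and I would run a short case analysis on which pairs they join. The hypothesis $a\notin\wp_{\mathbf{e}}$ is exactly what rules out every configuration except the one in which $e$ directly joins $C_u$ and $C_v$, with $a$ joining $C_0$ to one of them: in every other arrangement the unique $(u,v)$-path in $T$ would be forced to traverse $a$, contradicting $a\notin\wp_{\mathbf{e}}$.

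With that configuration pinned down, the conclusion is immediate. Removing $a$ from $T$ leaves $e$ in place, which merges $C_u$ and $C_v$; the induced bipartition is $\{C_u\cup C_v,\,C_0\}$. Removing $a$ from $T_{e\to\mathbf{e}}$ leaves $\mathbf{e}$ in place, and $\mathbf{e}$ also has its endpoints $u\in C_u$ and $v\in C_v$, so it merges the same two components and yields the same bipartition $\{C_u\cup C_v,\,C_0\}$. Since $\cC(G,\cdot,a,\omega)$ depends only on this partition, the two congestions agree. The main (minor) obstacle in this argument is the component-bookkeeping case analysis identifying which pair of $\{C_u,C_v,C_0\}$ the edge $e$ bridges; once that is settled, the equality of the congestions is essentially tautological.
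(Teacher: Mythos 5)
Your proposal is correct and follows essentially the same route as the paper: the fundamental-cycle swap for the spanning-tree claim, and the observation that the congestion of $a$ depends only on the bipartition induced by deleting $a$, which is unchanged for edges outside the cycle $\{\mathbf{e}\}\cup\wp_{\mathbf{e}}$. The only difference is that the paper simply asserts the components coincide, whereas you justify it with the three-component case analysis --- a welcome extra level of detail, not a different argument.
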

\begin{proof}
Since $T$ is a spanning tree, adding the edge $\textbf{e}\notin E_T$ will produce exactly one cycle formed by $\bf{e}$ and the unique path $\wp_{\bf{e}}$ joining its extreme vertices. Removing any edge in this unique cycle will produce a subgraph without cycles and still connected, i.e., a spanning tree.

The connected components, of both $T_{e\to\bf{e}}$ and $T$, obtained by removing an edge which is not in the cycle $\{\bf{e}\}\cup\wp_{\bf{e}}$, are the same. Henceforth, the edge congestions of edges which are not in the cycle must be also the same (see Figure~\ref{f:cycle_perturbation}).

\end{proof}

\begin{figure}[h]
\centering
\includegraphics[scale=0.2]{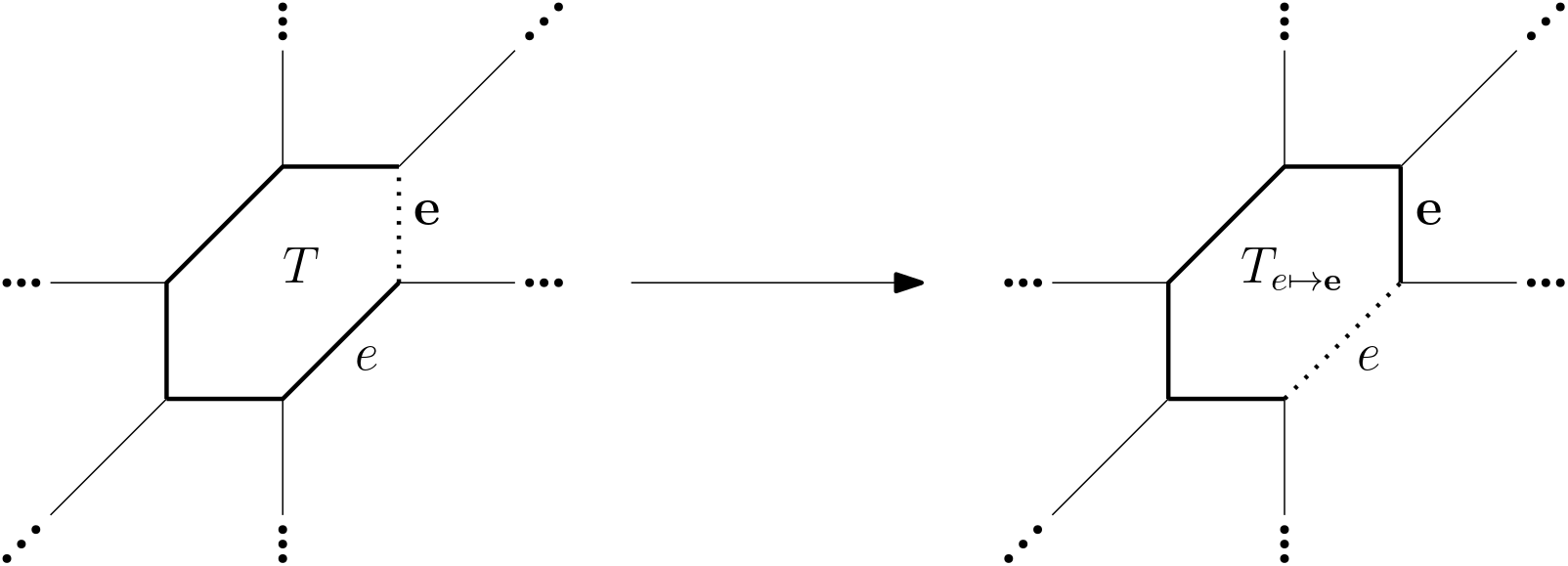}

\caption{Example of a cycle perturbation in a spanning tree. Observe that only the edge congestion of bold edges can change.}
\label{f:cycle_perturbation}
\end{figure}

From Proposition~\ref{p:equal_cong}, it follows that we only need to compute the edge congestions of edges of $T_{e\to\bf{e}}$ that belong to $\{\textbf{e}\}\cup\wp_{\bf{e}}$ in order to compute the  $L^p$-congestion of $T_{e\to\bf{e}}$. Usually, $|\wp_{\bf{e}}|$ is much lower than the number of vertices.

If for some $\textbf{e}\in E_T$ the $L^p$-congestion of some $T_{e\to\bf{e}}$ is lower than the original one, then we choose the best perturbed spanning tree $T_{e\to\bf{e}}$ for $e\in \wp$. We repeat this procedure until no perturbed tree has lower congestion. The given tree is a local minimum in $\cS_G$ for the spanning tree congestion problem

The congestion descent algorithm has the following work-flow (see also Algorithm~\ref{alg:sCD}).
\begin{itemize}
\item[Step 1] Given $T$, compute a list $c$ formed by the edge congestions of the edges of $T$ and a list of paths $\wp_\textbf{e}$ joining the extremes of each $\textbf{e}\in E_G\setminus E_T$ (complexity at most $O(m\cdot n)$).

\item[Step 2] For each $e$ in $\wp_{\bf{e}}$, compute $\cC(G,T_{e\to\bf{e}},a)$ for $a\in \{\textbf{e}\}\cup \wp_{\textbf{e}}\setminus\{e\}$ (complexity at most $O(m n^2)$).

\item[Step 3] For each edge $e$ in $\wp$, compute $\|(\cC(G,T_{e\to\bf{e}},a))_{a\in \wp_e}\|_p$. If the minimum of these norms is lower than the former $\|(\cC(G,T,e))_{e\in \wp_{\textbf{e}}}\|_p$, then assign $T \gets T_{e\to\bf{e}}$ for the minimal congestion edge $e\in \wp_{\textbf{e}}$ and repeat the process (this is $O(p)$ from the previous step).

\item[Step 4] If no perturbation is possible for evey $\textbf{e}\notin E_{T}$, then $T$ is a local minimum for the $L^p$-STC. Otherwise change $T$ by $T_{e\to\bf{e}}$ and repeat the process.
\end{itemize}

\begin{algorithm}
\caption{$\rm{sCD}(G,T,p)$, $G$ graph, $T$ spanning tree, $p\in\N\cup\{\infty\}$}\label{alg:sCD}

$c$: array, $\rm{len}(c)= V-1$\;
\For{$e\in T$}
{
	$c[e] \gets \cC(G,T,e)$\;
}
local\_minimum $\gets$ False\;
\While{\rm{local\_minimum} $=$ \rm{False}}
{
G $\gets$ Randomized $G$\;
local\_control $\gets$ False\; 
\For{$\textbf{e}\notin T$}
{
cycle\_path = Find\_Cycle($T$, $\bf{e}$)\;
local\_congestion: array, $\rm{len}(\text{local\_congestion}) = \rm{len}(\rm{cycle\_path})$\;
\For{$e\in$ \rm{cycle\_path}}
{
local\_congestion[e] $\gets$ c[e]\;
}
current\_congestion $\gets$ local\_congestion\;
\For{$e \in$ \rm{cycle\_path}}
{
new\_T $\gets T$\;
new\_T[$e$]  $\gets \bf{e}$\;
new\_cycle\_path $\gets$ cycle\_path\;
new\_cycle\_path[$e$] $\gets \bf{e}$\;			
\For{$a\in$ \rm{new\_cycle\_path}}
{
\eIf{$a\in$ \rm{cycle\_path}}
{
current\_congestion[a] $\gets \cC(G,T_e,a)$\;
}{
current\_congestion[$e$] $\gets \cC(G,T_e,\bf{e})$\;
}

}
\If{$\|\rm{current\_congestion}\|_p < \|\rm{local\_congestion}\|_p$}
{
local\_control $\gets$ True\;
$T\gets \rm{new}\_T$\;
local\_congestion $\gets$ current\_congestion\;
\For{$a\in$ \rm{new\_cycle\_path}}
{
\eIf{$a\neq \bf{e}$}
{
$c[a] \gets$ current\_congestion[a]\;
}{
$c[e] \gets$ current\_congestion[$\bf{e}$]\;
}
}

}
}
}
\If{\rm{local\_control} = \rm{False}}
{
local\_minimum = True\;
}

}
\Return{ $\|c\|_p$, $\rm{new}\_T$}

\end{algorithm}

The complexity of a single perturbation instance is at most $O(m\cdot n^2)$. It is not difficult to bound the maximum number of instances: observe that there is a finite number of edge weights, therefore, if a perturbation is made, the difference between the $L^p$ congestions is uniformly lower bounded by a number $\varepsilon >0$. Assume that $\varepsilon > 0$ is given, it follows that at most $\mathcal{C}_p(G,T)/\varepsilon$ perturbations can be done.

Observe that a rough upper bound for every $L^p$-congestion is the number:
$$
b_{p,\omega} = M_\omega\cdot m\cdot(n-1)^{1/p}\;,
$$
where $M_\omega = \max_{e\in E_G} \omega(e)$.

In the specific case where $\omega$ is constant $1$, i.e. an unweighted graph, it follows that $b_{\infty,\omega} = m$ and $b_{1,\omega} = m\cdot(n-1)$. 

Assume first that $\omega$ takes values in $\N$, the $L^\infty$ congestion of some edge is reduced by at least $1$ when a perturbation instance succeeds and the global congestion does not increase. It follows that at most $(n-1)b_{p,\omega}$ perturbation instances can occur. The previous estimate tells us that the complexity of the congestion descent algorithm  for $p=\infty$, for integer weighted graphs, is upper bounded by $O(M_\omega m^2\cdot n^3)$. In the specific case of an unweighted graph the complexity is upper bounded by $O(m^2n^3)$.

When $\omega$ is not integer valuated, we can multiply the weight function by a sufficienlty large constant $K$ such that the difference between every different weight is more than $1$. By the same reasoning as in the  integer weighted case, the complexity of the algorithm will be polynomial for $K\cdot\omega$. Since $\cC(G,T,e,K\omega)=K\cdot \cC(G,T,e,\omega)\,,$  it follows that the algorithm executed for $K\cdot\omega$ visits exactly the same trees as the algorithm executed for $\omega$ and the polynomial complexity also follows.

The cases $1\leq p<\infty$ can be treated in a similar way. Recall that minimizing $\cC_p(G,T)$ is equivalent to minimize $\cC_p(G,T)^p$. At each descent congestion instance $\cC(G,T)^p$ must decrease by an integer number if the weights are integers (otherwise we can use a constant $K$ as before). In this case $\cC(G,T)^p\leq b_{p_\omega}^p = M_\omega^p\cdot m^p\cdot (n-1)$, thus the complexity\footnote{Our computer experiments suggest that the algorithm is, in general much faster than these estimatives in typical graphs.} of the algorithm is, at most, $O(K^pM_\omega^p m^{p+1}n^3)$.

At each descent instance of the algorithm, we shall randomize the list of edges in $G$, the purpose of this randomization is to allow the algorithm to explore different parts of the graph at each step avoiding any choice bias. The previous algorithm will be called the {\em strict congestion descent algorithm} and will be noted as ${\rm sCD}_p$. 

\begin{rem}\label{r:deep_sCD}
Observe that the ${\rm sCD}_p$ algorithm for large $p$ can be used to obtain interesting estimations of the $L^\infty$-STC by means of Proposition~\ref{p:Holder}. This approach has demonstrated to be relevant for some graphs as hypercube graphs. This version of the algorithm, i.e., to get the minimum $L^q$-congestion spanning tree among those explored by the ${\rm sCD}_p$ algorithm, will be denoted by ${\rm sCD}_{p,q}$. In this paper we only test the ${\rm sCD}_{10,\infty}$ as a proof of concept.
\end{rem}

The performance and discussion of the algorithm ${\rm sCD}_p$ in several families of graphs is given in the last section.  We also recall that we do not claim that our implementation is the best possible, significative gains in the theoretical complexity is expected in the future.

\section{Planar graphs}

\begin{defn}
A graph $G=(V,E)$ is called {\em planar} if there exists inclusions $i_V:V\to \R^2$ and $i_E:E\to C^\infty([0,1],\R^2)$ such that 
\begin{enumerate}
\item $i_E(e)(0)$ and $i_E(e)(1)$ are the images by $i_V$ of the adjacent vertices of $e$ for all $e\in E$.

\item $i_E(e)([0,1])\cap i_E(e')([0,1]) = i_V(e\cap e')$.

\item Each $i_E(e)$ is an embedding for all $e\in E$.
\end{enumerate}
Given a planar graph, we will always assume that the emebedding is implicit (i.e. both $i_V$ and $i_E$ were already defined). In this context, we will assume that $V\subset \R^2$ and each $e\in E$ is a smooth path which connects its adjacent vertices.

The trace of a path $e\in E$ will be denoted by $\bm{e}$ (in bold letter). Let $G=(V;E)$ be a planar graph, the subset of $\R^2$ consisting of the vertices and the traces of the edges will be denoted by ${\bm G}$ .
\end{defn}

\begin{defn}
A {\em cell} of a planar graph $G$ is the closure of a  connected component of $\R^2\setminus {\bf G}$.

\end{defn}

\begin{defn}[Dual Graph]
Let $G=(V,E)$ be a connected planar graph and let $V^\ast$ be the set of cells of $G$. The {\em dual graph} of $G$, denoted by $G^\ast$, is the multigraph $(V^\ast,E^\ast)$ such that $\overline{PQ}\in E^\ast$ if and only if there exists an edge in $E_G$ which is adjacent to the cell $P$ on one of its sides and to the cell $Q$ in its other side.

If $(G,\omega)$ is a weighted graph, then we can define a {\em dual weight} $\omega^\ast$ on $G^\ast$:  observe that every edge $e\in E_G$ defines a unique edge $e^\ast$ in $E^\ast$, define $\omega^\ast(e^\ast)=\omega(e)$. The edge $e^\ast$ will be called the {\em dual edge} of $e$.
\end{defn}

\begin{rem}\label{r:planar dual}
The dual graph is also a planar (multi)graph. Just place a vertex in the interior of each cell (called {\em centroids}) and each edge $e^\ast$ of the dual graph is represented as a smooth path joinig the respective centroids and meeting transversely a unique edge $e$ in $G$, exactly the one with same weight as the dual weight of $e^\ast$. This also  allows to define the dual edge $e^\ast\in E^\ast$ for each edge $e\in E$, in particular both graphs have the same number of edges.

Remark also, that the dual graph of a planar graph is always connected since every cell is connected with the unbounded cell by a path formed by finitely many dual edges.
\end{rem}

\begin{prop}[Folklore]
Let $G$ be a, possibly edge-weighted, planar graph. Let us consider the representation of $G^\ast$ in $\R^2$ given in Remark~\ref{r:planar dual}, then $(G^\ast)^\ast$ is isomorphic to $G$ and $(\omega^\ast)^\ast = \omega$.
\end{prop}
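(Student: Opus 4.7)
The plan is to use the explicit planar realization of $G^\ast$ described in Remark~\ref{r:planar dual} and show that, in that realization, the cells of $G^\ast$ are in natural bijection with the vertices of $G$, and the dual of a dual edge is the original edge.

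First, I would set up the bijection on vertices. Fix a vertex $v\in V_G$ and consider the cyclic sequence of cells of $G$ meeting at $v$: say $P_1,\dots,P_k$ ordered around $v$ (where $k=\deg(v)$), so that $P_i$ and $P_{i+1}$ share exactly one edge $e_i$ incident to $v$ (indices mod $k$). In the realization of $G^\ast$, the centroids of the $P_i$ are joined by dual edges $e_i^\ast$ crossing each $e_i$ transversely once. Because each $e_i^\ast$ can be chosen to stay inside $P_i\cup P_{i+1}$ in a small neighbourhood of $v$, the concatenation $e_1^\ast\cup\cdots\cup e_k^\ast$ forms a simple closed curve $\gamma_v$ encircling $v$. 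By the Jordan curve theorem, $\gamma_v$ bounds a disk in $\R^2$ containing $v$, and by construction the interior of this disk meets $\bm{G^\ast}$ only along $\gamma_v$; hence it is a cell $C_v$ of $G^\ast$ containing $v$. Conversely, every cell $C$ of $G^\ast$ is bounded by dual edges whose crossings with $G$ share a common endpoint: if one walks along $\partial C$, consecutive dual edges $e^\ast$, $e'^\ast$ meet at a centroid, and the corresponding original edges $e,e'$ must bound a common cell of $G$ and meet at the vertex of $G$ visible from $C$. This shows $C=C_v$ for a unique $v\in V_G$, giving a bijection $v\mapsto C_v$ between $V_G$ and $V_{G^{\ast\ast}}$.

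Next I would handle the edges. Each edge $e\in E_G$ has a unique dual edge $e^\ast\in E^\ast$, and under the planar realization $e^\ast$ crosses $\bm{G}$ only along $\bm{e}$ (transversely, in one point). The dual construction applied to $G^\ast$ assigns to $e^\ast$ a dual edge $e^{\ast\ast}$ which is the unique edge of $G^{\ast\ast}$ crossing $\bm{e^\ast}$ transversely once. Since $e^\ast$ separates the two cells $C_u$ and $C_v$ of $G^\ast$ corresponding to the endpoints $u,v$ of $e$, and the only arc of the original embedding of $G$ crossing $\bm{e^\ast}$ is $\bm{e}$ itself, the assignment $e\mapsto e^{\ast\ast}$ matches $e$ with the edge of $G^{\ast\ast}$ joining $C_u$ and $C_v$. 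Under the vertex bijection $v\leftrightarrow C_v$, this identifies $E_G$ with $E_{G^{\ast\ast}}$, giving the desired graph isomorphism.

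Finally, the weight statement is immediate from the definition: $(\omega^\ast)^\ast(e^{\ast\ast})=\omega^\ast(e^\ast)=\omega(e)$, so under the identification $e\leftrightarrow e^{\ast\ast}$ the weight $(\omega^\ast)^\ast$ agrees with $\omega$ edge by edge. The main obstacle is the topological argument that the cells of $G^\ast$ are exactly the disks $C_v$; this is essentially the Jordan curve theorem applied to the cycles $\gamma_v$, together with the observation that $\bigcup_v \overline{C_v}=\R^2$ (since every point of $\R^2\setminus\bm{G^\ast}$ is separated from $\infty$ by at most the curves $\gamma_v$). Everything else is bookkeeping about transverse intersections of the two embedded graphs.
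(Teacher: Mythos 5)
The paper gives no proof of this statement at all: it is labelled \emph{Folklore} and stated without argument, so there is nothing internal to compare against. Judged on its own terms, your proposal follows the standard folklore route (vertices of $G$ correspond to cells of $G^\ast$, and $e^{\ast\ast}$ is matched with $e$ by transversality of the two embeddings), and the edge and weight bookkeeping in your last two paragraphs is fine once the vertex--cell bijection is established.

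The genuine gap is the claim that $\gamma_v=e_1^\ast\cup\cdots\cup e_k^\ast$ is a simple closed curve, to which you then apply the Jordan curve theorem. This fails whenever some cell of $G$ occurs more than once in the cyclic order of cells around $v$, i.e.\ whenever $G$ is not $2$-connected. For the path $u$--$v$--$w$ there is a single cell $P$, both dual edges are loops based at the centroid of $P$, and $\gamma_v$ is a wedge of two circles rather than a Jordan curve; for any tree, every $\gamma_v$ degenerates in this way. The conclusion you want (the connected component of the complement of $\bm{G^\ast}$ containing $v$ is a single cell $C_v$, and $v\mapsto C_v$ is a bijection onto the cells of $G^\ast$) is still true, but it cannot be extracted from the Jordan curve theorem applied to $\gamma_v$; one needs either a more careful local separation argument that tolerates repeated centroids, or the usual global patch via Euler's formula: $G^\ast$ is connected and planar with $f$ vertices and $m$ edges, hence has exactly $m-f+2=|V_G|$ cells, after which injectivity of $v\mapsto C_v$ forces the bijection. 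Your ``conversely'' paragraph has the same character --- the assertion that all the crossing points on $\partial C$ share a single common endpoint is exactly the content to be proved, and it is asserted rather than argued. Finally, note that the statement is false for disconnected $G$ (the double dual is always connected), and the paper's definition of the dual indeed assumes connectivity; your argument never invokes it, which is a symptom of the missing step.
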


\begin{defn}[Dual Tree]
Let $G$ be a connected planar graph. Let $T$ be a spanning tree. Define the {\em dual tree} of $T$, denoted by $\widetilde{T}$, as the subgraph of $G^\ast$ generated by the edges of $G^\ast$ which are {\bf not} dual to edges of $T$.
\end{defn}

\begin{prop}
Let $G=(V,E)$ be a connected planar graph. Let $T$ be a spanning tree of $G^\ast$. Then $\widetilde{T}$ is also a spanning tree of $G$.
\end{prop}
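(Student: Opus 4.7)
My plan is to use the classical planar-duality interchange between spanning trees of $G$ and $G^{\ast}$, via an Euler-formula edge count combined with a Jordan curve argument for acyclicity. Throughout I view $\widetilde{T}$ as a subgraph on the full vertex set $V(G)$, with edge set $\{e\in E_{G}\mid e^{\ast}\notin E_{T}\}$; this uses the canonical bijection $e\mapsto e^{\ast}$ between $E_{G}$ and $E_{G^{\ast}}$ from Remark~\ref{r:planar dual}, together with $(G^{\ast})^{\ast}\cong G$.

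First I would count edges. Euler's formula for the connected planar graph $G$ gives $|V_{G}|-|E_{G}|+|V_{G^{\ast}}|=2$, hence $|V_{G^{\ast}}|-1=|E_{G}|-|V_{G}|+1$. Since $T$ is a spanning tree of $G^{\ast}$, $|E_{T}|=|V_{G^{\ast}}|-1$, and because $|E_{G^{\ast}}|=|E_{G}|$, direct subtraction gives $|E_{\widetilde{T}}|=|E_{G}|-|E_{T}|=|V_{G}|-1$.

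Next I would show that $\widetilde{T}$ is acyclic. Assume for contradiction that $\widetilde{T}$ contains a simple cycle $C$. By the planar embedding $C$ is realised as a simple closed curve in $\R^{2}$, and the Jordan curve theorem partitions the cells of $G$ not meeting $C$ into a nonempty interior class $A$ and a nonempty exterior class $B$. Using the explicit representation of $G^{\ast}$ with centroids and transverse dual paths (Remark~\ref{r:planar dual}), any edge of $G^{\ast}$ that joins a cell in $A$ to a cell in $B$ must meet $C$ transversely, and therefore is dual to an edge of $C\subset E_{\widetilde{T}}$; by definition that dual edge is excluded from $T$. Hence $T$ contains no edge between $A$ and $B$, contradicting that $T$ is a connected spanning tree of $G^{\ast}$.

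Finally, an acyclic graph on $n$ vertices with $c$ connected components has $n-c$ edges. Applied to $\widetilde{T}$ with $n=|V_{G}|$ and $|E_{\widetilde{T}}|=|V_{G}|-1$, this forces $c=1$, so $\widetilde{T}$ is connected and thus a spanning tree of $G$. The main obstacle is the acyclicity step: one must bridge between the combinatorial notion of a cycle in $\widetilde{T}$ and the topological separation given by the Jordan curve theorem, and check that every dual edge crossing that separation is forced to cross an edge of the cycle. This is where the planar-embedding hypothesis is genuinely used; the edge count and the concluding component-counting argument are routine once acyclicity is in hand.
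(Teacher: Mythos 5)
Your proof is correct, and it splits from the paper's halfway through. The acyclicity step is essentially the paper's own argument: a cycle in $\widetilde{T}$ would separate the plane, both sides would contain cells, and any path in $T$ joining them would have to cross the cycle transversely at an edge dual to an edge of $\widetilde{T}$, which is excluded from $T$. Where you diverge is in establishing that $\widetilde{T}$ is spanning and connected. The paper handles both points by further topological arguments: if $\widetilde{T}$ missed a vertex, $T$ would contain a cycle enclosing that vertex, and if $\widetilde{T}$ were disconnected, a cycle in $T$ would separate its components -- both impossible since $T$ is a tree. You instead fix the vertex set of $\widetilde{T}$ to be all of $V_G$, count $|E_{\widetilde{T}}|=|V_G|-1$ via Euler's formula, and invoke the standard fact that an acyclic graph on $n$ vertices with $n-1$ edges has exactly one component. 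Your route concentrates all the topology into the single Jordan-curve step and replaces the remaining separation arguments with routine combinatorics, which is arguably tidier and less hand-wavy than the paper's ``a cycle in $T$ would disconnect $\widetilde{T}$'' claim; the cost is the (mild) dependence on Euler's formula, which the paper never needs to cite. One small wording point: you should classify the \emph{open} components of $\R^2\setminus{\bm G}$ as interior or exterior to the cycle (each such component is connected and disjoint from the cycle, so it lies entirely on one side); speaking of ``cells not meeting $C$'' is slightly off since the cells are defined as closures and may touch $C$ along their boundaries, but this does not affect the argument.
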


\begin{proof}
On the one hand $\widetilde{T}$ cannot have a cycle, otherwise that cycle will disconnect $\R^2$ in several connected components. At least one cell will intersect one bounded connected component and at least one cell will intersect the unbounded one, since $T$ is spanning and connected it follows that there exists a path in $T$ connecting those cells. This path must meet transversely the cycle at some of its edges and that edge is dual to some edge in $T$. This is in contradiction with the definition of dual tree.

On the other hand, $\widetilde{T}$ must be spanning. If this were not the case, then $T$ must have at least one cycle enclosing a vertex in $V_G$ but this is impossible since $T$ is a tree. By a similar reason $\widetilde{T}$ must be connected, otherwise a cycle in $T$ would disconnect  $\widetilde{T}$ in (at least) two components.
\end{proof}

\begin{cor}
Let $G$ be a connected planar graph. Then $T$ is a spanning tree of $G$ if and only if $\widetilde{T}$ is a spanning tree of $G^\ast$. 
\end{cor}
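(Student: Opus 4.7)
The corollary is an essentially immediate consequence of the preceding proposition combined with the folklore duality $(G^\ast)^\ast = G$ (with $(\omega^\ast)^\ast = \omega$). The plan is to verify that the dual-tree operation is involutive, $\widetilde{\widetilde{T}} = T$, and then apply the preceding proposition twice—once to $G$ and once to $G^\ast$—to obtain both implications.

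\textbf{Step 1: involutivity of $T \mapsto \widetilde{T}$.} By definition, $\widetilde{T}$ is the subgraph of $G^\ast$ whose edges are exactly the dual edges $e^\ast$ of edges $e \in E_G \setminus E_T$. Since the dualization on edges is a bijection with $(e^\ast)^\ast = e$ (Remark on planar duals), the edges of $G^\ast$ \emph{not} in $E_{\widetilde{T}}$ are precisely the duals of edges of $T$. Applying the same definition to $\widetilde{T}$, viewed as a subgraph of $(G^\ast)^\ast$, we see that the edges of $(G^\ast)^\ast = G$ not dual to edges in $\widetilde{T}$ are exactly the edges of $T$. Thus $\widetilde{\widetilde{T}} = T$.

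\textbf{Step 2: both implications.} For the implication $\Leftarrow$, assume $\widetilde{T}$ is a spanning tree of $G^\ast$. The preceding proposition, applied to the planar graph $G$ with spanning tree $\widetilde{T}$ of $G^\ast$, yields that $\widetilde{\widetilde{T}}$ is a spanning tree of $G$; by Step~1 this is $T$. For the implication $\Rightarrow$, assume $T$ is a spanning tree of $G$. Apply the preceding proposition to the planar graph $H := G^\ast$, whose dual $H^\ast = (G^\ast)^\ast$ is (isomorphic to) $G$: taking the spanning tree $T$ of $H^\ast$, we conclude that its dual tree (inside $H = G^\ast$) is a spanning tree of $G^\ast$. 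That dual tree is precisely $\widetilde{T}$.

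\textbf{Expected obstacle.} There is essentially no computational obstacle; the only thing to be careful about is that the definition of $\widetilde{T}$ was stated only for a spanning tree of $G$, whereas in Step~2 we need to apply it also to a spanning tree of $G^\ast$. This is harmless because the construction is purely combinatorial in the edge-duality $e \leftrightarrow e^\ast$ and makes equal sense in either direction, so no separate verification is required beyond invoking the folklore duality $(G^\ast)^\ast \cong G$.
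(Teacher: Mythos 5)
The paper states this corollary without proof, treating it as immediate from the preceding proposition together with the folklore duality $(G^\ast)^\ast\cong G$, which is exactly the argument you supply: your verification of the involutivity $\widetilde{\widetilde{T}}=T$ and the two applications of the proposition (to $G$ and to $G^\ast$) are the intended justification. Your proof is correct and matches the paper's (implicit) approach.
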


The edge congestion of a spanning tree of a reduced planar graph is easy to compute from its dual tree. The next Proposition is a generalization of \cite[Pag. 6 (Key Observation)]{Law-Ostrovskii}.

\begin{prop}[Edge Congestion in a planar graph]\label{p:edge planar congestion}
Let $(G,\omega)$ be an edge-weighted connected planar graph and let $T$ be a spanning tree. Let $\widetilde{T}$ be its dual tree in $(G^\ast,\omega^\ast)$. Let $e\in E_T$ and let $e^\ast =\overline{PQ}$ be its dual edge. Let $\wp_{e^\ast}$ be the shortest $(P,Q)$-path in $\widetilde{T}$. It follows that $\mathcal{C}(G,T,e)=w(e) + w^\ast(\wp_{e^\ast})$, where $\omega^\ast(\wp_{e^\ast})$ denotes the sum of the dual weights of the (dual) edges in $\wp_{e\ast}$.
\end{prop}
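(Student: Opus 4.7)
The plan is to exploit the planar embedding. The cycle formed in $G^\ast$ by $e^\ast$ together with $\wp_{e^\ast}$ realizes, under the planar representation of the dual (Remark~\ref{r:planar dual}), as a Jordan curve $\gamma\subset\R^2$. I will show that $\gamma$ is precisely the ``boundary'' separating $X_e$ from $X_e^c$; once that is in hand, the edges contributing to $\cC(G,T,e)$ are exactly $e$ itself plus the primal edges whose duals lie on $\wp_{e^\ast}$, and the stated identity follows by summing weights.

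First, since $\widetilde{T}$ is a tree by the previous corollary, there is a unique $(P,Q)$-path $\wp_{e^\ast}$ in $\widetilde{T}$, and $\gamma:=e^\ast\cup\wp_{e^\ast}$ is a simple closed curve in $G^\ast$, hence a Jordan curve in $\R^2$. Next, I locate $T\setminus\{e\}$ with respect to $\gamma$. By construction of the planar dual, each dual edge meets exactly one primal edge, transversely, in a single point, and the edges of $\wp_{e^\ast}$ are all duals of \emph{non-tree} edges. Hence no primal tree edge other than $e$ meets $\wp_{e^\ast}$, while $e$ itself meets $\gamma$ only at the single crossing point with $e^\ast$. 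Removing $e$, the subgraph $T_e := T\setminus\{e\}$, viewed as a subset of $\R^2$, is disjoint from $\gamma$. Since $T_e$ is spanning with exactly two connected components, having vertex sets $X_e$ and $X_e^c$, each of these vertex sets lies entirely in one of the two components of $\R^2\setminus\gamma$.

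Now identify the edges crossing the partition. An edge $a\in E_G$ belongs to $E(X_e)$ iff its endpoints lie on opposite sides of $\gamma$, iff its embedded arc $\bm{a}$ crosses $\gamma$. In the combined primal/dual planar picture, distinct primal and dual edges intersect only at dual-pair crossings, so $\bm{a}$ crosses $\gamma$ exactly when $a^\ast$ is one of the edges of $\gamma$, i.e.\ when $a=e$ or $a^\ast\in \wp_{e^\ast}$. Summing weights over $E(X_e)$ then yields
$$
\cC(G,T,e) \;=\; \omega(e) \;+\; \sum_{a^\ast\in\wp_{e^\ast}} \omega(a) \;=\; \omega(e) + \omega^\ast(\wp_{e^\ast}),
$$
using $\omega^\ast(a^\ast)=\omega(a)$.

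The genuinely nontrivial step is the topological one placing $X_e$ and $X_e^c$ on opposite sides of $\gamma$, which rests on the Jordan curve theorem plus the transversality property of the primal/dual embedding; one should be careful that the realization of $\gamma$ really is a simple closed curve (no self-intersections), but this is guaranteed by the planarity of $G^\ast$ and the simplicity of the cycle in $G^\ast$. The remaining steps are essentially bookkeeping.
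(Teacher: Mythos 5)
Your argument is correct and follows essentially the same route as the paper's proof: form the Jordan curve $\gamma=\bm{e^\ast}\cup\bm{\wp}_{e^\ast}$, observe that it separates the two components of $T\setminus\{e\}$, and identify the primal edges crossing $\gamma$ as exactly $e$ together with the edges dual to those of $\wp_{e^\ast}$. You in fact spell out more of the topological bookkeeping (why $T\setminus\{e\}$ is disjoint from $\gamma$, and the crossing-parity argument) than the paper does; the only point left implicit is that the two components land on \emph{opposite} sides of $\gamma$, which follows since $e$ crosses $\gamma$ exactly once.
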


\begin{proof}
Let $e\in E_T$ and let ${\wp_{e^\ast}}$ be the shortest path in $\widetilde{T}$ joining the cells in $e^\ast$. Since $G^\ast$ is also planar, it follows that $\wp_{e^\ast}$ can be embedded in $\R^2$ and also $e^\ast$ can be represented in $\R^2$ by the planar embedding depicted in Remark~\ref{r:planar dual}. Let $T_e$ denote the graph obtained from $T$ by removing the edge $e$. It is clear that ${\bm\wp}_{e^\ast}\cup {\bf e^\ast}$ is the trace of a simple loop in $\R^2$ that separates both connected components of $T_e$. The dual weights of the edges of this cycle are, by definition, the corresponding weights of the edges in $G$ that connect both components (recall that each edge meets transversely just one dual edge) and therefore the sum of these weights are precisely the edge congestion of $e$.

In the case where $P=Q$ (for instance when $e$ is adjacent to a degree $1$ vertex of $G$), $\wp_{e^\ast}$ is defined as an empty path (with length $0$).
\end{proof}

From the previous Proposition we obtain an easy lower bound for the edge congestion in edge-weighted planar graphs. The next is a generalization of the cycle rank of a graph as defined in \cite[Pag. 8]{Law-Ostrovskii}

\begin{defn}
Let $G$ be a connected planar graph and let $G^\ast$ be its dual graph. Let $e^\ast$ and edge of $G^\ast$ which is dual to an edge $e\in E_G$. Let $c_e^\ast$ denote a cycle in $G^\ast$ such that it contains $e^\ast$ and its dual weight is minimal.

The number $w^\ast(c_e^\ast)$ is called the {\em weighted cycle rank} of $e$ and it will be denoted by $\rm{r}_\omega(e)$.
\end{defn}

\begin{cor}
Let $(G,\omega)$ be an edge-weighted connected planar graph. Then $\rm{r}_\omega(e)\leq\mathcal{C}(G,T,e)$. In particular, $\max_{e\in E_G}\{\rm{r}_\omega(e)\}\leq  \mathcal{C}_\infty(G,\omega)$.
\end{cor}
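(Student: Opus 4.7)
The plan is to apply the preceding Proposition directly. Fix a spanning tree $T$ and an edge $e \in E_T$. By that Proposition, $\mathcal{C}(G,T,e) = \omega(e) + \omega^\ast(\wp_{e^\ast})$, where $\wp_{e^\ast}$ is the shortest path in the dual tree $\widetilde{T}$ joining the endpoints of $e^\ast$. The proof of that Proposition already records the key observation that $\{e^\ast\}\cup\wp_{e^\ast}$ is the trace of a simple loop in $\R^2$; in particular it is a cycle in $G^\ast$ which contains $e^\ast$. Since its total dual weight equals $\omega^\ast(e^\ast)+\omega^\ast(\wp_{e^\ast}) = \omega(e)+\omega^\ast(\wp_{e^\ast}) = \mathcal{C}(G,T,e)$, the definition of $\mathrm{r}_\omega(e)$ as a minimum over cycles through $e^\ast$ yields $\mathrm{r}_\omega(e) \leq \mathcal{C}(G,T,e)$ at once.

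For the ``in particular'' clause, since $\mathcal{C}(G,T,e)$ is a priori only defined for $e\in E_T$, I would extend the first inequality to every edge of $G$ as follows. Given $f \in E_G$ and any spanning tree $T$: if $f\in E_T$ the first part already yields $\mathrm{r}_\omega(f)\leq\mathcal{C}(G,T,f)\leq\mathcal{C}_\infty(G,T,\omega)$. If instead $f\notin E_T$, pick any edge $e$ on the unique path in $T$ joining the endpoints of $f$. Removing $e$ from $T$ separates those endpoints, so $f \in E(X_e,X_e^c)$; but the cycle $\{e^\ast\}\cup\wp_{e^\ast}$ built in the first part consists, by the key observation recalled above, of exactly the duals of the edges in $E(X_e,X_e^c)$. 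Hence this cycle contains $f^\ast$ and has dual weight $\mathcal{C}(G,T,e)\leq\mathcal{C}_\infty(G,T,\omega)$, so again $\mathrm{r}_\omega(f)\leq \mathcal{C}_\infty(G,T,\omega)$.

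Taking the minimum over all spanning trees $T$ and then the maximum over $f\in E_G$ delivers $\max_{f\in E_G}\mathrm{r}_\omega(f)\leq\mathcal{C}_\infty(G,\omega)$. The only subtle point is the identification of the edges of the cycle $\{e^\ast\}\cup\wp_{e^\ast}$ with the duals of the edges crossing the cut $(X_e,X_e^c)$, which is the cycle/cut planar duality already used inside the proof of the preceding Proposition; once this is in hand the rest of the argument is purely formal.
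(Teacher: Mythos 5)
Your proof is correct and follows exactly the route the paper intends: the corollary is stated there without proof as an immediate consequence of the preceding Proposition on edge congestion in planar graphs, since $\{e^\ast\}\cup\wp_{e^\ast}$ is a cycle through $e^\ast$ whose dual weight is precisely $\mathcal{C}(G,T,e)$, so the minimality in the definition of $\mathrm{r}_\omega(e)$ gives the bound. Your additional care in extending the inequality to edges $f\notin E_T$ for the ``in particular'' clause --- identifying the edges of that cycle with the duals of the edges crossing the cut $\{X_e,X_e^c\}$ --- correctly fills in a step the paper leaves implicit, since $\mathcal{C}(G,T,f)$ is only defined for $f\in E_T$ while the maximum is taken over all of $E_G$.
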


\begin{cor}
Let $T_{\rm{R}}$ be a minimal spanning tree of the edge-weighted graph $(G,\rm{r}_\omega)$. Then

$$\left(\sum_{e\in E_{T_{\rm{R}}}}(\textrm{r}_\omega(e))^p\right)^{1/p}\leq  \mathcal{C}_p(G,\omega)\;.$$
\end{cor}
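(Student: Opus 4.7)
The proof should follow the exact template of Proposition~\ref{p:m bound}, with the weighted cycle rank $\mathrm{r}_\omega$ playing the role previously played by $m_\omega$. The two ingredients we have at our disposal are: (i) the preceding corollary, which asserts $\mathrm{r}_\omega(e) \leq \mathcal{C}(G,T,e,\omega)$ for every edge $e$ of any spanning tree $T$ (this is where we crucially use planarity); and (ii) Kruskal's classical exchange/optimality property for $(G,\mathrm{r}_\omega)$.

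First, I would fix an arbitrary spanning tree $T$ of $G$ and sort its edges $(e_1',\dots,e_{|V|-1}')$ in increasing order of their $\mathrm{r}_\omega$-values; similarly, sort the edges $(e_1,\dots,e_{|V|-1})$ of the Kruskal-minimal tree $T_{\mathrm{R}}$. A standard consequence of the Kruskal greedy procedure is that two minimum spanning trees of a weighted graph have the same multiset of edge-weights, and that the sorted weight sequence of the Kruskal output is pointwise no larger than that of any other spanning tree; applied here, this yields $\mathrm{r}_\omega(e_i) \leq \mathrm{r}_\omega(e_i')$ for every $i$. Since $L^p$-norms are monotone under coordinatewise domination of non-negative vectors, this gives
$$
\left\|(\mathrm{r}_\omega(e))_{e\in E_{T_{\mathrm{R}}}}\right\|_p \;\leq\; \left\|(\mathrm{r}_\omega(e))_{e\in E_T}\right\|_p .
$$

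Second, I would apply the previous corollary edge-by-edge to $T$: since $\mathrm{r}_\omega(e) \leq \mathcal{C}(G,T,e,\omega)$ for every $e\in E_T$, monotonicity of the $L^p$-norm yields
$$
\left\|(\mathrm{r}_\omega(e))_{e\in E_T}\right\|_p \;\leq\; \left\|(\mathcal{C}(G,T,e,\omega))_{e\in E_T}\right\|_p \;=\; \mathcal{C}_p(G,T,\omega).
$$
Chaining the two displays gives $\|(\mathrm{r}_\omega(e))_{e\in E_{T_{\mathrm{R}}}}\|_p \leq \mathcal{C}_p(G,T,\omega)$ for every spanning tree $T$, and taking the infimum over $T\in\mathcal{S}_G$ produces the stated inequality.

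No genuinely hard step arises: the planarity content is already absorbed into the preceding corollary, and the remaining work is entirely the Kruskal pointwise-domination argument identical in spirit to Proposition~\ref{p:m bound}. The only point that would merit a brief sentence in the write-up is the justification of the pointwise inequality of sorted $\mathrm{r}_\omega$-weights; this is classical but worth recalling to keep the proof self-contained.
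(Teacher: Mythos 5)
Your proposal is correct and follows essentially the same route as the paper: the paper likewise combines the pointwise domination of sorted $\mathrm{r}_\omega$-values from the Kruskal optimality property (explicitly citing the argument of Proposition~\ref{p:m bound}) with the edge-wise bound $\mathrm{r}_\omega(e)\leq\mathcal{C}(G,T,e)$ from the preceding corollary, then takes the minimum over spanning trees. The only difference is cosmetic labelling of which tree carries the primes.
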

\begin{proof}
Let $T$ be a spanning tree of $G$. Let $(e_i)_{i=1}^k$ be an ordering of elements in $T$ by weighted-rank in increasing order and let $(e'_i)_{i=1}^k$ be a similar ordering for the edges in $T_{\rm{R}}$. By the Kruskal algorithm (see the proof of Proposition~\ref{p:m bound}), it follows that $\textrm{r}_\omega(e'_i) \leq \textrm{r}_\omega(e_i)$ for all $1\leq i\leq k$ and therefore 

\begin{align*}
\left(\sum_{e\in E_{T_{\rm{R}}}}(\textrm{r}_\omega(e))^p\right)^{1/p} &=\|(\textrm{r}_\omega(e'_i))_{i=1}^k\|_p\\ &\leq \|(\textrm{r}_\omega(e_i))_{i=1}^k\|_p \leq  \|(\mathcal{C}(G,T,e_i))_{i=1}^k\|_p = \mathcal{C}_p(G,T,\omega)
\end{align*}
\end{proof}

From the Proposition~\ref{p:edge planar congestion}, a good candidate for a minimal $L^p$-congestion spanning tree in an (unweighted) planar graph is a tree whose dual tree minimizes the stretch in the dual graph; this was already observed for the classical congestion in \cite{Law-Leung_Ostrovskii2014}. In the case of weighted graphs, we have to minimize the weights of paths but the idea remains the same.

\begin{rem}\label{r:radial_dual}
If $G^\ast$ admits a radial spanning tree $T$, i.e. a tree where a root vertex is connected with everything, then it is clear that $T$ minimizes the lengths of paths for every possible pair of cells in the dual graph. In this specific case $\widetilde{T}$ has minimal $L^p$-congestion for every $1\leq p\leq \infty$ whenever $G$ is unweighted, the edge congestion of every edge varies between $1$ and $3$.
\end{rem}

Weighted planar graphs whose dual graphs admit a radial spanning tree are the simplest ones for the study of the spanning tree congestion problem. However, even in this case, it is not true in general that the dual tree of a minimum $L^p$-congestion spanning tree is a radial spanning tree as it can be seen in the following example.

\begin{exmp}\label{ex:nonradial}
Let $G$ be the weighted\footnote{Observe that these weights satisfy the triangle inequality.} graph presented in Figure~\ref{f:nonradial}. Its dual graph $G^\ast$ admits a unique radial spanning tree. That one is dual of another spanning tree of $G$ whose $L^1$-congestion is $50$ (left). The spanning tree depicted in the right has $L^1$-congestion $42$, observe that its dual tree is linear (far from radial).
\begin{figure}[h]
\centering
\includegraphics[scale = 0.22]{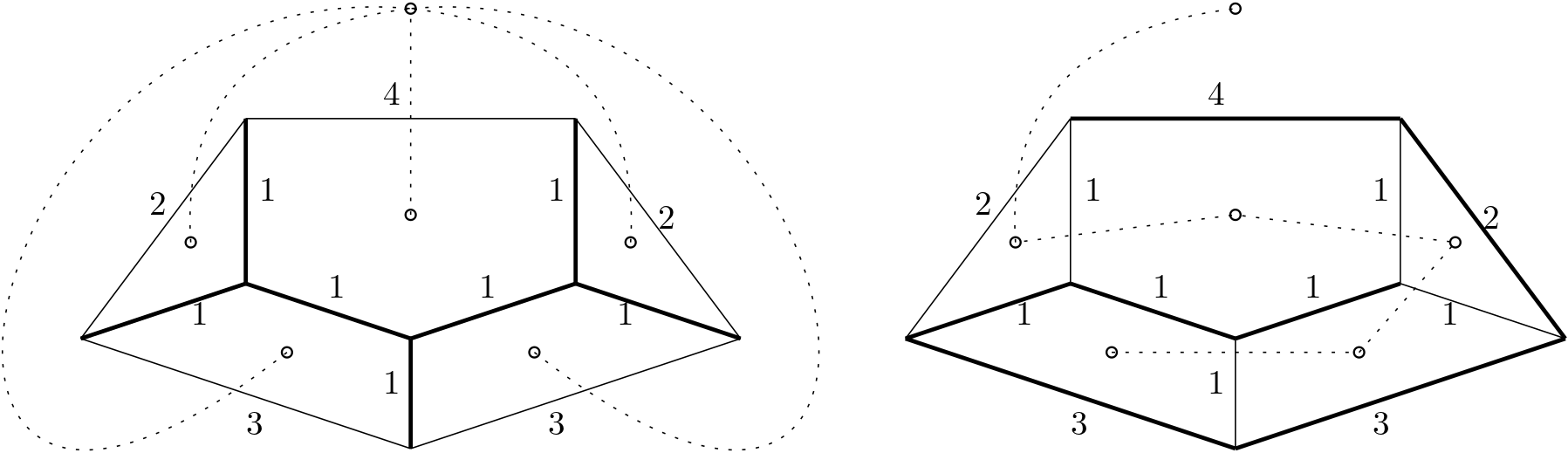}
\caption{Radial dual trees do not need to be optimal for weighted planar graphs.}
\label{f:nonradial}
\end{figure}
\end{exmp}

However, there is a family of planar weighted graphs that always admit a radial minimal $L^p$-congestion spanning tree.

\begin{defn}
A connected graph $G$ is called  {\em cactus graph} if and only every edge belongs to at most one cycle.
\end{defn}

It is easy to show that every cactus graph is planar. Moreover, a planar graph is a cactus graph if and only if the maximal simple subgraph of the dual graph is a radial spanning tree.

\begin{prop}\label{p:minimal-cactus}
Let $(G,\omega)$ be a weighted cactus graph embedded in $\R^2$. Let $T$ be a minimal weight spanning tree in $(G^\ast,\omega^\ast)$. Then $\widetilde{T}$ is a minimal $L^p$-congestion spanning tree for every $p\in\N\cup\{\infty\}$.
\end{prop}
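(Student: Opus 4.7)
My plan is to reduce the $L^p$-congestion optimization for $\widetilde{T}$ to a collection of independent scalar subproblems, one per cycle of $G$, by exploiting the very restricted shape that $G^\ast$ takes when $G$ is a cactus. The structural description of $G^\ast$ is the step I expect to be the principal obstacle: every edge of $G$ either is a bridge (with the same cell on both sides, hence dual to a self-loop in $G^\ast$) or lies in exactly one cycle $C_i$, in which case it separates the two fixed cells bounded by $C_i$. Consequently $G^\ast$ decomposes, on top of the bridge self-loops, into one ``bundle'' of $|C_i|$ parallel edges between those two cells for each cycle $C_i$, and the maximal simple subgraph of $G^\ast$ is a tree. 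A spanning tree $T$ of $G^\ast$ therefore amounts to picking one representative $e_i\in C_i$ from each bundle (and omitting the self-loops), so $\widetilde{T}$ keeps all bridges of $G$ and deletes exactly one edge per cycle.

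Applying Proposition~\ref{p:edge planar congestion} to $\widetilde{T}$ is then direct. If $e\in E_{\widetilde{T}}$ is a bridge, its dual is a self-loop and $\cC(G,\widetilde{T},e)=\omega(e)$. If $e\in C_i\cap E_{\widetilde{T}}$, the dual edge $e^\ast$ joins the two cells separated by $C_i$, and the unique path in $T$ between them is precisely the representative $e_i^\ast$ from that bundle, so $\cC(G,\widetilde{T},e)=\omega(e)+\omega(e_i)$. This yields
\[
\cC_p(G,\widetilde{T},\omega)^p \;=\; \sum_{e\text{ bridge}}\omega(e)^p \;+\; \sum_{i}\sum_{e\in C_i,\,e\neq e_i}\bigl(\omega(e)+\omega(e_i)\bigr)^p,
\]
with the analogous maximum for $p=\infty$. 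The bridge term is independent of $T$ and the residual problem decouples into one independent scalar subproblem per cycle.

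It remains to solve each cycle-subproblem, which boils down to the following elementary lemma: for nonnegative reals $w_1\leq w_2\leq\dots\leq w_k$ and $p\in[1,\infty]$, the quantity $S_p(i):=\|(w_j+w_i)_{j\neq i}\|_p$ is minimized at $i=1$. For finite $p$ this is immediate from the pairing identity
\[
S_p(i)^p - S_p(1)^p \;=\; \sum_{j\neq 1,\,i}\bigl[(w_j+w_i)^p-(w_j+w_1)^p\bigr],
\]
each summand being nonnegative because $w_i\geq w_1$; the $p=\infty$ case follows from an analogous direct comparison of the two largest sums. Applying the lemma to each cycle $C_i$ shows that $\cC_p(G,\widetilde{T},\omega)$ is minimized precisely when $e_i$ is the lightest edge of $C_i$, simultaneously for every $i$ and every $p\in\N\cup\{\infty\}$. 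Since the minimum weight spanning tree of $(G^\ast,\omega^\ast)$ consists exactly of the lightest edge of each parallel bundle, it produces precisely this optimal selection of $e_i$'s, and $\widetilde{T}$ is therefore a minimum $L^p$-congestion spanning tree of $(G,\omega)$ for every $p\in\N\cup\{\infty\}$.
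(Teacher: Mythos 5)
Your proof is correct and follows essentially the same route as the paper's: the dual of a cactus is a tree of parallel bundles (one per cycle, plus self-loops for bridges), Proposition~\ref{p:edge planar congestion} gives $\cC(G,\widetilde{T},e)=\omega(e)+\omega(e_i)$ where $e_i$ is the edge of the same cycle omitted from $\widetilde{T}$, and the minimum-weight dual spanning tree omits the lightest edge of each cycle. Your only substantive addition is the per-cycle pairing identity, which rigorously justifies the coordinate-wise comparison that the paper merely asserts (that the minimal weight dual tree ``minimizes each coordinate''), a point that does need an argument since the congestion vectors of different spanning trees are indexed by different edge sets.
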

\begin{proof}
Every spanning tree of the dual graph of a cactus graph has the same root cell, namely $P$ and must be radial. Every cycle in the dual graph of a cactus graph has length $2$ (or $1$ if $G$ has a loop). By means of Proposition~\ref{p:edge planar congestion}, it follows that, for every spanning tree $\widetilde{T}$ of $G$, we get $\mathcal{C}(G,\widetilde{T},e)=w(e) + w^\ast(\overline{e^\ast})$ where $\overline{e^\ast}$ is the dual edge in the dual tree with the same adjacent vertices as $e^\ast$. Thus, the minimal weight spanning tree in the dual graph (with dual weights) minimizes each coordinate in the vector of edge-congestions and henceforth its dual minimizes the spanning tree $L^p$-congestion.
\end{proof}

Remark~\ref{r:radial_dual} and Propositions~\ref{p:edge planar congestion} and \ref{p:minimal-cactus} suggest that good candidates to minimal congestion spanning trees in planar graphs are those whose dual trees minimizes their weights and maximizes the degree at the minimum number of vertices (trying to be as radial as possible).

In the absence of weights, a first approach to this question are breadth first search trees or BFS-trees.

\begin{defn}[BFS Tree]
A spanning tree of a connected edge-weighted graph is called of {\em breadth first search} tree if and only if there exists a vertex $v$ such that the distance of any vertex to $v$ in the tree is the same as the distance in the graph. The vertex $v$ will be called the {\em root} of the BFS tree.
\end{defn}

In \cite[Proposition~1]{Ostrovskii2010}, BFS trees rooted at the unbounded cell are used to give an upper estimate to the $L^\infty$-congestion of connected planar graphs. However, we can consider arbitrary BFS trees in order to enhance that upper bound. Moreover, the example given in \cite[Figure~2]{Ostrovskii2010} is, in fact, the dual of a BFS tree rooted at one of the bounded cells. This is equivalent to change the planar embedding of the graph.

Usually, the STC on planar unweighted graphs is attained by a spanning tree whose dual is a BFS tree (this happens in lots of the examples studied in this work). We do not think that this is the general picture, but we still have no counterexample to this conjecture.

Thus, for unweighted graphs, a reasonably good upper bound for the $L^p$ congestion is given by the minimal $L^p$-congestion among the family of spanning trees which are dual to BFS trees in $G^\ast$. However, the set of BFS trees in the dual graph is, in general, very high to be computationally resolved.

\subsection{Second Algorithm: Branch optimization in BFS trees}

\begin{defn}
Let $T$ be a tree and $v\in V_T$ a distinguished vertex (called {\em root}). The {\em level} of a vertex $w\in V_T$ is its distance to the root. The $k$-level of $T$ is the set of vertices whose level is $k$, denote this set by $\Lvl_k(T,v)$.
\end{defn}

\begin{defn}[Relative $L^p$ congestion]
Let $G$ be a simple planar graph, let $T$ be a tree of the dual graph $G^\ast$ ($T$ does not need to be spanning).
Let $E^T\subset E_G$ be the set of edges $e$ such that $e^\ast$ is not an edge in $T$ and $e^\ast$ is adjacent to vertices of $T$. Let $\widehat{T}$ be any spanning tree in $G$ whose edge set includes $E^T$. Observe that the $L^p$-congestion $\mathcal{C}(G,\widehat{T},e)$ of every $e\in E^T$ can be computed by the formula given in Proposition~\ref{p:edge planar congestion} and does not depend on the choice of $\widehat{T}$.

The Relative $L^p$ congestion of $T$, denoted by  $\mathcal{RC}_p(G,T)$ is defined as 
$$
\max_{e\in E^T}(\mathcal{C}(G,\widehat{T},e))\,,\ \text{if}\ p=\infty
$$

$$
\dfrac{\sum_{e\in E^T}\mathcal{C}(G,\widehat{T},e)^p}{| E^T|}\,,\ \text{if}\ 1\leq p<\infty
$$

\end{defn}

Of course, when $T$ is a spanning tree and $1\leq p <\infty$, we get $\widehat{T}=\widetilde{T}$ and $$\left((|V_G| -1)\cdot\mathcal{RC}_p(G,T)\right)^{1/p} = \mathcal{C}_p(G,\widetilde{T})\;.$$

\begin{rem}\label{r:fast-planar-congestion}
The edge-congestions of edges in $E^T$ for a tree in $G^\ast$ can be computed very fast. By means of Proposition~\ref{p:edge planar congestion}, each edge congestion depends on the computation of a path in the tree $T$, this can be done in at most $O(|V_T|)$. Therefore, the computation of the relative congestion of $T$ can be done in, at most, $O(|E^T|\cdot|V_T|)$ time.
\end{rem}

\begin{defn}
Let $T$ be a rooted spanning tree of a graph. Let $T_k$ be the subtree of $T$ generated by the edges that join vertices in levels $\leq k$.

Let $T$ and $T'$ be BFS spanning trees with the same root. It is said that $T'$ is a {\em switch} at level $k$ from $T$ if $T'_k$ and $T_k$ are the same up to remove just one edge (see Figure~\ref{f:switch}).
\end{defn}

\begin{figure}[h]
\centering
\includegraphics[scale=0.2]{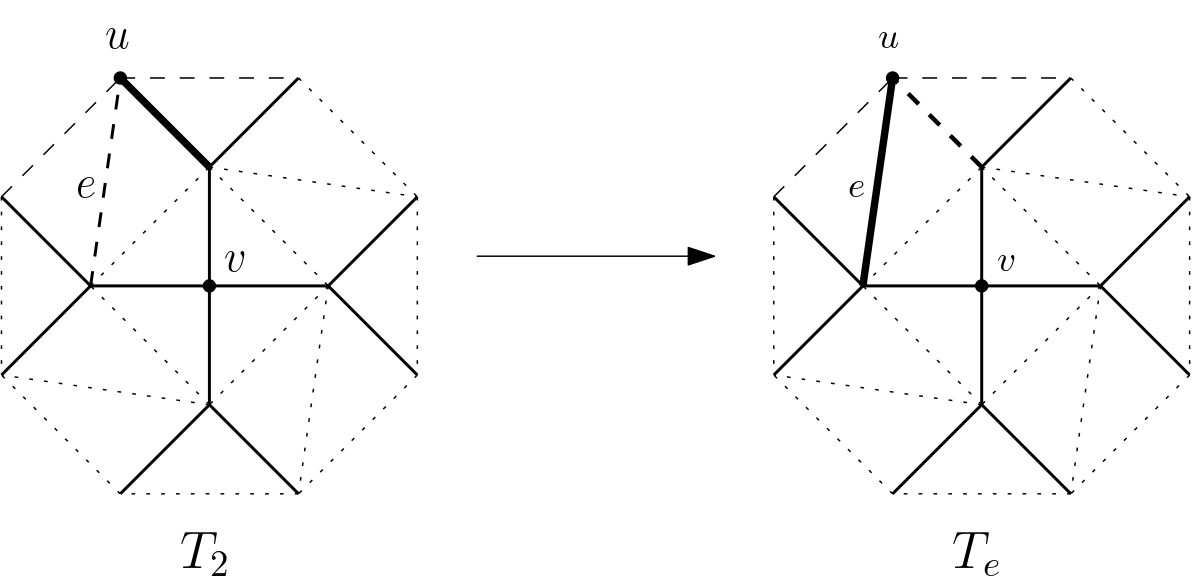}
\caption{Example of a switch at level $2$ in a BFS tree. Bold lines represent the tree at level $2$, dotted and dashed edges are those in $E^{T_2}$, and $T_e$ is obtained by switching the unique edge of $T_2$ adjacent to $u$ with $e$. If the tree lies in the dual graph, then dotted and dashed edges are dual to edges where the congestion can be computed. From $T_2$ to $T_e$ only the edge-congestion of the (duals of) dashed segments, which are those incident in $u$, may change.}
\label{f:switch}
\end{figure}

\begin{defn}[Locally Optimal Congestion BFS tree]
Let $G$ be a simple planar graph, let $T$ be a BFS spanning tree of the dual graph $G^\ast$ and let $v\in V^\ast$ be the root cell. It is said that $T$ has {\em Locally Optimal $L^p$-Congestion} ($p$-LOC tree) if and only if, for every $k\geq 0$, 
$$\mathcal{RC}_p(G,T_{k})\leq \mathcal{RC}_p(G,T'_{k}) $$ for every tree $T'$ which is a switch at level $k$ from $T$.
\end{defn}

Our second algorithm produces $p$-LOC-BFS spanning trees for each possible root cell in the dual graph in polynomial time. The upper bound in the $L^p$ congestion is chosen among the minimal $L^p$-congestions of the respective dual trees. For the implementation of this algorithm it is assumed that edges are represented as line segments in $\R^2$. Observe also that here $n = |V_{G^\ast}|$, $m=|E_{G^\ast}| = |E_{G}|$ and $\Delta$ is the maximum degree of $G^\ast$.

The algorithm follows the next workflow (see Algorithm~\ref{alg:LOC-BFS}):

\begin{itemize}
\item[Step $1$:] Choose $v$ as the root cell, define $T_1$ as the family of dual edges incident to the chosen root. Let $S_1$ be the vertices in $T_1$. Let $E_1$ be the edges in $E_{G^\ast}\setminus E_{T_1}$ adjacent to vertices in $S_1$. Compute the relative congestion at level $1$: $\mathcal{RC}_p(G,\tilde{T}_1,1)$. 

\item[Step $2$:] Inductively, assume $T_k$, $S_k$ and $E_k$ were already defined and the relative congestions were computed up to level $k$. Define $B_{k+1}$ as those cells in $V_{G^\ast}\setminus S_k$ that can be joined with $S_k$ by an edge, set $S_{k+1}=S_k\cup B_{k+1}$. Let $OE_{k+1}$ be the set of edges adjacent to vertices in $S_{k+1}$.

\item[Step $3$:] Let $T_{k+1}$ be an arbitrary BFS extension of $T_k$ to the vertex set $S_{k+1}$ and compute its relative $L^p$-congestion.

\item[Step $4$:] For each $u\in B_{k+1}$,  let $BE^u_k$ be the set of edges in $E_k$ adjacent to $u$. If $|BE^u_k|>1$ for some $u\in B_{k+1}$, then for each $e\in BE^u_k$ compute the relative congestion of the tree $T_e$ obtained by switching the edge $e$ with the edge in $T_{k+1}$ adjacent to $u$. Observe that the unique dual edges where the congestion can vary are those in $BE^u_k\setminus\{e\}$ and those in $OE_{k+1}$ incident in $u$. If $ \mathcal{RC}_p(G,\tilde{T}_{e},k+1)<\mathcal{RC}_p(G,\tilde{T}_{k+1},k+1)$, then replace $T_{k+1}$ with $T_e$, and repeat the process until $T_{k+1}$ is a locally optimal tree for switches at any level $j\leq k+1$.

\item[Step $5$:] Repeat the Step $4$ until $T_{k+1}$ is a spanning tree.

\end{itemize}

\begin{algorithm}

\caption{$\rm{LOC\_BFS}(G,V,p)$, $G$ planar graph, $V\subset\R^2$ vertex list, $p\in\N\cup\{\infty\}$}\label{alg:LOC-BFS}
\algsetup{linenosize=\small}
  \scriptsize
$T$: array, $\rm{len}(T)= 0$\;
$c$: array, $\rm{len}(c)= 0$\Comment*[r]{array of edge congestions}
$\tilde{G}$, $\tilde{V} \gets$ Dual\_Graph($G$,$V$) \Comment*[r]{$ \tilde{V}$: array, vertices(cells) of $\tilde{G}$}
\For{$v\in \tilde{V}$}
{
	$S\gets [v]$ \;
	$B\gets [v]$\;
	\While{$\rm{len}(T) < \rm{len}(V) -1$}
	{		
		P $\gets$ Edges($\tilde{G}$,$S$, $V\setminus S$)\;
		New\_$S \gets$ $S\cup\rm{Vertices}(P)$\;
		New\_$B \gets \text{New}\_S \setminus S$\;
		External\_Dual = Edges($\rm{New\_}B$,$\rm{New\_}B$)\;
		Edges : double array, \rm{len}(Switch\_Edges) = 	\rm{len}(New\_$B$)\;	
		\For{$p\in \rm{New\_}B$}
		{
		Edges[$p$] $\gets$ Edges(p,B)\;
		$T$\text{.append}(Edges[$p$][$0$])
		}
		$B \gets \rm{New\_B}$\;
		\For{$e\in$ \rm{External\_Dual}}
		{
		$\rm{c.append}(\cC(G,\rm{Dual}(T),\tilde{e})) $
		}
		\For{$p\in \rm{New\_}B$}
		{
		\If{$\rm{len}(Edges[p])>1$} 
			{	
			
			\For{$i\in \{1,\dots,\rm{len}(Edges[p])-1\}$}
				{
					$c\rm{.append}(\cC(G,\rm{Dual}(T),\widetilde{\text{Edges}[p][i]}))$ \Comment*[r]{$p$ is a switching point}
				}
			}
		}
		local\_minimum $\gets$ False\;
		\While{\rm{local\_minimum} $=$ \rm{False}}
		{
		Aux\_$T \gets T$\;
		Aux\_$c \gets c$\;
		local\_control = False\;
		\For{$p\in\rm{New\_}B$}
		{
			edge $= T	\cap \rm{Edges}[p]$	\;
			\If{$\rm{len}(\rm{Edges}[p]) >1$}
			{
			\For{$e\in \rm{Edges}[p]$, $e\neq\rm{edge}$}
			{
			$\rm{Aux\_}T[\rm{edge}] \gets e$\;
			$\rm{Aux\_}c[e] \gets \cC(G,\rm{Dual}(\rm{Aux\_}T),\widetilde{\rm{edge}})$\;
			\For{$a\in \rm{Edges}[p]$}
			{
				\If{$a\neq e$}
				{
				$\rm{Aux_c}[a]\gets \cC(G,\rm{Dual}(T),\widetilde{a})$\; 
				} 
			}
			\For{$a\in \rm{External\_Edges}$}
			{
				\If{$p\in a$}
				{
				$\rm{Aux\_}c[a] \gets \cC(G,\rm{Dual}(\rm{Aux\_}T),\widetilde{a})$ 
				}
			}
			\If{$\|\rm{Aux\_}c\|_p <\|c\|_p$}
			{
			local\_control = True\;
			$T \gets \rm{Aux\_}T$\;
			$c \gets \rm{Aux\_}c$\;
			\rm{\textbf{break for: line 35}}\;
			\rm{\textbf{break for: line 32}}\; 
			}
			}
			
			}		
		}
		\If{$\rm{local\_control} = \rm{False}$}
		{
		local\_minimum = True\;
		}
		}

	}
}

\Return{ $\|c\|_p$, $\rm{Dual}(T)$}
\end{algorithm}

This algorithm will be denoted by ${\rm LOC\textendash BFS}_p$ and
it is clear that it runs in polynomial time. The computation of the $S_k$'s, $B_k$'s, $BE^u_{k}$'s and $OE_k$'s run also in linear time with the number of edges and can be done at the very beginning of the algorithm. Since this data is previously known, the vector of edge congestions of a tree $T_k$ can be obtained in $O(\Delta|B_k|)$ time since, at each step, the paths computed for $T_{k-1}$ can be used again to make a faster computation of the paths in $T_k$ and there are at most $\Delta\cdot |B_{k}|$ edges where the congestion must be computed. Observe that we  know, a priori, the extreme points of relevant paths to be considered for edge-congestion computation. 

The fifth step also runs in polynomial time for any initial random extension $T_{k+1}$ (and assuming the weight function is fixed). A switch operation can be performed in $O(\Delta)$ time. The reason for this is the fact that we need to compute a fairly small number of edge congestions at each switch operation: only congestions of edges which are dual to edges in $(BE^u_k\setminus\{e\})\cap E^{T_e}$ and edges adjacent to $u$ in $OE_{k+1}$ can change in a switch operation, the number of such edges is clearly bounded by the maximum vertex degree $\Delta$.

In the case $p=\infty$, the relative congestion of $T_{k}$ is, by construction, clearly bounded by $(2k+1)M_\omega$ where $M_\omega = \max_e\{\omega(e)\}$ since $2k$ is the diameter of $T_k$. If $\omega$ is integer valuated, at each succeeded switch operation, the relative congestion must descend by at least $1$. Therefore, a locally optimal tree (up to level $k$) is reached in, at most, $O(\Delta^2 \cdot |B_k|\cdot k\cdot M_\omega)$ operations. Since the sets $B_{k}$'s form a partition of $|V|$ and $k$ cannot be larger than twice the diameter of $G^\ast$. It follows that the global complexity for finding the final $\infty$-LOC-BFS tree is, therefore, at most $O(\Delta^2\cdot n\cdot d\cdot M_\omega)$, where $d$ is the diameter of $G^\ast$. The global complexity of the algorithm (running at every possible rooted vertex) is, at most, $O(M_\omega\Delta^2\cdot n^2d)$. The real valuated weight case and $p\in\N$ can be handled in a similar way as it was done in the first sCD algorithm. The complexity can, in general, increase depending in the weight function and the value of $p$.

\subsection{Third Algorithm: Recursive Optimal Congestion  in the dual graph }

For weighted spanning trees, BFS trees are not longer good options. The dual tree of a minimal $L^p$-congestion spanning tree of $G$ tends to avoid dual edges with large weight; otherwise, this large weight will compute at least twice for edge congestion computations (this is of great relevance when $p<\infty$). Therefore minimal $L^p$-congestion spanning trees on weighted graphs have a preference to include the heaviest edges.

For this third algorithm, we adapt some of the strategies given in the ${\rm LOC\textendash BFS}_p$ algorithm without using BFS trees. We shall compare the performance of both algorithms in the last section. As before $n = |V_{G^\ast}|$ and $m=|E_{G^\ast}| = |E_{G}|$.

The algorithm, denoted by ${\rm ROC}_p$, is described by the following recursive process (see also Algorithm~\ref{alg:ROC}):

\begin{itemize}
\item[Step $1$:] Choose a root cell $v_0\in V_{G^\ast}$. Set $S_0=\{v_0\}$ and let $T_0$ be the subgraph consisting of a single vertex $v_0$.

\item[Step $2$:] Suppose that $S_k$ and $T_k$ were defined. Define $B_{k+1}$ as those vertices in $V\setminus S_k$ adjacent to some vertex in $S_k$. For each $u\in B_{k+1}$, let $SE_u$ be the set   of edges of $G^\ast$ from $u$ to $S_k$.

\item[Step $3$:] If $|SE_u|\leq 1$ for all $u\in B_{k+1}$, then choose an edge $e$ from $S_k$ to $B_{k+1}$ such that $\omega(e)$ is minimum. If there are several such minimal edges, then choose the one that is incident with the vertex $w$ in $S_k$ where $\deg(w,\omega)/\deg(w)$ is minimum. If there are still more than one vertex to choose then, choose the vertex with higher degree among them. If there are still more than one choice, then take the vertex in $S_k$ with lowest index. This criterion for choosing the minimal edge is encoded in the subprogram Minimal\_Edge. Define $S_{k+1}$ as the union of $S_k$ with the vertex in $B_{k+1}$ adjacent to $e$.

\item [Step $4$:] If $|SE_u|> 1$ for some $u\in S_{k+1}$, then for each such $u$ and for each $e\in SE_u$, let us consider the trees $T_e = T_k + e$. Set $T_{k+1}$ as the tree $T_e$ with minimum relative $L^p$-congestion. If there are several choices with the same relative congestion, then follow the same criterion as in the previous step in order to choose $e$ univocally. Set $S_{k+1}$ as the vertex set of $T_{k+1}$.

\item[Step $5$] Repeat the previous process until a spanning tree is produced.
\end{itemize}

\begin{algorithm}
\caption{$\rm{ROC}(G,V,p)$, $G$ planar graph, $V\subset\R^2$ vertex list, $p\in\N\cup\{\infty\}$}\label{alg:ROC}
\algsetup{linenosize=\small}
  \scriptsize
$T$: array, $\rm{len}(T)= 0$\;
$c$: array, $\rm{len}(c)= 0$\Comment*[r]{array of edge congestions}
$\tilde{G}$, $\tilde{V} \gets$ Dual\_Graph($G$,$V$) \Comment*[r]{$ \tilde{V}$: array, vertices(cells) of $\tilde{G}$}
\For{$q\in \tilde{G}$}
{
$S$.append($q$)\;
\While{$\rm{len}(T) <\rm{len}(\tilde{V}) -1$}
{
Boundary\_Edges = Edges($S$,$V\setminus S$)\;
$B$ = Vertices(Boundary\_Edges)$\setminus S$\;
$W$: double array, \rm{len}(W)= \rm{len(B)}\;
\For{$v\in B$}
{
$W[v] \gets$ Edges($\tilde{G}$, $S$, $v$)\;
}
Rel\_Congestion = $\infty$\;
Rel\_$c \gets c$\;
Rel\_$T \gets T$\; 
	\eIf{$\rm{len}(W[v]) = 1\,,\ \forall v\in B$}
  	{
  	edge $\gets$ Minimal\_Edge($\tilde{G}$,Boundary\_Edges, $S$)\; 	
  	$T$.append(edge)\;
  	$S$.append(Target(edge))\;
  	}
  	{
  	Edges $\gets \emptyset$\;
  	Rel\_List $\gets \emptyset$\;
  	\For{$v\in B$}
  		{	
  		\For{$e\in W[v]$}
  		{
  		\rm{Aux\_}$T \gets T\cup\{e\}$)\;
   		\rm{Auc\_}$c \gets c$\;
		\For{$a\in W[v]\,,\ a\neq e$}
			{
			\rm{Aux\_}$c$\rm{.append}($\cC(G,\rm{Dual}(\rm{Aux\_}T),a)$)
			}
		}
		\eIf{$p<\infty$}
		{
		\rm{Aux\_Rel\_Congestion} = $\dfrac{\|\rm{Aux\_}c\|_p^p}{\rm{len}(c)+\rm{len}(W[v])-1}$\;
		}
		{
		\rm{Aux\_Rel\_Congestion} = $\|\rm{Aux\_}c\|_\infty$
		}
		\If{$\rm{Aux\_Rel\_Congestion} < Rel\_Congestion$}
		{
		Edges $\gets \{e\}$\;
		Rel\_List $\gets \{\rm{Auc\_}c\}$\;
		\rm{Rel\_Congestion} $\gets$ \rm{Aux\_Rel\_Congestion}\;
		Rel\_$c \gets \rm{Aux\_}c$\;
		Rel\_$T \gets \rm{Aux\_}T$\;  		
		}
		\ElseIf{$\rm{Aux\_Rel\_Congestion} = Rel\_Congestion$}
		{
		Edges.append(e)\;
		Rel\_List.append(\rm{Auc\_}c)\;
		}
 		}
 	edge = Minimal\_Edge($\tilde{G}$,Edges,$S$)\;
  	$T \gets T\cup\{\rm{edge}\}$\;
  	$c$ = Rel\_List[edge]\;
  		}
  	}
}
\Return $\|c\|_p$, $\rm{Dual}(T)$
\end{algorithm}

The ${\rm ROC}_p$ algorithm also runs in polynomial time. Computing $S_k$ and the sets $SE_u$ for each $u\in B_{k+1}$  is done, inductively, in the second step in, at most, $O(\Delta)$ time (we only add a single vertex to the tree at each step). Step $3$ is for free as can be made as a side computation in the fourth step. Step $4$ depends on the computation of at most $\Delta\cdot |B_{k+1}|$ relative $L^p$-congestions of trees with exactly $k+1$ vertices. For each one of these trees we only need to compute $|BE_u|\leq\Delta$ edge congestions for some $u\in|B_{k+1}|$, in order to obtain the corresponding relative congestions. All these edge congestions can be computed with complexity $O(\Delta)$ since we can reuse the paths already obtained for the computation of the tree $T_{k}$. Hence, since $|B_{k+1}|\leq k\Delta$, the complexity of this step is, at most, $O( \Delta^2\cdot k)$.

Since at each application of steps $2$, $3$ and $4$, the tree adds a vertex, the number of instances of the algorithm is $n-1$. Observe that the Minimal\_Edge choice does not add complexity since we only need to get information of the degree, weighted degree and index of an already known list of vertices. Thus, the complexity of the algorithm at a chosen root vertex is at most $O(\Delta^2\cdot n^2)$ and the total complexity for applying the full workflow to every possible rooted vertex is at most $O(\Delta^2\cdot n^3)$.

\begin{rem}
Observe that the execution time of ${\rm ROC}_p$ is polynomial for every weight function and its complexity does not depend on $\omega$ or $p\in\N\cup\infty$. This is an important advantage respect to the ${\rm sCD}_p$ and ${\rm LOC\textendash BFS}_p$ algorithms.
\end{rem}

\section{Computer experiments}

In this section the performance of the algorithms is tested. Since the Congestion Descent algorithm is of general purpose whereas LOC-BFS and ROC algorithms only work for planar graphs. We consider a first subsection for nonplanar graphs, where only the strict Congestion Descent algorithm is tested, and a second subsection, where sCD is compared with LOC-BFS and ROC. In a third subsection we make similar estimations in some weighted graphs.

Almost all the considered families are unweighted graphs because of the lack of explicit computations for weighted ones (and, therefore, it is not clear whether the given results are good estimatives). We only consider in this work $p=\infty$ and $p=1$ versions of these algorithms. The notation $a(b)$ will make reference to the situation where the $L^\infty$-congestions of the spanning trees obtained with $\rm{sCD}_\infty$ and $\rm{sCD}_1$ are $a$ and $b$, respectively, with $b<a$; this notation is also used with $\rm{LOCBFS}_p$ and $\rm{ROC}_p$ algorithms but in this case, we also tested $p\in\{1,10,20,30,40,50\}$.

$L^\infty$-congestions are highlighted in bold if they match its theoretical value or it is in its theoretical estimated range. We also include, for each test graph, a measure of the {\em empirical degree of complexity} respect to the number of edges. This is defined as $\log(T(G))/\log(m)$, where $T(G)$ is the execution time of the algorithm in a graph $G$ with $m$ edges (measured in seconds). If the complexity is $\Theta(m^d)$, then this empirical degree converges to $d$ as $m$ goes to $\infty$. The empirical degree depends on the algorithm, the test graph, but also on the available computational power. When $m$ is not large enough, it should be understood as a local measure of complexity and it is used to compare the efficiency of different algorithms in the same graph. The (mean) empirical degree is placed as a subscript of the estimated congestions rounded to the first decimal digit.

The $\rm{sCD}$ algorithm depends on the choice of an initial spanning tree, this choice will be random. Moreover, some further steps depend on random choices. This is the reason because we decided to make $10$ executions of the program for each graph (with the exception of hypercube graphs). We give the minimum value attained by the sCD algorithms on these ten executions.

\subsection{Nonplanar graphs}
In this section, the sCD algorithms are tested on several complete, bipartite, multipartite, hypercube and random unweighted graphs. The results are summarized in the following Table:

\begin{table}[h]

\caption{Tests on some nonplanar graphs}
\centering
\begin{tabular}{|c|c|c|c|c|}
\hline
Graph	&	$\rm{sCD}_\infty$  & $\rm{sCD}_1$ & $\rm{sCD}_{10,\infty}$	\\ \hline
$K_{80}$	&	$\bm{79}_{1}$ & $\bm{79^2}_{1.0}$	& $\bm{79}_{1.1}$	\\ \hline
$K_{25,25}$	&	$\bm{48}_{0.9}$ & $\bm{1704}_{1.0}$ &	$\bm{48}_{0.9}$ \\ \hline
$K_{40,10}$	& $\bm{48}_{0.8}$	 & $\bm{1102}_{1.1}$ &	$\bm{48}_{0.9}$	\\ \hline
$K_{25,13,12}$	& $61_{0.9}$	 & $\bm{1920}_{1.0}$ &	$\bm{60}_{0.9}$	\\ \hline
$K_{25,12,12,1}$	& $61_{0.9}(\bm{38})$ & $\bm{1537}_{1.0}$ &	$\bm{38}_{1.1}$	\\ \hline
$K_{15,15,15,15,15}$	& $\bm{118}_{1.1}$ &  $\bm{5252}_{1.1}$ &	$\bm{118}_{1.1}$ 	\\ \hline
$H_7$	& $64_{1.1}$ &  $1828_{0.9}$ &	 $\bm{60}_{1.1}$	\\ \hline
$H_8$	& $128_{1.4}$ &  $4660_{1.0}$ &	$\bm{112}_{1.3}$	\\ \hline
$(\Z_{14})^2$	& $32_{1.3}$  &  $2010_{1.2}(1750)$ &	$30_{1.1}$	\\ \hline
$\Z_{20}\times\Z_{10}$	& $26_{1.2}$  &  $1622_{1.1}$ &	$22_{1.1}$	\\ \hline
$C_{6,6,6}$ & $\bm{36}_{1.4}$ &  $1860_{1.0}$ &	 ${6^2}_{1.1}$	\\ \hline
$C_{7,7,7}$ & $56_{1.1}$ &  $3326_{1.1}$ &	 ${7^2}_{1.3}$	\\ \hline
$C_{8,8,8}$ & $81_{1.2}$ &  $5432_{1.1}$ & ${8^2}_{1.4}$	\\ \hline
$\mathcal{G}(50,2\log(50)/50)$ & $[24,47]_{0.9}$  &  $[527,599]_{1.0}$ & $[23,38]_{1.0}$	\\ \hline
\end{tabular}

\end{table}

\begin{exmp}[Complete (multipartite) graphs]
On complete graphs both sCD algorithms reach the optimal tree with in all the tested examples. We conjecture that this will be always the case, this is equivalent to say that for any spanning tree there exists a congestion descent path to an optimal one.

On bipartite and multipartite graphs this is not always the case. Moreover, multipartite graphs of the form $K_{n_1,\dots,n_{k-1},1}$ are difficult to handle with $\rm{sCD}_\infty$ (since there is just one optimal tree), it is interesting to note that, in this case, the $\rm{sCD}_1$ algorithm provides better trees for the $L^\infty$-congestion (including the optimal one).
\end{exmp}

\begin{exmp}[Hypercube graphs]
This is one of the most interesting examples in this paper. In \cite{Hruska2008}, S.W. Hruska conjectured that $\cC_\infty(H_d)=2^{d-1}$, it was shown by Hiufai Law \cite{Hiufai2009} that $\cC_\infty(H_d) < 2^{d-1}$ for $d\geq 7$. The general estimations found in Table~\ref{tab:STC} are not very good for small values of $d$. Professor Hiufai Law kindly provided us (personal communication) a spanning tree with congestion $60$ for $H_7$ and it is known that $59$ is a lower bound. The $\rm{sCD}_\infty$ algorithm was not able to find congestions lower than $64$. However, the deeper $\rm{sCD}_{10,\infty}$ is able to reach spanning tree congestions in the range $[60,63]$ (with a frecuency of $1/50$ in the given implementation); this supports the conjecture that $\cC_\infty(H_7)=60$. We also provide the best spanning tree congestions obtained by $\rm{sCD}_{10,\infty}$ for $d=8,9,10$: $$96\leq \cC_\infty(H_8)\leq 112,\ 181\leq \cC_\infty(H_9)\leq 224,\ 341\leq \cC_\infty(H_{10})\leq 432\;.$$ In these dimensions the algorithm consistently gets spanning tree congestions\footnote{The algorithm was executed just once in $H_{10}$.} lower than $2^{d-1}$, and as far as we know, these results improve the already known best upper bounds given in \cite{Hiufai2009} for these dimensions. 

Respect to the $L^1$-congestions, we find a pattern with the help of \cite{OEIS} that, as far as we check gives the value of the $L^1$-congestion computed by $\rm{sCD}_1$. Based on this data, we conjecture that $\cC_1(H_d)=d\cdot (d-1)\cdot 2^{d-2}$ and it is obtained in the symmetric spanning tree obtained by doubling in an inductive way the unique spanning tree of $H_1$.  Observe also that these trees would be also solutions to the LSST problem for $H_n$, we do not know if this was observed before.

\end{exmp}

\begin{exmp}[Random graphs]
The $L^\infty$ congestion of random graphs $\mathcal{G}(n,p)$ were studied in \cite{Chandran-Cheung-Isaac2018} and \cite{Ostrovskii2011}. It is shown that, for $p>64\log(n)/n$, $\cC_\infty(\mathcal{G}(n,p))$ has linear growth w.h.p.. Our algorithm confirms that result and suggests that $\cC_\infty(\mathcal{G}(n,p))\leq n$ w.h.p. for $p=2\log(n)/n$.
\begin{figure}[h]
    \centering
\includegraphics[width = 10cm, height = 8cm]{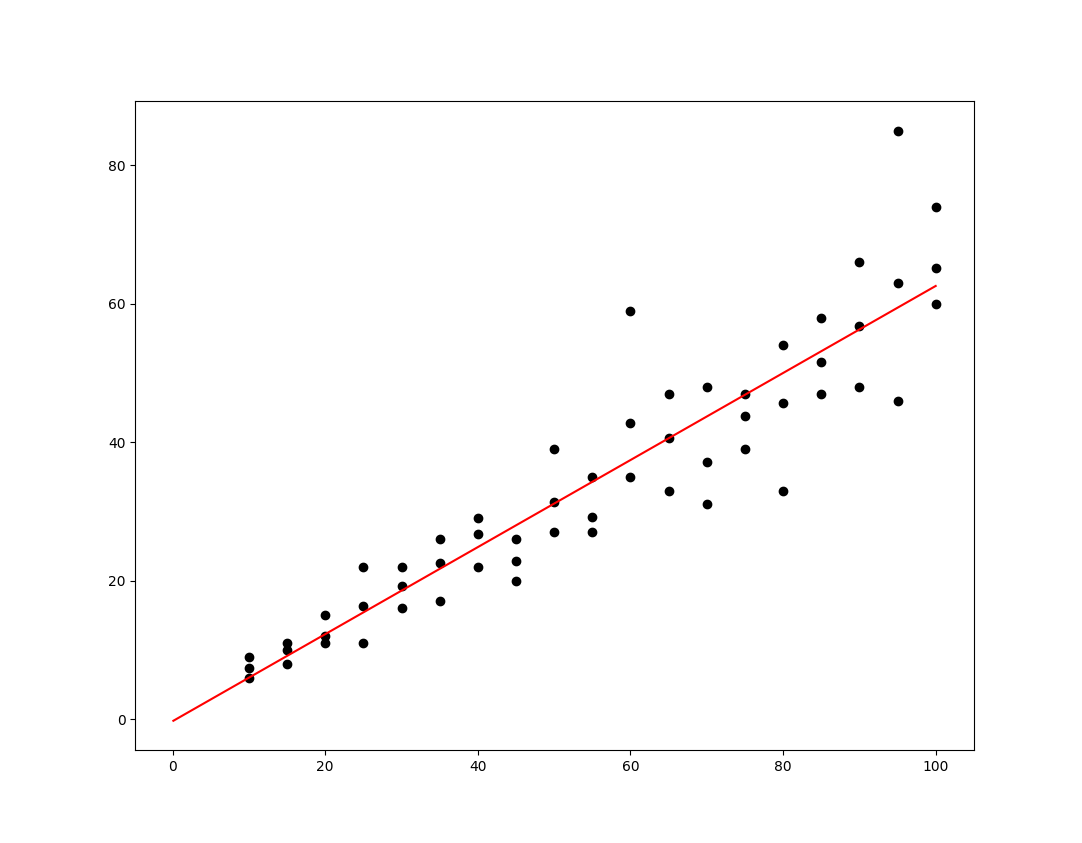}
    \caption{Some values of the estimated values of $\cC_\infty(G(n,p))$ for the $\rm{sCD}_{\infty}$ algorithm, for $n\in\{10,15,20,\dots,100\}$ and $p=2\log(n)/n$. Observe that the $L^\infty$-congestion has a linear trend as expected.}
    \label{fig:random_graphs}
\end{figure}

\end{exmp}

\subsection{Planar graphs}

In addition to the families of planar graphs introduced in Table~\ref{tab:STC}, we also include a family of random planar graphs called {\em planar uniform random graphs} and denoted by $PU_n$, where $n$ is the number of vertices.

The definition of a $PU_n$ graph is inductive:

\begin{itemize}
\item Suppose that a $PU_{k-1}$ graph has been defined for some $1\leq k \leq n$, assume that every vertex lie in $[0,1]\times[0,1]$ and the edges can be embedded in $\R^2$ as segment lines joining their adjacent vertices.

\item Choose a new point in $[0,1]\times[0,1]$ respect to the uniform distribution.

\item Connect this point to every vertex in $PU_{k-1}$ by an edge if and only if the correponding line segment does not meet any other edge in $PU_{k-1}$. This defines a graph $PU_k$.

\item Repeat inductively until $k=n$
\end{itemize}

The performance of our algorithms in the given test graphs is summarized in Table~\ref{table:test_planar_graphs}

\begin{table}[h]
\caption{Tests on some planar graphs}
\label{table:test_planar_graphs}
\centering
\begin{tabular}{|c|c|c|c|c|c|c|}
\hline
Graph	&	$\rm{sCD}_\infty$ &  $\rm{LOCBFS}_\infty$ & $\rm{ROC}_\infty$ & $\rm{sCD}_1$ & $\rm{LOCBFS}_1$ & $\rm{ROC}_1$ 	\\ \hline
$T_{20}$	& $\bm{26}_{1.1}$	& $\bm{26}_{0.6}$ & $\bm{26}_{0.5}$ & $1816_{1.1}$	& $2070_{0.6}$ & $2060_{0.5}$	\\ \hline
$T_{25}$	& $\bm{32}_{1.2}$	& $\bm{32}_{0.8}$ & $\bm{32}_{0.7}$ & $3126_{1.2}$	& $3768_{0.8}$ & $3716_{0.7}$	\\ \hline
$T_{30}$	& $\bm{40}_{1.1}$	& $\bm{40}_{0.9}$ & $\bm{40}_{0.9}$ & $4788_{1.2}$	& $5194_{1.0}$ & $5990_{0.9}$	\\ \hline
$R_{20,20}$	& $22_{1.2}$ & $\bm{20}_{0.5}$ & $\bm{20}_{0.5}$ & $2736_{1.1}$	& $3094_{0.6}$ & $3224_{0.6}$	\\ \hline
$R_{40,10}$	& $13_{1.0}$	& $\bm{11}_{0.5}$ & $\bm{11}_{0.5}$ & $2360_{1.0}$	& $2564_{0.6}$ & $2574_{0.6}$	\\ \hline
${\rm Hex}_{30}$	& $22_{1.1}(\bm{21})$	& $\bm{21}_{0.9}$ & $25_{0.8}$ & $5971_{1.0}$	& $6571_{0.9}$ & $6987_{0.7}$	\\ \hline
${\rm Hex}_{20,10}$	& $12_{0.9}$ & $\bm{11}_{0.4}$ & $\bm{11}_{1.1}$ & $2271_{0.9}$	& $2507_{0.6}$ & $2369_{0.5}$	\\ \hline
$PU_{100}$	& $[21,29]_{1.1}$	& $[19,28]_{0.2}$ & $[20,28]_{0.1}$ & $[551,593]_{1.2}$ 	& $[707,792]_{0.2}$ & $[746,847]_{0.1}$	\\ \hline
\end{tabular}

\end{table}

\subsection{Weighted graphs}

In this section we compare the algorithms in some weighted test graphs. Here we consider a very specific weight function. Given a vertex-weight function $\lambda:V\to\R$ we can consider the edge weights $\omega_+(uv)=\lambda(u)+\lambda(v)$ and $\omega_-(u,v)=|\lambda(u)-\lambda(v)|$, both edge-weights satisfy the triangle inequality and are interesting in practice. 

We consider these weight functions in some complete multipartite graphs and in euclidean grids. In our case, the vertex-weights are defined in a rather arbitrary way: just labelling the vertices from $0$ to $n-1$. For planar graphs we order vertices in a lexicographical way and for  complete multipartite graphs we order the partition sets by size. 

For complete multipartite graphs weighted by $\omega_{\pm}$, an upper bound is given by the spanning tree congestion on a maximal weighted tree among the minimum congestion spanning trees  in the respective unweighted graphs (these were described in subsection \ref{s:complete_graphs}). This seems to be the right choice for the $L^1$-STC. For complete, bipartite and multipartite with $n_k=1$, the following bounds hold:
\begin{align*}
&\cC_\infty(K^{\omega_-}_n)\leq \dfrac{n(n-1)}{2}\;,\quad \cC_1(K^{\omega_-}_n)\leq \dfrac{n(n-1)}{2} + \dfrac{n(n-1)(n-2)}{3}\;,\\
&\cC_\infty(K^{\omega_+}_n)\leq \dfrac{3(n-1)(n-2)}{2} + 1\;,\ \cC_1(K^{\omega_+}_n)\leq n^3 - 7n^2/2 + 9n/2 - 2\;,
\end{align*}

For $n\geq m\geq 2$:
\begin{align*}
&\cC_\infty(K^{\omega_-}_{n,m})\leq 2mn + \dfrac{m^2 + 2mn - 5m + n^2 - 7n}{2} + 4\;,\\ 
&\cC_1(K^{\omega_-}_{n,m})\leq \dfrac{3(n^2m+nm^2)}{2} - m^2 - n^2 - 4mn + 3n + 3m - 2\;,\\
&\cC_\infty(K^{\omega_+}_{n,m})\leq 3mn + (m^2  - 7m + 3n^2 - 13n)/2 + 6\;,\\ 
&\cC_1(K^{\omega_+}_{n,m})\leq \dfrac{9n^2m + 3nm^2}{2} - m^2 +  - 9mn + 5m - 3n^2 + 7n - 4\;.
\end{align*}

For $n_1\geq \cdots \geq n_k$ with $n_k=1$ and defining $n_0 = 0$, $N=n_1+\cdots+n_{k-1}+1$:

\begin{align*}
&\cC_\infty(K^{\omega_-}_{n_1,\dots,n_{k-1},1})\leq (N-2)(N-n_{k-1}-1) - \dfrac{(N-n_{k-1}-1)(N-n_{k-1}-2)}{2} + 1\\
&\cC_1(K^{\omega_-}_{n_1,\dots,n_{k-1},1})\leq \dfrac{(N-1)\cdot N}{2} + \sum_{i=1}^{k-2}n_i(N-n_0-n_1-\cdots -n_i-1)(N-n_0-n_1-\cdots -n_{i-1}-1)\\
&\cC_\infty(K^{\omega_+}_{n_1,\dots,n_{k-1},1})\leq (N-2)(N-n_{k-1}+1) + \dfrac{(N-n_{k-1}-1)(N-n_{k-1}-2)}{2} + 1\\
&\cC_1(K^{\omega_+}_{n_1,\dots,n_{k-1},1})\leq \dfrac{(N-1)\cdot(3N-4)}{2} + \\
&\sum_{i=1}^{k-2}\left(4(n_0 + n_1 \cdots n_{i-1}) + 3n_i + n_{i+1}+\cdots+n_{k-1}-1)\cdot n_i\cdot(n_{i+1}+\cdots + n_{k-1})\right)
\end{align*}

For the Euclidean grids, the set of minimal spanning trees to study is too large to analyse  its maximal weighted trees in a simple way.

In the $PU_n$ family of random planar graphs, the edge-weights are given by the euclidean distance, this is marked with a superscript $d$. There are no theoretical sharp bounds for these specific test graphs and this subsection is presented just to compare the performance of the algorithms.

In any case, we choose weights that satisfy the triangle inequality on edges, i.e., these are metric graphs. A further study of congestion in metric graphs will be considered in the future.

The performance of the algorithms in these weighted graphs are summarized in Tables~\ref{table:weighted_graphs} and \ref{table:weighted_planar_graphs}.

\begin{table}
\caption{Tests on some weighted nonplanar graphs}
\label{table:weighted_graphs}
\centering
\begin{tabular}{|c|c|c|c|c|c|}
\hline
Graph	&	$\rm{sCD}_\infty$  & $\rm{sCD}_1$ & $\rm{sCD}_{10,\infty}$ & $L^\infty$-Bound & $L^1$-Bound	\\ \hline
$K^{\omega_-}_{25}$	&	$300_{0.9}$ & $4900_{0.9}$ 	& $300_{0.8}$& $300$ & $4900$	\\ \hline
$K^{\omega_+}_{25}$	&	$852_{0.8}(829)$ & $13548_{0.9}$ 	& $852_{0.9}$&$829$ & $13548$	\\ \hline
$K^{\omega_-}_{15,15}$	&	$498_{0.9}$ & $8863_{1.0}$  & $498_{0.9}$&$589$ & $8863$\\ \hline
$K^{\omega_+}_{15,15}$	&	$890_{0.9}$ & $17501_{1.0}$  &  $838_{1.0}$&$981$ & $17501$\\ \hline
$K^{\omega_-}_{20,10}$	&	$429_{0.9}$ & $7788_{1.0}$  & $423_{0.9}$ &$759$ & $7788$\\ \hline
$K^{\omega_+}_{20,10}$	&	$985_{0.9}$ & $18086_{1.0}$  & $981_{0.9}$ &$1091$ & $18086$\\ \hline
$K^{\omega_-}_{15,10,6,1}$	& $496_{0.9}(451)$  & $8896_{1.0}$ &	$451_{0.9}$ & $451$	& $8896$ \\ \hline
$K^{\omega_+}_{15,10,6,1}$	& $1111_{1.0}$  & $21226_{1.0}$ &	$1111_{1.0}$&$1111$	& $21226$ \\ \hline
\end{tabular}
\end{table}

\begin{table}
\caption{Tests on some weighted planar graphs}
\label{table:weighted_planar_graphs}
\centering
\begin{tabular}{|c|c|c|c|c|c|c|}
\hline
Graph	&	$\rm{sCD}_\infty$ &  $\rm{LOCBFS}_\infty$ & $\rm{ROC}_\infty$ & $\rm{sCD}_1$ & $\rm{LOCBFS}_1$ & $\rm{ROC}_1$ 	\\ \hline
$T^{\omega_-}_{15}$	& $109_{1.0}$	& $116_{0.3}$ & $188_{0.2}(138)$ & $5556_{0.9}$	& $6030_{0.4}(6011)$ & $6436_{0.2}(6149)$	\\ \hline
$T^{\omega_+}_{15}$	& $2668_{1.0}$	& $2723_{0.3}$ & $4105_{0.2}(3921)$ & $99636_{0.9}$	& $104551_{0.3}$ & $122311_{0.2}$	\\ \hline
$R^{\omega_-}_{20,10}$	& $30_{0.9}$ & $121_{0.2}$ & $50_{0.2}$ & $4740_{0.9}$	& $7312_{0.3}(6868)$ & $6046_{0.2}$	\\ \hline
$R^{\omega_+}_{20,10}$	& $2386_{1.0}$ & $2728_{0.2}$ & $2708_{0.2}(2386)$ & $200112_{0.9}$	& $225567_{0.3}$ &$206784_{0.2}$ 	\\ \hline
$R^{\omega_-}_{15,15}$	& $29_{1.0}$ & $129_{0.2}$ & $47_{0.3}$ & $4928_{0.9}$	& $10230_{0.4}$ & $6496_{0.3}$	\\ \hline
$R^{\omega_+}_{15,15}$	& $3728_{1.1}$ & $3885_{0.2}(3763)$ & $5591_{0.3}(3477)$ & $286238_{0.9}$	& $307842_{0.4}$ & $294968_{0.3}$	\\ \hline
${\rm Hex}^{\omega_-}_{15}$	& $34_{0.9}$	& $89_{0.3}(59)$ & $88 _{0.3}(87)$ & $4160_{0.9}$	& $5190_{0.5}$ & $8320_{0.3}$	\\ \hline
${\rm Hex}^{\omega_+}_{15}$	& $3835_{1.0}$	& $3835_{0.2}$ & $3919 _{0.2}(3853)$ & $338472_{0.9}$	& $360605_{0.4}$ & $349676_{0.2}$	\\ \hline
$PU^{d}_{100}$	& $[4.9, 8.4]_{1.1}$	& $[6.4, 12.0]_{0.2}$ & $[6.7,11.7]_{0.1}$ & $[109.5, 126.0]_{1.0}$ 	& $[164.1, 205.5]_{0.2}$ & $[196.0, 271.7]_{0.2}$	\\ \hline
\end{tabular}

\end{table}

\subsection{Conclusions}

The family of $\rm{sCD}$ algorithms show a reasonable performance. Even for planar graphs they are, overall, much better than the LOCBFS and ROC algorithms.

LOCBFS and ROC algorithms are much faster than SCD and, for planar graphs, have a reasonable performance, but the $\rm{sCD}$ algorithm gives, overall, better results.

Recall that all the presented algorithms admit potential improvements that could lead to better performance and efficiency. This work is a first try showing that descent algorithms can be used to obtain reasonable bounds for congestion in mid range graphs (in the order of $10^3$ edges and vertices) and has the benefit to be already available to every interested reader (we do not know any other congestion algorithm accessible in a public repository).

A discussion about the empirical complexity of the $\rm{sCD}$ algorithms, that is rounding $1$, must be also done. This means that the time (in seconds) necessary to make an execution is in the order of the number of edges. This is much lower than our given estimation. There are several reasons for this difference. On the one hand, the computation of the edge congestions of a spanning tree $T$ has, typically, effective complexity of $m \diam(T)$. On the other hand, the number of iterations needed to reach a local minimum spanning tree is, in general, much lower than $mn$; the classical congestions of the test graphs is lower than both $n$ and $m$. Finally, in most of the presented test graphs, the algorithm finds edges that improve congestion very fast and compute the edge-congestions in a small cycle (that was roughly bounded by the number of vertices $n$). The latest iterations accumulate almost all the work load. This suggests a lower effective complexity for these test graphs. Hypercubes and cubic grids (where $n$ and $m$ are of similar order) have demonstrated to be the most challenging graphs from the point of view of computation, but still the empirical complexity of the $\rm{sCD}$ algorithm ranges in a subquadratic level. Nevertheless, we expect an increase in the empirical complexity as $m$ goes to infinity, converging to the effective degree of  complexity of the algorithm, for each family of test graphs.

Finally, observe that a general approximation theorem for the given algorithms is missing, i.e., we still do not know what is the asymptotic behavior of the algorithm respect to the  $L^p$-STC of a generic graph. This is left for a future work.


\begin{thebibliography}{99}



\bibitem{Castejon-Ostrovskii2009} {\sc A. Castej\'on, M.I. Ostrovskii}. \textit{Minimum congestion spanning trees for grids and discrete toruses}. Discussiones Mathematicae Graph Theory 29(3), 511--519 (2009).

\bibitem{Chandran-Cheung-Isaac2018} {\sc L.S Chandran, Y. K. Cheung, D. Isaac}. {\it Spanning Tree Congestion and Computation of Generalized Gy\"ori-Lov\'asz Partition}. In: 45th International Colloquium on Automata, Languages, and Programming (ICALP 2018), 32, 1--14, Leibniz International Proceedings in Informatics (LIPIcs).



\bibitem{Elkin-Emek} {\sc Michael Elkin, Yuval Emek, Daniel A Spielman, and Shang-Hua Teng}. {\it Lower-stretch
spanning trees.} SIAM Journal on Computing, 38(2), 608--628 (2008).

\bibitem{Estevez}{\sc E. Est\'evez}. {\it \'Arboles \'Optimos y Desigualdades de Sobolev en Rejillas} PhD Thesis, Universidade de Vigo 23-27 (2015).

\bibitem{Ford-Fulkerson} {\sc Ford, L. R.; Fulkerson, D. R.}  {\it Maximal flow through a network}. Canadian Journal of Mathematics, 8, 399--404 (1956).

\bibitem{Kolman}{\sc P. Kolman}. {\it Approximating Spanning Tree Congestion on Graphs with Polylog Degree}  (2024).
\bibitem{Simonson}{\sc S.Simonson}. {\it A variation on the min cut linear arrangement problem}  Math . Syst.Theory, 20(4):235-252 (1987).

\bibitem{Kozawa-Otachi}{\sc K. Kozama, Y. Otachi}. {\it Spanning tree congestion of rook's graphs} Discussiones Mathematicae
Graph Theory 31, 753--761 (2011).

\bibitem{Kozawa-Otachi2018}{\sc K. Kozama, Y. Otachi}. {\it On spanning tree congestion of Hamming graphs} ArXiv (preprint) https://arxiv.org/pdf/1110.1304, (2018).

\bibitem{Kozawa-Otachi-Yamazaki2009}{\sc K. Kozawa, Y. Otachi, K. Yamazaki}. {\it On spanning tree congestion of graphs} Discrete Mathematics 309,  4215--4224 (2009).

\bibitem{Hiufai2009} {\sc H.-F. Law}. \textit{Spanning tree congestion of the hypercube}. Discrete Mathematics 309, 6644--6648 (2009).

\bibitem{Law-Leung_Ostrovskii2014} {\sc H.-F. Law, S. L. Leung, M. I. Ostrovskii}. {\it Spanning tree congestion of planar graphs.} Involve 7, no. 2, 205--226 (2014).

\bibitem{Law-Ostrovskii} {\sc H.-F. Law- M.I. Ostrovsk}. \textit{Spanning tree congestion: duality and isoperimetry. With an aplication to multipartite graphs}.  Graph Theory Notes of New York, 58 (2010), 18--26, Mathematics  (2009).

\bibitem{Law-Ostrovskii2010}{\sc H.-F. Law- M.I. Ostrovsk}. {\it Spanning tree congestion of some product graphs} Indian Journal of Mathematics 52, (2010).

\bibitem{Hruska2008}{\sc S.W. Hruska}. {\it On tree congestion of graphs}. Discrete Mathematics 308  1801 -- 1809, (2008).

\bibitem{Jordan1869}{\sc C. Jordan}. {\it Sur les assemblages de lignes}. Journal f\"ur die reine und angewandte Mathematik, vol. 1869(70), 185--190 (1869).

\bibitem{OEIS}OEIS Foundation Inc. (2024), The On-Line Encyclopedia of Integer Sequences, Published electronically at https://oeis.org

\bibitem{Okamoto-Otachi-Uehara-Uno2011} {\sc Y. Okamoto, Y. Otachi, R. Uehara, T. Uno}. {\it Hardness Results and an Exact Exponential Algorithm for the Spanning Tree Congestion
Problem}. Journal of Graph Algorithms and Applications 15-6, 727--751 (2011). Discrete Mathematics 308, 1801--1809(2008).

\bibitem{Ostrovskii2004}{\sc M.I. Ostrovskii}. {\it Minimal Congestion Trees}. Discrete Mathematics 285, 219 -- 226 (2004).

\bibitem{Ostrovskii2010}{\sc M.I. Ostrovskii}. {\it Minimal congestion spanning trees on planar graphs}. Discrete Mathematics 1204--1209 (20010).

\bibitem{Ostrovskii2011}{\sc M.I, Ostrovskii}. {\it Minimum congestion spanning trees in bipartite and random graphs}. Acta Mathematica Scentia 31(2), 634--640.

\bibitem{Otachi-Boadlaender-Leeuwen}{\sc Y. Otachi, H. L. Boadlaender, E.J. van Leeuwen} . {\it Complexity results for the spanning tree congestion problem}.  In: Thilikos, D.M. (eds) Graph Theoretic Concepts in Computer Science. WG 2010. Lecture Notes in Computer Science, vol 6410. Springer, Berlin, Heidelberg.



\end{thebibliography}
\end{document}